\documentclass[12pt]{iopart} 
\usepackage{iopams}
\usepackage{setstack}
\usepackage{amstext}
\usepackage{amsgen}
\usepackage{amsfonts}
\usepackage{amsbsy}
\usepackage{amsopn}
\usepackage{amsthm,amssymb,upref,amscd}
\begin{document}

\newtheorem{theorem}{Theorem}
\newtheorem{lemma}[theorem]{Lemma}
\newtheorem{corollary}[theorem]{Corollary}
\newtheorem{definition}[theorem]{Definition}
\newcommand{\binom}[2]{{#1 \choose #2}}

\hyphenation{schwarz-schild hyper-geo-metric}

\title[Static spacetimes with prescribed multipole moments]{Static spacetimes with prescribed multipole moments; a proof of a conjecture by Geroch}
\author{Magnus Herberthson}
\address{Department of  
Mathematics, Link\"oping University,
SE-581 83 Link\"oping, Sweden.
e-mail:  maher@mai.liu.se}
\begin{abstract}
In this paper we give sufficient conditions on a sequence of multipole moments
for a static spacetime to exist with precisely these moments.
The proof is constructive in the sense that a metric
having prescribed multipole moments up to a given order can be calculated. Since these sufficient conditions agree with already known necessary conditions, this completes the proof of a long standing conjecture due to Geroch. 
\end{abstract}

\section{Introduction} \label{intro}
There are various definitions of multipole moments, \cite{quevedo}. For static asymptotically flat spacetimes, the standard definition was given by Geroch \cite{geroch}, and this definition was later extended to the stationary case by Hansen \cite{hansen}. In essence, the recursive definition by Geroch from \cite{geroch}, is a non-trivial modification of the Newtonian definition taking place in a conformally rescaled spacetime, see Section \ref{tmoments}. Since the conformal factor involved is not unique the relativistic definition must, apart from taking into account the curvature terms, ensure invariance under the conformal freedom available.

In \cite{geroch}, Geroch made two conjectures, namely ($i^0$ will be defined shortly) \medskip \\
{\em Conjecture1: } Two static solutions of Einstein's equations having identical multipole moments coincide, at least in some neighborhood of $i^0$.
 \medskip \\
and
 \medskip \\
{\em Conjecture 2:} Given any set of multipole moments, subject to the  appropriate convergence condition, there exists a static spacetime of Einstein's equations having precisely those moments.
 \medskip \\
Conjecture 1 was proved in \cite{beig} and although many properties of the multipoles are known, see e.g. \cite{quevedo}, Conjecture 2 has not yet been proven. There are partial results however.
For instance, it is known \cite{simon} that an arbitrary sequence of multipole moments uniquely defines formal power series of relevant field variables, and that if the series converge, this give a static spacetime having these moments. In \cite{friedrich}, on the other hand, growth conditions are given
on a set referred to as {\em null data}, which ensures the existence of a static space time possessing these null data. The null data are shown to be in a one-to-one correspondence with the Geroch multipole moments (with non-zero monopole), but since this correspondence
is rather implicit, no growth condition on the actual multipole moments are derived. On the "necessary side", the situation is more satisfactory. In \cite{bahe}, it was shown that the multipole moments of any asymptotically flat stationary, and therefore static, spacetime  do not grow "too fast", and precise conditions were given.

The purpose of this paper is to show that the necessary conditions in \cite{bahe} are also
sufficient, i.e., if they are satisfied, there exists a
static asymptotically flat vacuum spacetime with these moments. Therefore, this will prove Geroch's Conjecture 2 above. The proof in Section \ref{majorsection} will include an explicit
recursion of the desired metric, which means that the metric for a static spacetime with prescribed moments up to an arbitrary order can be calculated. We will first prove that the metric cast in a special form is uniquely defined by the moments, at least formally, and then show that the result from \cite{friedrich} can be used to deduce convergence of the series for the metric components. 

One major obstacle has been to explicitly link a certain metric and potential with a given set of multipole moments. This problem goes back to the actual definition of the moments, i.e., the recursion (\ref{orgrec}), where the operation of "taking the totally symmetric and trace-free part" effectively obscures the relation between the metric, the potential and the moments, unless some care is taken. 
Another issue is the coordinate freedom. Without extra restrictions
on the metric it is impossible to uniquely derive the metric components from the moments
since one can always change coordinates and hence the components of the metric.
However, as will be shown in the following sections, these two issues can be addressed simultaneously.

We also remark that the sufficient conditions given do not contain the usual requirement that the mass term (monopole) $m$ is non-zero, although that is a condition which may be required for physical reasons.

\section{Multipole moments of stationary spacetimes} \label{tmoments}

Although in this paper, we  address a conjecture concerning static spacetimes, we have chosen
to formulate the conjecture within the stationary setting. One reason is that we will refer to results from \cite{bahe}, where this setting is used. 
Therefore,  in this section we quote the definition of multipole moments given by Hansen in \cite{hansen}, which is an extension 
to stationary spacetimes of the definition for static spacetimes by Geroch \cite{geroch}.
We thus consider a stationary spacetime $(M, g_{ab})$ with signature $(-,+,+,+)$ and  with a timelike Killing vector field $\xi^a$.
We let $\lambda =-\xi^a\xi_a$ be the norm, and define the twist $\omega$ through 
$\nabla_a\omega=\epsilon_{abcd}\xi^b\nabla^c\xi^d$.
If $V$ is the 3-manifold of trajectories, the metric $g_{ab}$ induces the 
positive definite metric
$$h_{ab}=\lambda g_{ab}+\xi_a\xi_b$$ on $V$.
It is required that $V$ is asymptotically flat, i.e.,  there exists a 3-manifold
$\hat V$ and a conformal factor $\Omega$ satisfying
\begin{itemize}
\item[(i)]{$\hat V = V \cup i^0$, \; where $i^0$ is a single point}
\item[(ii)]{$\hat h_{ab}=\Omega^2 h_{ab}$ is a smooth metric on $\hat V$}
\item[(iii)]{At $i^0$, $\Omega=0, \hat D_a \Omega =0, \hat D_a \hat D_b \Omega = 2 \hat h_{ab}$,}
\end{itemize}
where $\hat D_a$ is the derivative operator associated with $\hat h_{ab}$.
$i^0$ is referred to as spacelike infinity\footnote{$i^0$ is also used in a four-dimensional context.}.
On $M$, and/or $V$, one defines the scalar potential
\begin{equation} \label{hansenpot}
\phi=\phi_M+i\phi_J, \quad \phi_M=\frac{\lambda^2+\omega^2-1}{4\lambda}, \,\phi_J=\frac{\omega}{2\lambda}.
\end{equation}
The multipole moments of $M$ are then defined on $\hat V$ as certain  
derivatives of the scalar
potential $\hat \phi=\phi/\sqrt \Omega$ at $i^0$. More  
explicitly, following \cite{hansen}, let $\hat R_{ab}$ denote
the Ricci tensor of $\hat V$, and let $P=\hat \phi$. Define the  
sequence $P, P_{a_1}, P_{a_1a_2}, \ldots$
of tensors recursively:
\begin{equation} \label{orgrec} P_{a_1 \ldots a_n}=C[
\hat D_{a_1}P_{a_2 \ldots a_n}-
 \frac{(n-1)(2n-3)}{2}\hat R_{a_1 a_2}P_{a_3 \ldots a_n}],
\end{equation}
where $C[\ \cdot \ ]$ stands for taking the totally symmetric and  
trace-free part. The multipole moments
of $M$ are then defined as the tensors $P_{a_1 \ldots a_n}$ at  
$i^0$. 

In \cite{geroch}, a slightly different setup and a different potential is used, but it is known \cite{simon} that the potential used there and  (\ref{hansenpot}) with $\omega=0$
produce the same moments.

\section{Static spacetimes with prescribed multipole moments} \label{majorsection}
In this section, we will formulate and prove the desired theorem. The theorem will be as conjectured
in \cite{bahe}, i.e., in essence that if the sequence of multipole moments is naturally connected to a harmonic function on $\mathbf{R}^3$, there exists a static asymptotically flat vacuum spacetime having precisely those moments. Namely, (cf. Theorem 8 in \cite{bahe}), consider $\mathbf{R}^3$ with Cartesian coordinates 
$\mathbf{r}=(x,y,z)=(x^1, x^2, x^3)$, and let $\alpha=(\alpha_1,\alpha_2,\alpha_3)$ be a multi-index. With the convention that, in terms of components,  
$P^0_\alpha=P^0_{\stackrel{\underbrace{\scriptstyle {11\cdots1}}}{\alpha_1} 
\stackrel{ \underbrace{{\scriptstyle 22\cdots 2}}}{\alpha_2} 
\stackrel{\underbrace{{\scriptstyle 33\cdots 3}}}{\alpha_3} }$, we have the following theorem.
\begin{theorem} \label{thm}
Let $P^0, P^0_{a_1}, P^0_{a_1 a_2}, \ldots$ be a sequence of real valued totally symmetric and trace free tensors on $\mathbf{R}^3$, and let $P^0, P^0_{i_1}, P^0_{i_1 i_2}, \ldots$ be the corresponding components with respect to the Cartesian coordinates $\mathbf{r}=(x,y,z)$. If $u(\mathbf{r})=\sum_{|\alpha|\geq 0}\frac{\mathbf{r}^\alpha}{\alpha!}
P^0_\alpha $ converges in a neighbourhood of  the origin in $\mathbf{R}^3$, there exists a static asymptotically flat vacuum spacetime having the moments $P^0, P^0_{a_1}, P^0_{a_1 a_2}, \ldots$.
\end{theorem}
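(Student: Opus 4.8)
The plan is to reverse-engineer the definition of multipole moments. Given the harmonic-looking data $u(\mathbf{r})=\sum_\alpha \frac{\mathbf{r}^\alpha}{\alpha!}P^0_\alpha$, the first step is to understand what it means for this series to converge: the coefficients $P^0_\alpha$ cannot grow too fast, which is precisely the Beig--Simon-type growth estimate from \cite{bahe}. I would then note that $u$ itself need not be harmonic on $\mathbf{R}^3$ — the trace-free data only fix the \emph{symmetric trace-free} parts of the Taylor coefficients — so the actual object to work with is the harmonic function $f$ on a neighbourhood of the origin whose $\ell$-th spherical-harmonic coefficients are read off from the tensors $P^0_{a_1\cdots a_\ell}$, i.e. $f=\sum_\ell \frac{1}{\ell!}P^0_{a_1\cdots a_\ell}x^{a_1}\cdots x^{a_\ell}$ now genuinely harmonic because the tensors are trace-free. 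Convergence of $u$ forces convergence of $f$ near the origin, hence $f$ extends to a harmonic function on $\mathbf{R}^3\setminus B_\rho$ after the usual Kelvin-type inversion $\mathbf{r}\mapsto \mathbf{r}/|\mathbf{r}|^2$, with a controlled singularity at infinity.

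Next I would translate this harmonic function into Friedrich's ``null data'' \cite{friedrich}. The key input is the stated one-to-one correspondence between null data and Geroch multipole moments (for non-vanishing monopole, but the monopole can be shifted/normalised, or the degenerate case handled as a limit as remarked in the introduction). So: from the sequence $P^0_{a_1\cdots a_n}$ build the corresponding null data, verify that the convergence of $u$ near the origin implies Friedrich's growth/analyticity condition on the null data, and invoke Friedrich's existence theorem to obtain a static asymptotically flat vacuum spacetime realising those null data — hence, by the correspondence, realising the prescribed moments.

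Running parallel to this, and this is really the technical heart of the paper as the introduction flags, I would set up an explicit recursion for the metric. The point is to fix the conformal and coordinate freedom by casting the metric in a normal form (a Geroch--Hansen conformal gauge together with, say, harmonic or conformally-normal coordinates at $i^0$), and then show that in this gauge the recursion (\ref{orgrec}) can be inverted order by order: given $P, P_{a_1}, \ldots, P_{a_1\cdots a_n}$ one solves for the degree-$n$ Taylor data of the conformal potential $\hat\phi$ and of $\hat h_{ab}$, using the vacuum field equations to propagate. One must check at each order that the operation $C[\,\cdot\,]$ of taking symmetric trace-free parts, though it destroys information, destroys only information that the gauge conditions independently pin down — so that the full Taylor expansion of $(\hat h_{ab},\hat\phi)$ is determined by the moments. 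This gives the formal power series for the metric; convergence of these series is then obtained not by estimating the recursion directly but by comparison with the genuinely convergent solution produced by Friedrich's theorem in the first part — the uniqueness at each order (Beig's Conjecture~1, \cite{beig}) guarantees the formal series must be the Taylor expansion of that analytic solution.

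The main obstacle I expect is precisely the bookkeeping around $C[\,\cdot\,]$: showing that the symmetrisation/trace-removal in (\ref{orgrec}) is exactly compensated by the chosen gauge, so that the inversion of the recursion is well-posed at every order. Disentangling ``which components of $\hat D_{a_1}P_{a_2\cdots a_n}$ are lost to symmetrisation'' from ``which components are fixed by the coordinate normalisation'' is the delicate combinatorial-differential step; everything downstream (convergence, asymptotic flatness, vacuum) then follows from \cite{friedrich} and \cite{beig}. A secondary subtlety is the $m=0$ case, where Friedrich's correspondence is stated with non-zero monopole, so I would handle that either by a perturbation/limiting argument in $m$ or by checking the construction directly survives $m\to 0$.
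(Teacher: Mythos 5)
Your outline matches the architecture of the paper's proof -- fix a gauge in which the moments can be read off from $\hat\phi$, determine the metric order by order from the field equations, import convergence from Friedrich's theorem, and handle $m=0$ by perturbing in the monopole -- but the two load-bearing steps are named rather than carried out, and one of them is framed in a way that is substantially harder than what is actually needed. You propose to \emph{invert} the recursion (\ref{orgrec}), solving at each order for the Taylor data of $\hat\phi$ and $\hat h_{ab}$ and tracking which components are lost to $C[\,\cdot\,]$ versus fixed by the gauge. The paper avoids this inversion entirely: working in normal coordinates with the additional algebraic condition $r^2\mid\eta^{ij}(\hat h_{ij}-\eta_{ij})$, one shows that $r^2\mid x^ix^j\hat R_{ij}$ and that $C[\,\cdot\,]$ acts trivially modulo $r^2$ after transvection with $x^{i_1}\cdots x^{i_n}$, whence $x^{i_1}\cdots x^{i_n}P_{i_1\cdots i_n}\equiv x^{i_1}\cdots x^{i_n}\partial_{i_1}\cdots\partial_{i_n}\hat\phi\pmod{r^2}$ (Theorem \ref{r2thm}). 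One then simply \emph{sets} $\hat\phi=u$ and the prescribed moments come out automatically; no combinatorial disentangling of the symmetrisation is required. What remains is to show that the conformal field equations, with the metric cast in the canonical form forced by smoothness at $i^0$, admit a unique formal power-series solution. That is an overdetermined linear problem at each order, and its solvability is not obvious: the paper needs the divisibility facts $r^2\mid S$, $r^2\mid t_{ii}$, a dimension count of the hidden linear relations, and an injectivity argument that reduces to a fourth-order ODE solved by associated Legendre functions. None of this is visible in your sketch, and without it the claim that the gauge "independently pins down" the lost information is unsupported.

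The second gap is the transfer of the growth condition. You correctly identify that one must "verify that the convergence of $u$ near the origin implies Friedrich's growth/analyticity condition on the null data," but this is precisely the step that had blocked the conjecture: the correspondence between null data and moments is implicit, so the estimate $|\psi_{\mathbf{a_p\ldots a_1bc}}|\le Mp!/r^p$ does not follow from $|P^0_\alpha|\le Mp!/r^p$ by citation. The paper proves it by expressing the recursion in space spinors, $D_1P_p=P_{p+1}+c_pP_{p-1}P_2$, and running an induction on the bound $|D_nP_p|\le M_0(1+2M_0)^n(n+p-1)!/r_0^{n+p}$ using the identity $p(p-1)\sum_k\binom{n}{k}(k+p-2)!(n-k+1)!=(n+p)!$. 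Finally, your appeal to Beig's uniqueness theorem to identify the formal series with the Taylor expansion of Friedrich's analytic solution is not quite enough as stated: Beig's result compares spacetimes, not formal series in a particular gauge, so you still need the formal uniqueness lemma for the gauge-fixed system, plus the (nontrivial) fact that the conformal factor relating $\hat h_{ab}$ to Friedrich's metric is formally smooth -- it a priori involves $\sqrt{\Omega}=re^{\kappa}$, and one must check that only even powers of $r$ survive before analyticity can be transferred back.
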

Note that we do not require the monopole $P^0=\hat  \phi(0)$ to be non-zero\footnote{Due to the definition of $\phi$, $P^0=-m$, where $m$ is the mass of the spacetime.}. The proof, however, will first
be carried out under the assumptions $P^0 \neq 0$, and this condition will then be relaxed in Section \ref{massless}.
The proof for the case $P^0 \neq 0$ will be carried out in a sequence of lemmas (Lemma \ref{r2} - Lemma \ref{f2}), which also 
show that the metric up to a given order can be calculated explicitly, but first we will give an outline of the proof, discuss the notation and formulate the relevant field equations.
\subsection{Outline of the proof}
As mentioned in Section \ref{intro}, it is possible to reduce the coordinate freedom in the metric
components, and simultaneously establish a direct link between the potential $\hat \phi$ and
the desired moments. This will be done in Section \ref{prescribed}, where the link is expressed 
in Theorem \ref{r2thm}.

In Section \ref{confeqsec}, we will address the conformal field equations from 
\mbox{Section \ref{tfe}}, and see that the form of the metric
given in Section \ref{prescribed} results in singular equations. By requiring that the conformal field
equations are smooth at $i^0$, further restriction will be put on the rescaled metric $\hat h_{ij}$,
which then takes its final form.

With the form of the metric fixed, we will in Section \ref{unique} show that the field equations
determine the metric components as a formal power series. More precisely, we will show that
when the monopole is non-vanishing,
a certain subset of the field equations is sufficient to specify the metric completely.

In Section \ref{conv} we will address the full set of equations, as well as the issue of
convergence of the power series derived. By referring to a result by 
\mbox{Friedrich, \cite{friedrich}}, convergence of the power series will be concluded.
It will also be seen that the full set of equations are satisfied.

Finally, in Section \ref{massless}, we will relax the condition that the monopole is nonzero.

\subsection{Notation} \label{notation}
Small Latin letters $a,b,\ldots$ refer to abstract indices, as in Section \ref{tmoments}. Small Latin letters
$i,j,k, \ldots$ are numerical indices and refer to components with respect to the normal coordinates
$(x,y,z)=(x^1,x^2,x^3)$ introduced below. Since these components refer to this particular coordinate system only, the equations will not be tensor equations. With this said, we will still use $=$ instead of
$\dot =$.

Almost all variables\footnote{The exception is $r=\sqrt{x^2+y^2+z^2}$} are assumed to be formally analytic, i.e., they admit a formal power series expansion
(again in terms of the chosen coordinates), so that, for a tensor, $A_{ijk}$ say, we can write
$$
A_{ijk}=\sum_{n=0}^\infty A_{ijk}^{[n]}
$$
where $A_{ijk}^{[n]}$ denots polynomials in $(x,y,z)$ which are homogeneous of order $n$
(and where the summation may be formal). Both  $\eta_{ij}$ and $\eta^{ij}$ denotes the identity
matrix, and by $[A_{ij}]$ we denote the trace of $A_{ij}$, i.e., $\eta^{ij}A_{ij}$. We also use $\partial_i=\frac{\partial}{\partial x^i}.$

With $r=\sqrt{x^2+y^2+z^2}$, $f_1 \equiv f_2 \pmod{r^2}$ means that $f_1(x,y,z)-f_2(x,y,z)=
r^2\ g(x,y,z)$ for some (formally analytic) function $g$. When $f \equiv 0 \pmod{r^2}$, so that
$f=r^2\ g$ for some formally analytic function $g$, we also use the shorter notation $r^2 | f$.

In the proof of Lemma \ref{f2} we will refer to \cite{friedrich} and hence use some of the notation there.

\subsection{The field equations} \label{tfe}
Apart from having the correct multipole moments, we must ensure that the metric describes
a static vacuum spacetime. We will formulate our equations on the 3-manifold 
$(\hat V, \hat h_{ab})$ defined in Section \ref{tmoments}, starting with the field equations from \cite{geroch}.
However, the 3-manifolds in \cite{geroch} and \cite{hansen} are defined slightly differently, and the relations imply the following.

Starting with a static spacetime $(M,g_{ab})$ with timelike killingvector $\xi^a$, we put
$0<\lambda=-\xi^a \xi_a$ and  $\Psi=1-\sqrt{\lambda}$. 
From \cite{geroch} we consider\footnote{In \cite{geroch}, the potential $\psi=-\Psi$ is used.}
a 3-surface $V_G$ orthogonal to $\xi^a$. In terms of the metric on $V_G$: $(h_G)_{ab}=g_{ab}+\xi_a \xi_b/\lambda $, the field equations are, \cite{geroch},
\begin{equation}\label{gereq}
\left\{
\begin{array}{c} (D_G)^a (D_G)_a \Psi=0 \\ 
(R_G)_{ab}=\frac{1}{\Psi-1}(D_G)_a (D_G)_b \Psi \end{array}\right.
\Leftrightarrow
\left\{
\begin{array}{c} {(R_G)^a}_a=0 \\ 
(R_G)_{ab}=\frac{1}{\sqrt \lambda}(D_G)_a (D_G)_b \sqrt \lambda \end{array}\right.,
\end{equation}
where $(D_G)_a$ is the derivative operator and $(R_G)_{ab}$ is the Ricci tensor associated
with $(V_G,(h_G)_{ab})$.

On the 3-manifold $V$ on the other hand, the metric is $h_{ab}=\lambda g_{ab}+\xi_a\xi_b$, i.e., 
$h_{ab}=\lambda (h_G)_{ab}$, and with $\hat h_{ab}=\Omega^2 h_{ab}$, this implies that
$\hat h_{ab}=(\sqrt{\lambda}\Omega)^2 (h_G)_{ab}$. We can now express equations 
(\ref{gereq}) on $(\hat V, \hat h_{ab})$ using as conformal factor $\hat \Omega=\sqrt{\lambda} \Omega$.
Using the properties of conformal transformations, \cite{wald}, we find that (\ref{gereq})
becomes 
\begin{equation}\label{confeq}
\begin{array}{lcc}
\hat R+4\hat h^{ab}\hat D_a \hat D_b \ln (\sqrt{\lambda} \Omega)-
2\hat h^{ab}\hat D_a \ln (\sqrt{\lambda} \Omega)\hat D_b \ln (\sqrt{\lambda} \Omega) & = & 0\\
\hat R_{ab}+\hat D_a \hat D_b  \ln (\sqrt{\lambda} \Omega)+\hat h_{ab}\hat h^{cd}\hat D_c \hat D_c  \ln (\sqrt{\lambda} \Omega) +& & \\
\hat D_a \ln (\sqrt{\lambda} \Omega)\hat D_b \ln (\sqrt{\lambda} \Omega)-\hat h_{ab}\hat h^{cd}\hat D_c \ln (\sqrt{\lambda} \Omega)\hat D_c \ln (\sqrt{\lambda} \Omega) &=& \\
\frac{1}{\sqrt{\lambda}}[\hat D_a \hat D_b \sqrt \lambda+\hat D_a\ln(\sqrt \lambda \Omega)\hat D_b \sqrt \lambda
+& &\\
 \hat D_b\ln(\sqrt \lambda \Omega)\hat D_a \sqrt \lambda-
\hat h_{ab} \hat h^{cd}\hat D_d\ln(\sqrt \lambda \Omega)\hat D_c\sqrt{\lambda}].
\end{array}
\end{equation}
Thus, we are looking for a metric $\hat h_{ab}$ on $\hat V$, defined in a neighbourhood of $i^0$, which
satisfies (\ref{confeq}) and where the corresponding 4-dimensional spacetime has prescribed multipole moments.
As mentioned earlier, the equation (\ref{confeq}) cannot have a unique solution  in terms of coordinates.
This is not only due to the fact that one can always represent a metric in different coordinate systems, but also
because there is a freedom in the conformal factor $\Omega$. In the next section we will put the metric $\hat h_{ab}$
in a canonical form, i.e., use a preferred coordinate system. This coordinate system is also constructed so that
it allows specification of the multipole moments, i.e., puts (\ref{orgrec}) in a form which is more transparent.

\subsection{Prescribed moments and canonical form of the metric} \label{prescribed}
In this section we will, in a sense, reverse the arguments from \cite{bahe}.
The key point is to work in normal coordinates, in which the Ricci tensor $\hat R_{ab}$
takes a special form. This will lead to Theorem \ref{r2thm}, which allows for a direct connection
between the potential $P=\hat \phi$ and the multipole moments.

Let us therefore first introduce normal coordinates $(x,y,z)=(x^1,x^2,x^3)$
on $\hat V$, where the point $i^0$ has coordinates $\mathbf 0$, and such that $(x,y,z)$
"are Cartesian" at $i^0$, i.e., in terms of the coordinates, $\hat h_{ij}(i^0)=\eta_{ij}$.
By assumption, $\hat h_{ab}$, (and therefore $\hat R_{ab},\ \hat R$) and also the
conformal factor $\Omega$ are at this stage formal power series, while  $u$ from
Theorem \ref{thm} is given by a power series which converges in a neighbourhood of $\mathbf 0$.
$u$ is defined on $\mathbf{R}^3$, but in terms of the normal coordinates introduced, the sum $\sum_{|\alpha|\geq 0}\frac{\mathbf{r}^\alpha}{\alpha!}
P^0_\alpha $ can also be interpreted on $\hat V$, namely as the potential function
$\hat \phi(\mathbf{r})$.

We now turn to the condition on $\hat R_{ab}$. The condition is in \cite{bahe} formulated through the complexification $\hat V_\mathbf C$ of $\hat V$, and by using complex null vectors $\eta^a$, i.e., vectors
$\eta^a$ with $\eta^a \eta_a=0$. Using normal coordinates, it was shown in \cite{bahe} that
for any fixed $\varphi$, the complex curve
$$\gamma_\varphi: t \to (t \cos\varphi,t \sin \varphi, i t),$$
has tangent vector
$\eta^a=\eta^a_\varphi(t)=
\cos\varphi (\frac{\partial}{\partial x^1})^a+\sin\varphi (\frac{\partial}{\partial x^2})^a+i (\frac{\partial}{\partial x^3})^a$ which satisfies $\eta^a \eta_a=0$. The relevant condition on $\hat R_{ab}$ is then
to be found in Lemma 6 of \cite{bahe}, where the condition $\tilde \eta^a \tilde \eta^b \tilde R_{ab}=0$ was made.
With our notation this reads $\eta^a \eta^b \hat R_{ab}=0$, and it is this condition, together with the normality of the coordinates, which allows for a direct connection between the potential $\hat \phi$
and the moments (Theorems 7 and 8 in \cite{bahe}, Theorem \ref{r2thm} below). Although the results in \cite{bahe}, which build
on the techniques developed in \cite{herb,bahe1,bahe2}, depend crucially on complex quantities
(especially the concept of "leading term") it is interesting to note that all of these tools and arguments can be given
in purely real terms. Thus, rather then referring to, and reversing, the arguments in \cite{bahe},
we will derive/translate the corresponding conclusion using only real quantities. 

The first property is the analogue of Lemma 1c in \cite{bahe}, namely, with $\eta^a$ as above, that
$\eta^{a_1} \ldots \eta^{a_n}T_{a_1 \ldots a_n}=\eta^{a_1} \ldots \eta^{a_n}C[T_{a_1  \ldots a_n}]$,
although we here state it in a vector space using the radius vector.

\begin{lemma} \label{r2}
Let $(x,y,z)$ be Cartesian coordinates on $\mathbf{R}^3$, let $\mathbf{r}^a=x (\frac{\partial}{\partial x})^a+y(\frac{\partial}{\partial y})^a+z (\frac{\partial}{\partial z})^a
 $, and put $r=\sqrt{x^2+y^2+z^2}$.
Then, for any tensor $T_{a_1  \ldots a_n}$,
$\mathbf{r}^{a_1} \ldots \mathbf{r}^{a_n}C[T_{a_1  \ldots a_n}]\equiv
\mathbf{r}^{a_1} \ldots \mathbf{r}^{a_n}T_{a_1  \ldots a_n} \pmod {r^2}$
\end{lemma}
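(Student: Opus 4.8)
The statement says that contracting a symmetric tensor $T_{a_1\ldots a_n}$ with $n$ copies of the radius vector is unchanged, modulo $r^2$, if we first replace $T$ by its trace-free symmetric part $C[T]$. The natural approach is to decompose $T$ into its trace parts explicitly. Since $\mathbf{r}^{a_1}\cdots\mathbf{r}^{a_n}$ is already totally symmetric, contracting against it automatically symmetrizes $T$, so without loss of generality we may assume $T$ is totally symmetric to begin with. The decomposition of a symmetric tensor into trace-free pieces (the $SO(3)$-irreducible decomposition) then lets us write $T_{a_1\ldots a_n} = C[T]_{a_1\ldots a_n} + \eta_{(a_1 a_2} S_{a_3\ldots a_n)}$ for some symmetric tensor $S$ of rank $n-2$, where parentheses denote symmetrization. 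Contracting with $\mathbf{r}^{a_1}\cdots\mathbf{r}^{a_n}$ kills the symmetrization on the second term and produces a factor $\eta_{ab}\mathbf{r}^a\mathbf{r}^b = r^2$ times $\mathbf{r}^{a_3}\cdots\mathbf{r}^{a_n} S_{a_3\ldots a_n}$, which is manifestly divisible by $r^2$.

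Concretely, I would proceed as follows. First, reduce to the symmetric case by noting $\mathbf{r}^{a_1}\cdots\mathbf{r}^{a_n} T_{a_1\ldots a_n} = \mathbf{r}^{a_1}\cdots\mathbf{r}^{a_n} T_{(a_1\ldots a_n)}$ and that $C[T] = C[T_{(\cdots)}]$. Second, invoke (or quickly establish by induction on $n$) the standard fact that any totally symmetric tensor $T$ on $\mathbf{R}^3$ admits a unique decomposition $T_{a_1\ldots a_n} = C[T]_{a_1\ldots a_n} + \eta_{(a_1 a_2} S_{a_3\ldots a_n)}$ with $S$ totally symmetric of rank $n-2$; this follows by subtracting off the appropriate multiple of $\eta_{(a_1 a_2}(\text{trace of }T)_{a_3\ldots a_n)}$ to cancel the trace, iterating if necessary, or simply by observing that $C[T]$ is by definition the trace-free symmetric part so $T - C[T]$ lies in the image of the symmetrized-metric map, which by elementary linear algebra is exactly $\eta_{(a_1 a_2}S_{a_3\ldots a_n)}$ for some symmetric $S$. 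Third, contract the difference $T - C[T]$ with $\mathbf{r}^{a_1}\cdots\mathbf{r}^{a_n}$: since all indices are being hit with the same vector, the symmetrization in $\eta_{(a_1 a_2}S_{a_3\ldots a_n)}$ is redundant, and we get exactly $r^2\,(\mathbf{r}^{a_3}\cdots\mathbf{r}^{a_n}S_{a_3\ldots a_n})$, a formally analytic function times $r^2$. Hence $r^2 \mid (\mathbf{r}^{a_1}\cdots\mathbf{r}^{a_n}(T-C[T])_{a_1\ldots a_n})$, which is the claim.

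The only point requiring a little care — and the main (though still routine) obstacle — is justifying the decomposition $T - C[T] = \eta_{(a_1 a_2}S_{\cdots)}$, i.e., that the complement of the trace-free symmetric tensors inside the symmetric tensors is spanned by metric-times-(lower rank symmetric). One clean way is induction on $n$: if $[T]_{a_3\ldots a_n}$ denotes the trace $\eta^{a_1 a_2}T_{a_1\ldots a_n}$, then $T_{a_1\ldots a_n} - c_n\,\eta_{(a_1 a_2}[T]_{a_3\ldots a_n)}$ has its total trace controlled, and choosing the constant $c_n$ (which depends only on $n$ and the dimension $3$) appropriately and recursing produces a fully trace-free symmetric tensor that must equal $C[T]$ by uniqueness of the trace-free symmetric part; the subtracted terms are all of the form $\eta_{(a_1a_2}(\cdots)_{a_3\ldots a_n)}$, so their sum is too. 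Alternatively one can cite the classical Stirling–Weyl decomposition of symmetric tensors into harmonics. Either way, once the decomposition is in hand, the contraction computation is immediate because $\mathbf{r}^a\mathbf{r}^b\eta_{ab}=r^2$ does all the work.
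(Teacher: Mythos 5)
Your proposal is correct and follows essentially the same route as the paper: both write the difference $T_{(a_1\ldots a_n)}-C[T_{a_1\ldots a_n}]$ as $\eta_{(a_1a_2}S_{a_3\ldots a_n)}$ for some tensor $S$, then transvect with $\mathbf{r}^{a_1}\cdots\mathbf{r}^{a_n}$ and use $\mathbf{r}^a\mathbf{r}^b\eta_{ab}=r^2$. The extra care you take in justifying the decomposition (induction on the trace subtractions) is a more explicit version of what the paper treats as immediate from the construction of $C[\,\cdot\,]$.
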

\begin{proof}
$C[T_{a_1  \ldots a_n}]$ is constructed from $T_{a_1  \ldots a_n}$ by taking the totally symmetric
part and subtracting suitable amounts of the tensors
$\eta_{(a_1 a_2}T_{a_3  \ldots a_n) \phantom{b} b}^{\phantom{a_3  \ldots a_1} b}$, 
$\eta_{(a_1 a_2}\eta_{a_3 a_4}T_{a_5  \ldots a_n) \phantom{b} b \phantom{b} c}^{\phantom{a_3  \ldots a_1} b \phantom{b} c}$,
$\ldots$ where $\eta_{ab}$ is the Euclidean metric on $\mathbf R^3$. Thus, it is clear that
$C[T_{a_1  \ldots a_n}]=T_{(a_1  \ldots a_n)}+\eta_{(a_1 a_2}S_{a_3 a_4 \ldots a_n)}$ for some tensor
$S_{a_3 a_4 \ldots a_n}$.
Since $\mathbf{r}^{a}\mathbf{r}^{b} \eta_{ab}=r^2$, the statement follows by transvecting with
$\mathbf{r}^{a_1} \ldots \mathbf{r}^{a_n}$ and observing that 
$\mathbf{r}^{a_1} \ldots \mathbf{r}^{a_n}T_{(a_1  \ldots a_n)}=\mathbf{r}^{a_1} \ldots \mathbf{r}^{a_n}T_{a_1  \ldots a_n}$.
\end{proof}
This property can also be realized in the following way. It is clear that
$\mathbf{r}^{a_1} \ldots \mathbf{r}^{a_n}T_{a_1  \ldots a_n}$ is a homogeneous polynomial $p=p(x,y,z)$ of degree $n$.
Using spherical coordinates in $\mathbf R^3$, i.e., 
$x=r \sin\theta \cos \phi$, $y=r \sin\theta \sin \phi$, $z=r \cos\theta$, we 
express $p$ in spherical harmonics $Y_l^m$ and get
$p=\sum \limits_{l=0, l \mbox{ even }}^{n}\sum \limits_{m=-l}^l r^n c_l^m Y_l^m(\theta,\phi)$ if $n$ is even.
If $l=n-2$ or less, $\sum \limits_{m=-l}^l r^n c_l^m Y_l^m(\theta,\phi)=r^{n-l} \sum \limits_{m=-l}^l r^{l} c_l^m Y_l^m(\theta,\phi)$
with $r^{n-l}$ smooth and divisible by $r^2$, and where the sum is a smooth polynomial.  
This means that $\mathbf{r}^{a_1} \ldots \mathbf{r}^{a_n}C[T_{a_1  \ldots a_n}]$ corresponds to the polynomial
$p_C=\sum \limits_{m=-n}^n r^n c_n^m Y_n^m(\theta,\phi)$, i.e., only the terms corresponding to $l=n$ are kept.
Identical remarks hold if $n$ is odd. 

Using normal coordinates, the property of Lemma \ref{r2} can be carried over to the manifold $\hat V$. Namely, from \cite{thomas}
we have the following lemma, which we formulate for the particular case when $\hat h_{ij}(i^0)=\eta_{ij}$.
\begin{lemma} \label{normal}
$(x,y,z)=(x^1,x^2,x^3)$ are normal coordinates on $(\hat V,\hat h_{ab})$, if and only if 
$x^i \hat h_{ij}(x^k)=x^i \hat h_{ij}(i^0)$, i.e., $x^i \hat h_{ij}=x^i \eta_{ij}$.
\end{lemma}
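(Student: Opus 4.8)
The plan is to prove both implications by reducing the statement ``$(x,y,z)$ are normal coordinates'' to the purely algebraic condition that every radial line $\gamma_v:t\mapsto (tv^1,tv^2,tv^3)$ is an affinely parametrised geodesic of $\hat h_{ab}$, i.e.\ that $x^ix^j\hat\Gamma^k_{ij}(x)=0$ identically (equivalently, lowering the index, $x^ix^j\hat\Gamma_{k,ij}(x)=0$, where $\hat\Gamma_{k,ij}=\frac12(\partial_i\hat h_{jk}+\partial_j\hat h_{ik}-\partial_k\hat h_{ij})$ are the Christoffel symbols of the first kind). For a smooth metric this is the standard characterisation of normal coordinates through $\exp_{i^0}$; in the formal power series setting it may simply be taken as the definition. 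The second ingredient is the identity $x^ix^j\hat h_{ij}(x)=r^2$: it is immediate from the right-hand side of the claimed equivalence (then $x^ix^j\hat h_{ij}=x^jx^j=r^2$), and in the other direction it is the statement that the speed $\hat h_{ij}(\gamma_v)\dot\gamma_v^i\dot\gamma_v^j$ is constant along each radial geodesic, equal to its value $\eta_{ij}v^iv^j$ at $i^0$ (using $\hat h_{ij}(i^0)=\eta_{ij}$). Finally I will use the radial (Euler) field $R=x^j\partial_j$, which multiplies a homogeneous polynomial of degree $n$ by $n$; this is what makes the argument go through formally.

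For ``$x^i\hat h_{ij}=x^i\eta_{ij}\Rightarrow$ normal'', I would differentiate the hypothesis once and contract with $x^k$, which (after using the hypothesis again) gives $x^ix^k\partial_k\hat h_{ij}=0$; differentiating $x^ix^j\hat h_{ij}=r^2$ then gives $x^ix^j\partial_l\hat h_{ij}=0$. Substituting these two relations into $\hat\Gamma_{l,ij}x^ix^j=\frac12(\partial_i\hat h_{jl}+\partial_j\hat h_{il}-\partial_l\hat h_{ij})x^ix^j$ annihilates all three terms, so $x^ix^j\hat\Gamma^l_{ij}=0$, i.e.\ each radial line is a geodesic through $i^0$, affinely parametrised by $t$, with initial velocity $v^i(\partial_i)|_{i^0}$; since the frame $\{(\partial_i)|_{i^0}\}$ is orthonormal ($\hat h_{ij}(i^0)=\eta_{ij}$), the chart is $\exp_{i^0}$ in that frame, i.e.\ $(x,y,z)$ are normal coordinates.

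For the converse I would use $x^ix^j\hat\Gamma_{k,ij}=0$ together with $x^ix^j\hat h_{ij}=r^2$. Setting $\beta_k=x^i\hat h_{ik}$, the geodesic relation yields $x^ix^j\partial_j\hat h_{ik}=\frac12 x^ix^j\partial_k\hat h_{ij}$ (the first two terms of $\hat\Gamma_{k,ij}$ coinciding after relabelling), and differentiating $x^ix^j\hat h_{ij}=r^2$ gives $x^ix^j\partial_k\hat h_{ij}=2x^k-2\beta_k$; hence $x^ix^j\partial_j\hat h_{ik}=x^k-\beta_k$. Then $R\beta_k=x^j\hat h_{jk}+x^ix^j\partial_j\hat h_{ik}=\beta_k+(x^k-\beta_k)=x^k$. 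Since $Rx^k=x^k$ and both $\beta_k$ and $x^k$ vanish at $i^0$, the series $\beta_k-x^k$ is killed by $R$ and has no constant term, so by Euler's relation $\beta_k\equiv x^k$, which is the desired identity $x^i\hat h_{ik}(x)=x^i\hat h_{ik}(i^0)$.

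The one place care is needed — and the only genuine obstacle — is the index bookkeeping in the few transvections with $R$: one has to track precisely which slot of $\hat h$, and which slot of $\partial\hat h$, is being contracted with $x$, since otherwise the three terms in $\hat\Gamma_{l,ij}x^ix^j$ do not visibly cancel. I would deliberately package everything as ``$Rf=(\text{known})$ plus Euler's relation'' rather than integrating an ordinary differential equation in $t$ along $\gamma_v$ (which would require dividing by $t$ near $i^0$), precisely so that the proof remains valid for the formal power series used throughout the paper; alternatively, the forward implication could be obtained by quoting the Gauss lemma directly.
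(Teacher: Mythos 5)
Your proof is correct. It is worth noting that the paper does not prove this lemma at all: it is quoted verbatim from Thomas's book on tensor analysis, where the identity $x^i g_{ij}(x)=x^i g_{ij}(0)$ is the classical algebraic characterisation of normal coordinates. So your argument is a genuine addition rather than a variant of the paper's. Both directions check out: in the forward direction, differentiating the hypothesis and contracting correctly yields $x^i x^k\partial_k\hat h_{ij}=0$ and $x^i x^j\partial_l\hat h_{ij}=0$, which annihilate all three terms of $x^i x^j\hat\Gamma_{l,ij}$; in the converse, the computation $R\beta_k=x_k$ combined with Euler's relation and $\beta_k(i^0)=0$ forces $\beta_k=x_k$. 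The derivation of $x^i x^j\hat h_{ij}=r^2$ from the geodesic condition can, if you prefer, also be done purely formally (contract $x^ix^j\hat\Gamma_{k,ij}=0$ with $x^k$ to get $x^ix^jx^k\partial_k\hat h_{ij}=0$, whence $R(x^ix^j\hat h_{ij})=2\,x^ix^j\hat h_{ij}$ and homogeneity of degree two follows), which keeps the whole proof inside the formal power series framework without invoking conservation of speed along geodesics. The main advantage of your route over the paper's bare citation is precisely this: it makes explicit that the equivalence survives in the formal setting the paper actually works in, where $\hat h_{ij}$ is only a formal series and the exponential map is not literally available, and it isolates the Euler operator $R=x^j\partial_j$ as the tool that replaces integration along radial geodesics.
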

Thus, in normal coordinates, the above lemma implies that
$x^i x^j \hat h_{ij}=x^2+y^2+z^2=r^2$. This means that by the same arguments as above, 
\begin{equation} \label{ctv}
x^{i_1} \ldots x^{i_n}C[T_{i_1  \ldots i_n}]\equiv
x^{i_1} \ldots x^{i_n}T_{i_1  \ldots i_n}  \pmod {r^2} \mbox{ in } \hat V
\end{equation} 
Remark. That (\ref{ctv})  is still true in $\hat V$ depends on the fact that
we have $x^i x^j \hat h_{ij}=r^2$. However, in Lemma \ref{r2},  we had a fixed tensor
$T_{a_1  \ldots a_n}$ on a vector space, while on $\hat V$, $T_{a_1  \ldots a_n}$ is a tensor
field. This does not affect the equality (\ref{ctv}), but when evaluating at $i^0$,
where $(x^1,x^2,x^3)=(0,0,0)$, it just says $0=0$. Another effect is that
$x^{i_1} \ldots x^{i_n}T_{i_1  \ldots i_n}$ is a polynomial which is the sum of
a homogeneous polynomial of degree $n$ and a polynomial containing only higher order terms. 
However, by replacing $x^i$ with $\rho^i=x^i/r$, so that $\rho^i \rho^j \hat h_{ij}=1$,
both  $\rho^{i_1} \ldots \rho^{i_n}C[T_{i_1  \ldots i_n}]$ and $\rho^{i_1} \ldots \rho^{i_n}T_{i_1  \ldots i_n}$ will be direction dependent quantities at $i^0$. In the limit $r\to 0$, the higher order terms
vanish and we get
\begin{equation*}
\begin{array}{ccl}
\rho^{i_1} \ldots \rho^{i_n}T_{i_1  \ldots i_n}(i^0) & = &
\sum \limits_{l=0, l \mbox{ even }}^{n}\sum \limits_{m=-l}^l c_l^m Y_l^m(\theta,\phi) \Rightarrow \\
 \rho^{i_1} \ldots \rho^{i_n}C[T_{i_1  \ldots i_n}(i^0)] &=&
\sum \limits_{m=-n}^n c_n^m Y_n^m(\theta,\phi),
\end{array}
\end{equation*}
if $n$ is even; and a corresponding relation when $n$ is odd ($l=1,3,5, \ldots n$). This last equality
also tells us that $C[T_{i_1  \ldots i_n}]$ transvected with $x^{i_1} \ldots x^{i_n}$ or
$\rho^{i_1} \ldots \rho^{i_n}$ still contains the full information, since there are $2n+1$
degrees of freedom in the RHS.

In view of Lemma \ref{r2} and the corresponding property (\ref{ctv}) on $\hat V$, it is obvious that $\mathbf{r}^{a}\mathbf{r}^{b}\hat R_{ab} \equiv \ 0 \pmod {r^2}$, i.e.,
that $r^2 | \mathbf{r}^{a}\mathbf{r}^{b}\hat R_{ab}$ is a desirable property in order to simplify (\ref{orgrec}) since $\mathbf{r}^{a}\mathbf{r}^{b}\hat R_{ab}$ will then not affect the multipole
moments.  As it turns out, this condition can be formulated purely in algebraic terms. This will be done in Lemma \ref{r2Rab} below.
To prepare for this lemma, we need some more tools. 

\begin{lemma} \label{gdec}
Suppose $A=A_{ij}=A_{(ij)}$ are the components of a symmetric tensor field $A_{ab}$ on $\hat V$ with respect to normal coordinates $(x,y,z)=(x^1,x^2,x^3)$
 and that $A_{ij}$ has the property that $r^2|\eta^{ij}A_{ij}$, $A_{ij}x^i=0$. Then $A$ is uniquely decomposable as 
$$
\begin{array}{rcl}
A&=&f_1(x,y) A_1+f_2(y,z) A_2+f_3(x,z) A_3+f_4(x,z) A_4\\
&&+f_5(x,y,z) A_5+f_6(x,y,z) A_6+f_7(x,y,z) A_7, \mbox{ where}
\end{array}
$$
$
\begin{array}{rcl}
A_1&=&\left(
\begin{array}{lll}
 -y^2-z^2 & x y & x z \\
 x y & -x^2-z^2 & y z \\
 x z & y z & -x^2-y^2
\end{array}
\right)\\
A_2&=&\left(
\begin{array}{lll}
 2 z \left(y^2+z^2\right) & -x y z & -x \left(y^2+2 z^2\right) \\
 -x y z & 0 & x^2 y \\
 -x \left(y^2+2 z^2\right) & x^2 y & 2 x^2 z
\end{array}
\right)\\
A_3&=&\left(
\begin{array}{lll}
 2 x z^2 & y z^2 & -\left(2 x^2+y^2\right) z \\
 y z^2 & 0 & -x y z \\
 -\left(2 x^2+y^2\right) z & -x y z & 2 x \left(x^2+y^2\right)
\end{array}
\right)\\
A_4&=&\left(
\begin{array}{lll}
 0 & -x y z & x y^2 \\
 -x y z & 2 z \left(x^2+z^2\right) & -y \left(x^2+2 z^2\right) \\
 x y^2 & -y \left(x^2+2 z^2\right) & 2 y^2 z
\end{array}
\right)\\
A_5&=&\left(
\begin{array}{lll}
 0 & -z \left(y^2+z^2\right) & y \left(y^2+z^2\right) \\
 -z \left(y^2+z^2\right) & 2 x y z & -x (y-z) (y+z) \\
 y \left(y^2+z^2\right) & -x (y-z) (y+z) & -2 x y z
\end{array}
\right)\\
A_6&=&\left(
\begin{array}{lll}
 2 x y z & -z \left(x^2+z^2\right) & -y (x-z) (x+z) \\
 -z \left(x^2+z^2\right) & 0 & x \left(x^2+z^2\right) \\
 -y (x-z) (x+z) & x \left(x^2+z^2\right) & -2 x y z
\end{array}
\right)\\
A_7&=&\left(
\begin{array}{lll}
 0 & x z^2 & -x y z \\
 x z^2 & 2 y z^2 & -\left(x^2+2 y^2\right) z \\
 -x y z & -\left(x^2+2 y^2\right) z & 2 y \left(x^2+y^2\right)
\end{array}
\right).
\end{array}
$
\end{lemma}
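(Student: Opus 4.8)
\emph{Proof plan.} The plan is to show that the linear map sending a $7$-tuple $(f_1,\dots,f_7)$ to $\sum_{k=1}^{7} f_k A_k$ is a bijection onto $\mathcal A$, the space of symmetric $A_{ij}=A_{(ij)}$ satisfying $A_{ij}x^i=0$ and $r^2\mid\eta^{ij}A_{ij}$; here $f_1,f_2,f_3,f_4$ range over the formal power series in $(x,y)$, $(y,z)$, $(x,z)$, $(x,z)$ respectively and $f_5,f_6,f_7$ over the power series in $(x,y,z)$. The first step is to verify that each $A_k$ does lie in $\mathcal A$: symmetry is visible, and $A_kx=0$ together with $r^2\mid\eta^{ij}(A_k)_{ij}$ follow from a one-line computation for each $k$ --- indeed $A_1=x_ix_j-r^2\eta_{ij}$, while the traces of $A_2,\dots,A_7$ come out to $2zr^2$, $2xr^2$, $2zr^2$, $0$, $0$, $2yr^2$. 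Hence the image of the map is contained in $\mathcal A$, and it remains to prove surjectivity and injectivity.

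For surjectivity I would use that $A_{ij}x^i=0$ means the quadratic form $A$ is supported on the two-plane $\mathbf r^\perp$ orthogonal to $\mathbf r^a$. The three polynomial vector fields $p=(0,z,-y)$, $q=(-z,0,x)$, $s=(y,-x,0)$ are each orthogonal to $\mathbf r^a$, any two of them span $\mathbf r^\perp$ away from the origin, and they satisfy the single relation $xp+yq+zs=0$. Consequently every $A\in\mathcal A$ can be written, with formal power series coefficients, as a combination of the six symmetrised products $p\otimes p$, $q\otimes q$, $s\otimes s$, $p\otimes q+q\otimes p$, $q\otimes s+s\otimes q$, $s\otimes p+p\otimes s$ (one has to check that these generate, as a module over the power series ring, the module of all symmetric tensors annihilating $\mathbf r^a$; a quick order-by-order dimension count does this). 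Using $xp+yq+zs=0$ one can then normalise the six coefficients, eliminating one variable from each of four of them, and imposing $r^2\mid\eta^{ij}A_{ij}$ organises the remaining freedom precisely into the seven matrices $A_1,\dots,A_7$ with coefficients in the prescribed rings. In practice the coefficients are read off one at a time from the entries $A_{11}$, $A_{22}$, $A_{12}$, $A_{13}$, $A_{23}$ and the trace relation, using that $(A_k)_{11}=0$ for $k\in\{4,5,7\}$ and $(A_k)_{22}=0$ for $k\in\{2,3,6\}$, so that each successive entry isolates a new coefficient. Alternatively one may bypass the structural argument entirely: in every homogeneous degree $n$ the spaces $\mathcal A^{[n]}$ and $\{\sum_k f_kA_k\}^{[n]}$ have the same, and easily computed, finite dimension, so surjectivity is automatic once injectivity is established.

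Injectivity is the statement that $\sum_k f_kA_k=0$, with $f_k$ in the prescribed rings, forces every $f_k\equiv 0$. Restricting this identity to the coordinate planes $z=0$, $x=0$, $y=0$ decouples the system enough that the two-variable coefficients $f_1,f_2,f_3,f_4$ must vanish first, after which a single off-diagonal entry kills $f_5,f_6,f_7$; equivalently it follows at once from the degree-wise dimension count. I expect the real difficulty to be organisational rather than conceptual: one must arrange the normalisation so that the relation $xp+yq+zs=0$ strips exactly the right variable from exactly the right coefficient, and so that the single scalar constraint $r^2\mid\eta^{ij}A_{ij}$ is used exactly once --- losing or double-counting that one degree of freedom would break the claimed uniqueness (or existence) of the decomposition.
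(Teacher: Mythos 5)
Your plan is correct in outline but takes a genuinely different route from the paper. The paper proves the lemma by brute-force elimination: it writes $A$ as an explicit symmetric matrix whose $(3,3)$ entry is $r^2Z-X-Y$ (absorbing the trace condition from the start), turns $A_{ij}x^i=0$ into three scalar identities, and then peels off the structure of $X,Y,a,b,c$ by successive Taylor decompositions and restrictions to the coordinate planes, collecting terms at the end to read off $A_1,\dots,A_7$ and the variable restrictions on $f_1,\dots,f_7$. Existence and uniqueness come out together from this single elimination. You instead separate the two halves: injectivity by restricting $\sum f_kA_k=0$ to coordinate planes (this works, essentially as you say, although after $f_1,\dots,f_4,f_6$ are killed you still need \emph{two} entries, e.g.\ $A_{22}$ and $A_{12}$, to force $f_5=f_7=0$ --- the pair $f_5x+f_7z=0$, $f_5(y^2+z^2)=f_7xz$ gives $f_5r^2=0$); and surjectivity by a module-theoretic or dimension-counting argument. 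The dimension count does close: in homogeneous degree $n$ your parameter space has dimension $(n-1)+3(n-2)+3\binom{n-1}{2}=\tfrac{1}{2}(3n^2-n-8)$, which equals $3(n-1)+3\binom{n}{2}-(2n+1)$, i.e.\ the dimension of $\{A:Ax=0\}^{[n]}$ minus the number of trace conditions. What each approach buys: yours makes the uniqueness statement transparent and isolates the one scalar constraint $r^2\mid\eta^{ij}A_{ij}$ cleanly, whereas the paper's elimination produces the seven matrices constructively without ever needing a dimension count.

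The one place where your plan is thinner than it looks is the surjectivity step. The upper bound $\dim\{A:Ax=0\}^{[n]}\le 3(n-1)+3\binom{n}{2}$ is exactly the generation-plus-normalisation statement for the six products $pp,qq,ss,pq{+}qp,\dots$ (your $p,q,s$ are $-u_2,-u_3,-u_1$ and the six products are, up to sign, the matrices $B_1,\dots,B_6$ of Lemma~\ref{nygdeco}, which the paper has to prove separately in the appendix by the same elimination technique); and you must additionally verify that the $2n+1$ trace conditions are independent, i.e.\ that the trace map from $\{Ax=0\}^{[n]}$ onto degree-$n$ polynomials modulo $r^2\cdot(\hbox{degree }n{-}2)$ is surjective (it is: the $B_4,B_5,B_6$ coefficients already cover the ideal $(x,y)^2$ modulo $r^2$, and $yz^{n-1},xz^{n-1},z^n$ are reached via $B_2,B_3$ and $B_4{+}B_6$). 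Neither check is hard, but neither is free, so the computational burden is relocated rather than removed; once both are in place your argument is complete.
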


\begin{proof}
In terms of matrices, the condition $r^2|\eta^{ij}A_{ij}$ is just that $r^2$ divides the trace of $S$, i.e.,
$r^2 | [A]$, or $[A] \equiv 0 \pmod {r^2}$. Therefore, we can make the following ansatz
\begin{equation*}  \nonumber 
A=\left(
\begin{array}{ccc}
X(x,y,z) & a(x,y,z) & b(x,y,z)\\
a(x,y,z) & Y(x,y,z) & c(x,y,z)\\
b(x,y,z) & c(x,y,z) & r^2 Z(x,y,z)-X(x,y,z)-Y(x,y,z)
\end{array}
 \right)
\end{equation*}
where $a,b,c,X,Y$ and $Y$ are analytic in the variables indicated. The condition $A_{ij}x^i=0$
then translates to
\begin{equation}\label{sref}
\hspace*{-10mm}\begin{array}{ccc}
x X(x,y,z) +y  a(x,y,z) +z b(x,y,z) &=&0\\
x a(x,y,z) +y Y(x,y,z) +z c(x,y,z)&=&0\\
x b(x,y,z) +y  c(x,y,z) +z[ r^2 Z(x,y,z)-X(x,y,z)-Y(x,y,z)] &=& 0
\end{array}
\end{equation}
We use analyticity of the functions involved, and start by writing
\begin{equation*}  \nonumber
\begin{array}{ccl}
X(x,y,z) &
 = & X_1(x) + y\, X_2(x) + y^2 X_3(x, y) + z\, X_4(x, y, z),\\  
Y(x,y,z) &
 = & Y_1(y) + x\, Y_2(y) + x^2 Y_3(x, y) + z\, Y_4(x, y, z),\\ 
a(x,y,z) &
 = & a_1 +x\, a_2(x) + y\, a_3(y) + x y\, f_1(x, y) + z\, a_4(x, y, z).
\end{array}
\end{equation*}
Inserted in (\ref{sref}), the limits $z \to 0, y \to 0$ gives $X_1(x) = 0,a_1=0, a_2(x)
= 0$, while
the limits $z \to 0, x \to 0 $ gives $Y_1(y) = 0, a_3(y) = 0$. Given this, $z \to 0$ implies
$X_2(x) = 0, Y_2(y) = 0$, followed by $X_3(x,y)=-f_1(x,y)$ and  $Y_3(x,y)=-f_1(x,y)$.
A further evaluation of (\ref{sref}) then shows that
$b(x,y,z) = -x\, X_4(x, y, z) - y\, a_4(x, y, z)$ and
$ c(x,y,z) = -x\, a_4(x, y, z) - y\, Y_4(x, y, z)$.
By writing 
$$Y_4(x, y, z) = Y_5(x, z) + y\, Y_6(x, y, z) - z\,  f_1(x, y),$$
 (\ref{sref}) together with
$y=0$ shows that $Y_5(x,z)$ contains the factor $x^2+z^2$; and similarly
$$X_4(x, y, z) = X_5(y, z) + x\, X_6(x, y, z) - z\, f_1(x, y)$$ in (\ref{sref}) with
$x=0$ reveals that $X_5(y,z)$ contains $y^2+z^2$. Thus
$Y_5(x, z) = 2(x^2 + z^2) f_4(x, z)$ and $X_5(y, z) = 2(y^2 + z^2) f_2(y, z)$.
To proceed, we write 
$$X_6(x, y, z) = X_8(x, z) + 2y\, f_6(x, y, z).$$ 
In addition, (\ref{sref}) with $y=0$ then shows that $X_8(x, z) = 2z\, f_3(x, z)$ for some function $f_3$.
Next, (\ref{sref}) implies 
$Z(x,y,z) = 2y\, f_7(x, y, z) -2 f_1(x, y) + 2 x\, f_3(x, z) + 2 z\, f_2(y, z)+ 2 z\, f_4(x, z)$, for some function
$f_7$. Also, 
(\ref{sref}) with $x=0$ then gives $Y_6(x, y, z) = 2 z\, f_7(x, y, z) + 2x\, f_5(x, y, z)$, and a final
application of (\ref{sref}) shows that 
$a_4(x, y, z) = y\ z\ f_3(x, z) - x\ y\ f_2(y, z) - x^2\ f_6(x, y, z) - 
z^2\ f_6(x, y, z) - x\ y\ f_4(x, z) - y^2\ f_5(x, y, z) - 
z^2\ f_5(x, y, z) + x\ z\ f_7(x, y, z).
$
By collecting terms, we get Theorem \ref{gdec}.
\end{proof}
In the proof of Lemma \ref{r2Rab} below, $A_{ij}$ will be the difference $\hat h_{ij}-\eta_{ij}$,
and since $\hat R_{ij}$ involves the Christoffel symbols, we need a corresponding property
for the inverse metric $\hat h^{ij}$. 
\begin{lemma} \label{hhat}
Suppose that, in normal coordinates,  $\hat h_{ij}$ has the property that
$A_{ij}=\hat h_{ij}-\eta_{ij}$ satisfies $r^2 |\eta^{ij}A_{ij}$, $A_{ij}x^i=0$. Then the same holds for
the inverse $\hat h^{ij}$, i.e., with $B^{ij}=\hat h^{ij}-\eta^{ij}$, we have
$r^2 | \eta_{ij}B^{ij}$ and $B^{ik}\hat h_{jk}x^j=B^{ik} \eta_{jk} x^j=0$.
\end{lemma}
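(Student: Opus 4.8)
The plan is to rephrase the statement as ordinary $3\times 3$ matrix algebra in the chosen normal coordinates. By Lemma \ref{normal}, the hypothesis says that, with $\hat h=(\hat h_{ij})$ regarded as a symmetric matrix and $\mathbf r=(x,y,z)$, one has $\hat h\,\mathbf r=\mathbf r$; moreover $A:=\hat h-\eta$ is symmetric with $A\,\mathbf r=0$, $r^2\mid[A]$, and vanishing constant term (since $\hat h_{ij}(i^0)=\eta_{ij}$), so that $\hat h^{-1}=\eta-A+A^2-\cdots$ is a well-defined formal power series and $B=(B^{ij})=\hat h^{-1}-\eta$. The vector statement is then immediate: $\hat h\,\mathbf r=\mathbf r$ forces $\hat h^{-1}\mathbf r=\mathbf r$, so contracting $\hat h_{jk}x^j=\eta_{jk}x^j=x_k$ with $\hat h^{ik}$ gives $\hat h^{ik}x_k=x^i=\eta^{ik}x_k$, i.e.\ $B^{ik}x_k=0$; this is exactly both of the asserted identities, because $\hat h_{jk}x^j=\eta_{jk}x^j=x_k$.

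For the trace I would express $\eta_{ij}B^{ij}=\mathrm{tr}(\hat h^{-1})-3$ through the elementary invariants $\sigma_1,\sigma_2,\sigma_3$ of $A$. Since $A\,\mathbf r=0$ with $\mathbf r\not\equiv0$, the matrix $A$ is everywhere singular, so $\sigma_3(A)=\det A\equiv0$ and hence $\det\hat h=\det(\eta+A)=1+[A]+\sigma_2(A)$, which is a unit in the power series ring. Using $\mathrm{tr}(M^{-1})=\sigma_2(M)/\det M$ for $3\times3$ matrices together with $\sigma_2(\eta+A)=3+2[A]+\sigma_2(A)$, a short computation gives
\[
\eta_{ij}B^{ij}=\mathrm{tr}(\hat h^{-1})-3=\frac{-[A]-2\,\sigma_2(A)}{1+[A]+\sigma_2(A)} .
\]
Since $r^2\mid[A]$ by hypothesis and the denominator is a unit, the entire claim reduces to showing $r^2\mid\sigma_2(A)$.

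This divisibility is the one genuinely nontrivial point, and I expect it to be the main obstacle; everything else is bookkeeping with conformal/inverse-metric identities. I would prove it from a pointwise linear-algebra fact on the complexified cone $\{r^2=0\}$: if $\mathbf p\cdot\mathbf p=0$, then $A(\mathbf p)$ is a symmetric matrix annihilating the null vector $\mathbf p$, and any such matrix has $\sigma_2=0$. A clean way to see this is via the adjugate: from $\det A\equiv0$ we have $A\,\mathrm{adj}(A)=\mathrm{adj}(A)\,A=0$; at a point where $r^2=0$ and $\mathrm{rank}\,A(\mathbf p)=2$ this forces $\ker A(\mathbf p)=\langle\mathbf p\rangle$ and $\mathrm{im}\,A(\mathbf p)=\mathbf p^{\perp}$, so $\mathrm{adj}(A)(\mathbf p)$ — symmetric, with image in $\langle\mathbf p\rangle$ and kernel containing $\mathbf p^{\perp}$ — must equal $c\,\mathbf p\,\mathbf p^{\top}$ for a scalar $c$, whose trace is $c\,(\mathbf p\cdot\mathbf p)=0$; and $\mathrm{tr}\,\mathrm{adj}(A)=\sigma_2(A)$ for $3\times3$ matrices (if instead $\mathrm{rank}\,A(\mathbf p)\le1$ then $\mathrm{adj}(A)(\mathbf p)=0$ outright). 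Running the same computation on the dense open set where $r^2\neq0$ and $\mathrm{rank}\,A=2$ in fact yields the polynomial identity $r^2\,\mathrm{adj}(A)=\sigma_2(A)\,\mathbf r\,\mathbf r^{\top}$; comparing the $(1,2)$ entries gives $r^2\mid\sigma_2(A)\,xy$, and since $r^2$ is irreducible and coprime to $xy$ we conclude $r^2\mid\sigma_2(A)$ (degree by degree on the homogeneous parts in the formal setting). Substituting this back into the displayed formula gives $r^2\mid\eta_{ij}B^{ij}$ and finishes the proof.
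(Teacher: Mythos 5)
Your proof is correct, but it takes a genuinely different route from the paper. The paper's argument is structural: it invokes the explicit basis $A_1,\dots,A_7$ of Lemma \ref{gdec}, checks that every product $A_iA_j$ again satisfies the hypotheses of that lemma (trace divisible by $r^2$ and annihilated by $x^i$), and then reads off the claim from the Neumann series $(I+A)^{-1}=I+\sum_{n\geq 1}(-A)^n$. You instead work with the closed form of the inverse: $B^{ik}x_k=0$ follows at once from $\hat h\,\mathbf r=\mathbf r$, and the trace statement reduces, via $\operatorname{tr}(M^{-1})=\sigma_2(M)/\det M$ and $\det A=0$ (which is forced algebraically by $\det(A)\,\mathbf r=\operatorname{adj}(A)A\mathbf r=0$ in an integral domain), to the single divisibility $r^2\mid\sigma_2(A)$, which you obtain from the identity $r^2\operatorname{adj}(A)=\sigma_2(A)\,\mathbf r\,\mathbf r^{\top}$ and irreducibility of $x^2+y^2+z^2$. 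This is sound; the only technical care needed is that the adjugate identity should be justified for formal power series by truncating to homogeneous parts (the truncations still annihilate $\mathbf r$ since the constraint is homogeneous), which you essentially acknowledge. What each approach buys: yours is self-contained and does not need Lemma \ref{gdec} at all, and it produces an explicit formula $\eta_{ij}B^{ij}=(-[A]-2\sigma_2(A))/(1+[A]+\sigma_2(A))$; the paper's version is more computational but proves the stronger fact that $B$ again lies in the span of $A_1,\dots,A_7$ with analytic coefficients, and it is precisely this closure-under-products property (not just the two stated conclusions) that is reused later, e.g.\ in the proof of Lemma \ref{r2Rab}, where products of such matrices are manipulated symbolically. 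Note also that your key step $r^2\mid\sigma_2(A)$ would follow in one line from the paper's product-closure fact via $\sigma_2(A)=\tfrac12([A]^2-[A^2])$, so the two arguments meet at that point.
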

\begin{proof}
This is most easily seen in terms of matrices. Namely, it is easy to check that
the matrices $A_1,A_2,\ldots,A_7$ from Lemma \ref{gdec}
have the property that also the products $A_i A_j$, $1 \leq i,j \leq 7$
satisfies the assumptions of Lemma \ref{gdec}.
But this means that the the inverse 
$(I+A)^{-1}=I+B=I+\sum_{n=1}^\infty (-A)^n$ will be a sum of the identity operator $I$
and terms which all have the properties of \mbox{Lemma \ref{gdec}}.
\end{proof}

Let us now define the operator
$$D=x^j \frac{\partial}{\partial x^j}.$$
The operator $D$ has the important property that if $f(x,y,z)$ is a homogeneous
polynomial of order $n$, $D(f)=n D(f)$. This is true, whether $f$ is a scalar or tensor valued.
Because of this property, it easily follows that
\begin{lemma}
If $A_{ij}$ has the properties of Lemma \ref{gdec}, the same properties hold for $D(A_{ij})$.
\end{lemma}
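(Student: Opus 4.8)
The plan is to check, one at a time, that $D(A_{ij})$ inherits each of the three defining properties appearing in Lemma~\ref{gdec}: symmetry, $r^2\,|\,\eta^{ij}A_{ij}$, and $A_{ij}x^i=0$. The only tools needed are that $D=x^k\partial_k$ acts componentwise and is a derivation (the Leibniz rule), together with the elementary identities $D(x^i)=x^i$ and $D(r^2)=2r^2$; equivalently, one may work degree-by-degree on the homogeneous pieces $A_{ij}^{[n]}$, on which $D$ acts as multiplication by $n$.

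Symmetry is immediate, since $D$ acts separately on each component, so $A_{ij}=A_{ji}$ gives $D(A_{ij})=D(A_{ji})$. For the trace, I would use that $\eta^{ij}$ is constant, so $\eta^{ij}D(A_{ij})=D(\eta^{ij}A_{ij})=D([A])$; writing $[A]=r^2 g$ as permitted by hypothesis, the Leibniz rule gives $D(r^2 g)=2r^2 g+r^2 D(g)=r^2\bigl(2g+D(g)\bigr)$, so $r^2$ still divides $\eta^{ij}D(A_{ij})$. For the radial condition I would apply $D$ to the identity $A_{ij}x^i=0$ and use $D(x^i)=x^i$:
$$0=D(A_{ij}x^i)=D(A_{ij})\,x^i+A_{ij}D(x^i)=D(A_{ij})\,x^i+A_{ij}x^i=D(A_{ij})\,x^i,$$
whence $D(A_{ij})x^i=0$, and by symmetry also $D(A_{ij})x^j=0$.

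There is no genuine obstacle in this statement: every step is a one-line application of the Leibniz rule for $D$. The only point worth a moment's attention is the bookkeeping in the last step — making sure the derivation property of $D$ is applied correctly when one index of $A_{ij}$ is already transvected with $x^i$ — which is exactly the displayed computation. Since all quantities are treated as formal power series, each manipulation above is to be read as a formal identity on homogeneous components, where the Leibniz rule of course holds.
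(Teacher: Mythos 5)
Your proof is correct and matches the paper's intended argument: the paper leaves this as an immediate consequence of the fact that $D$ acts on each homogeneous piece as multiplication by its degree, which is exactly the degree-by-degree reading you note, and your explicit Leibniz-rule verification of the three conditions (symmetry, $r^2\,|\,\eta^{ij}D(A_{ij})$, and $D(A_{ij})x^i=0$) is a sound and complete way to spell it out.
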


Before we state and prove Lemma \ref{r2Rab}, we note that the definition of the Christoffel symbols ${\Gamma^k}_{ij}$ implies that in normal
coordinates,
$ 
2x^j {\Gamma^k}_{ij} =\hat h^{km} D( \hat h_{im}),
$
 and in particular 
$ 
2x^j {\Gamma^k}_{kj}=\hat h^{jk}D(\hat h_{jk})=D(\ln|\hat h|),
$
where $|\hat h|$ denotes the determinant of $\hat h_{ij}$.

\begin{lemma} \label{r2Rab}
Suppose that $(x^1,x^2,x^3)$ are normal coordinates on $(\hat V,\hat h_{ab})$. 
Then $r^2 | \eta^{ij}(\hat h_{ij}-\eta_{ij}) \Rightarrow r^2 | \mathbf{r}^{a}\mathbf{r}^{b}\hat R_{ab}$.
\end{lemma}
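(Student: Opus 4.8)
My plan is to compute $\mathbf{r}^a\mathbf{r}^b\hat R_{ab}=x^ix^j\hat R_{ij}$ explicitly in the normal coordinates and show it is divisible by $r^2$. Write $A_{ij}=\hat h_{ij}-\eta_{ij}$; by Lemma \ref{normal} and the hypothesis, $A$ satisfies the assumptions of Lemma \ref{gdec}, so $A=\sum_{m=1}^{7}a_mA_m$, and by Lemma \ref{hhat} the same holds for $\hat h^{-1}-\eta$. Besides the two Christoffel identities quoted just before the statement I will use that, in normal coordinates, $x^ix^j{\Gamma^k}_{ij}=0$: this follows by contracting $2x^j{\Gamma^k}_{ij}=\hat h^{km}D(\hat h_{im})$ with $x^i$ and using $x^i\hat h_{im}=x_m$, $D(x_m)=x_m$, $D(x^i)=x^i$. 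Starting from $\hat R_{ij}=\partial_k{\Gamma^k}_{ij}-\partial_j{\Gamma^k}_{ki}+{\Gamma^k}_{kl}{\Gamma^l}_{ij}-{\Gamma^k}_{jl}{\Gamma^l}_{ki}$ and transvecting with $x^ix^j$: the ${\Gamma^l}_{ij}$ term dies by the geodesic identity; differentiating $x^ix^j{\Gamma^k}_{ij}=0$ turns the $\partial_k{\Gamma^k}_{ij}$ term into $-D(\ln|\hat h|)$; the $\partial_j{\Gamma^k}_{ki}$ term is handled with the operator identity $D\partial_i=\partial_iD-\partial_i$ together with ${\Gamma^k}_{ki}=\partial_i\ln\sqrt{|\hat h|}$; and the last term becomes $-\frac14\operatorname{tr}\!\big((\hat h^{-1}D\hat h)^2\big)$ via $2x^j{\Gamma^k}_{jl}=\hat h^{km}D(\hat h_{lm})$. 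Using $\operatorname{tr}(\hat h^{-1}D\hat h)=D\ln|\hat h|$ once more, and the Leibniz rule $D(\hat h^{-1})=-\hat h^{-1}(D\hat h)\hat h^{-1}$ to trade the quadratic term for $\operatorname{tr}(\hat h^{-1}D^2\hat h)$, I expect the computation to collapse to
\[
x^ix^j\hat R_{ij}=-\frac12\,D(\ln|\hat h|)-\frac14\,D^2(\ln|\hat h|)-\frac14\operatorname{tr}\!\big(\hat h^{-1}D^2\hat h\big).
\]

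It then remains to see that each term on the right is divisible by $r^2$. Since $A_{ij}x^i=0$, the matrix $A$ is singular, so $\det\hat h=\det(I+A)=1+[A]+\frac12\big([A]^2-\operatorname{tr}(A^2)\big)$ exactly; as $[A]\equiv0\pmod{r^2}$ this reduces matters to $\operatorname{tr}(A^2)\equiv0\pmod{r^2}$, and $\operatorname{tr}(A^2)=\sum_{m,n}a_ma_n\operatorname{tr}(A_mA_n)$. Likewise, since $D$ preserves the form of Lemma \ref{gdec} (hence so does $D^2$), $D^2\hat h=\sum_m c_mA_m$, and $\operatorname{tr}(\hat h^{-1}D^2\hat h)=\hat h^{ij}D^2(A_{ij})$ expands into a combination of the $\operatorname{tr}(A_m)$ and the $\operatorname{tr}(A_mA_n)$. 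Now $\operatorname{tr}(A_m)$ is, by inspection of the matrices in Lemma \ref{gdec}, $r^2$ times a polynomial of degree $\le 1$, and $r^2\,|\,\operatorname{tr}(A_mA_n)$ is exactly the finite explicit verification already invoked in the proof of Lemma \ref{hhat} (the $A_mA_n$ satisfy its assumptions, in particular their traces are divisible by $r^2$). Consequently $\det\hat h\equiv1\pmod{r^2}$, so $\ln|\hat h|=r^2g$ for some formally analytic $g$; since $D(r^2g)=r^2(2g+Dg)$, both $D(\ln|\hat h|)$ and $D^2(\ln|\hat h|)$ are divisible by $r^2$, and so is $\operatorname{tr}(\hat h^{-1}D^2\hat h)$. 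Hence $r^2\,|\,x^ix^j\hat R_{ij}=\mathbf{r}^a\mathbf{r}^b\hat R_{ab}$.

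(As a cross‑check one may instead reach the displayed identity from the Bochner formula applied to $f=r^2/2$, using $\hat h^{ij}\hat D_jf=x^i$ and $\hat h^{ij}\hat D_if\,\hat D_jf=r^2$; I would present the direct Christoffel computation since it stays inside the bookkeeping of the paper and needs no external identity.)

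The main obstacle is the first step: carrying out the non‑tensorial normal‑coordinate reduction of $x^ix^j\hat R_{ij}$ carefully enough that the hypothesis enters only through $\det\hat h$ and through traces of products of \emph{at most two} of the $A_m$. Once the identity is organised in that form the divisibility is immediate from Lemmas \ref{gdec} and \ref{hhat} together with the two elementary trace facts; the only arithmetic input not already present in the paper is the bare check $r^2\,|\,\operatorname{tr}(A_mA_n)$, which is the same computation used to prove Lemma \ref{hhat}.
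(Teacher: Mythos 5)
Your proof is correct and follows essentially the same route as the paper: the normal-coordinate Christoffel identities reduce $x^ix^j\hat R_{ij}$ to an expression in $D(\ln|\hat h|)$, $D^2(\ln|\hat h|)$ and $\operatorname{tr}(\hat h^{-1}D^2\hat h)$ --- your displayed identity is algebraically identical to the paper's equation (\ref{r2rab}) --- and the divisibility by $r^2$ then follows from the closure of the Lemma \ref{gdec} class under products and under $D$. The only cosmetic difference is that you conclude $r^2\mid\ln|\hat h|$ via the explicit $3\times 3$ determinant expansion with $\det A=0$, where the paper invokes its shorthand ``$B$-calculus'' and reads off $[B]$; both rest on the same finite check that $r^2\mid\operatorname{tr}(A_mA_n)$ already used for Lemma \ref{hhat}.
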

\begin{proof}
Since $(x^1,x^2,x^3)$ are normal coordinates, ${\Gamma^i}_{kl}x^k x^l=0$.
Using this ( and the fact that $\hat h_{ij}x^i=\eta_{ij} x^i$ ) the definition of $\hat R_{ij}$ gives
$$
x^i x^j R_{ij}=-x^i  \frac{\partial}{\partial x^i}(x^j { \Gamma^k}_{k j})
-x^j{ \Gamma^k}_{k j} -x^i x^j
 { \Gamma^m}_{k j}{ \Gamma^k}_{m i}.
 $$
Some further manipulation leads to
 \begin{equation} \label{r2rab}
 4x^i x^j R_{i j} 
  =  -2\hat h^{k m}D(\hat h_{k m})-D(\hat h^{k m})D(\hat h_{k m}) -2\hat h^{k m} D^2(\hat h_{k m}) 
\end{equation}
With the notation that $B$ stands for any matrix satisfying the properties of Lemma
\ref{gdec}, which means that $D^2(B)=D(B)=B$, $I\cdot B=B$, $B\cdot B=B$, $B+B=B$, 
(\ref{r2rab}) reads
\begin{equation*}\nonumber
4x^i x^j R_{i j} =[ -2 (I+B) B-B \cdot B-2 (I+B) B]=[B].
\end{equation*}
Since $[B]$ is divisible by $r^2$, so is $4x^i x^j R_{i j} $.
\end{proof}

We will now return to the original recursion (\ref{orgrec}). By combining the previous lemmas,
we have the following theorem, which allows for the direct connection between the moments
$ P_{a_1 \ldots a_n}$ and the potential $\hat \phi$.
\begin{theorem} \label{r2thm}
Let $(x^1,x^2,x^3)$ be normal coordinates on $\hat V$, $\hat h_{ij}-\eta_{ij}$ satisfy
the properties of Lemma \ref{gdec}, and let $ P_{a_1 \ldots a_n}$
 be defined by the recursion (\ref{orgrec}).
Then
$$x^{i_1} \ldots x^{i_n} P_{i_1 \ldots i_n} \equiv
x^{i_1} \ldots x^{i_n} \partial_{i_1}\cdots \partial_{i_n}P \pmod {r^2}$$
\end{theorem}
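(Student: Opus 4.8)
\emph{Proof strategy.} The plan is to argue by induction on $n$. The base cases $n=0$ and $n=1$ are immediate: for $n=0$ the recursion (\ref{orgrec}) reads $P=P$, and for $n=1$ the $\hat R$-term is absent while $C[\hat D_{a_1}P]=\hat D_{a_1}P=\partial_{a_1}P$. For the inductive step it is convenient to write $\langle S\rangle:=x^{i_1}\cdots x^{i_k}S_{i_1\ldots i_k}$ for the radial transvection of a symmetric rank-$k$ field $S$; the assertion at level $m$ is then $\langle P_{i_1\ldots i_m}\rangle\equiv\langle\partial_{i_1}\cdots\partial_{i_m}P\rangle\pmod{r^2}$, and I assume it at level $n-1$.

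First I would remove the operator $C[\,\cdot\,]$ from the right-hand side of (\ref{orgrec}). Since $C[\,\cdot\,]$ is linear and (\ref{ctv}) gives $x^{i_1}\cdots x^{i_n}C[T_{i_1\ldots i_n}]\equiv x^{i_1}\cdots x^{i_n}T_{i_1\ldots i_n}\pmod{r^2}$ for every tensor field $T$, this turns (\ref{orgrec}) into
\[
\langle P_{i_1\ldots i_n}\rangle\equiv x^{i_1}\cdots x^{i_n}\hat D_{i_1}P_{i_2\ldots i_n}-\frac{(n-1)(2n-3)}{2}\bigl(x^{i_1}x^{i_2}\hat R_{i_1 i_2}\bigr)\bigl(x^{i_3}\cdots x^{i_n}P_{i_3\ldots i_n}\bigr)\pmod{r^2}.
\]
The curvature term vanishes modulo $r^2$: by hypothesis $\hat h_{ij}-\eta_{ij}$ has the properties of Lemma \ref{gdec}, in particular $r^2\mid\eta^{ij}(\hat h_{ij}-\eta_{ij})$, so Lemma \ref{r2Rab} gives $r^2\mid x^{i_1}x^{i_2}\hat R_{i_1 i_2}$ and the whole term is divisible by $r^2$. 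Thus only $x^{i_1}\cdots x^{i_n}\hat D_{i_1}P_{i_2\ldots i_n}$ survives.

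Next I would pass from the covariant to the coordinate derivative and then peel off the outermost derivative. Expanding $\hat D_{i_1}P_{i_2\ldots i_n}$ in Christoffel symbols, each correction is contracted against $x^{i_1}x^{i_k}$, and normality of the coordinates gives $\Gamma^{m}{}_{ab}x^a x^b=0$, so all corrections vanish identically: $x^{i_1}\cdots x^{i_n}\hat D_{i_1}P_{i_2\ldots i_n}=x^{i_1}\cdots x^{i_n}\partial_{i_1}P_{i_2\ldots i_n}$. Here $\partial_{i_1}$ sits outside the remaining contraction, so $x^{i_1}$ cannot simply be commuted past it; instead I would use the bookkeeping identity, valid for any symmetric rank-$(n-1)$ field $S$,
\[
x^{i_1}\cdots x^{i_n}\partial_{i_1}S_{i_2\ldots i_n}=D\langle S\rangle-(n-1)\langle S\rangle=:L\bigl(\langle S\rangle\bigr),
\]
which follows from $D=x^j\partial_j$, the Leibniz rule and $x^{i_1}\partial_{i_1}(x^{i_2}\cdots x^{i_n})=(n-1)x^{i_2}\cdots x^{i_n}$. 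Applying it with $S=P_{i_2\ldots i_n}$, and also (as an exact identity) with $S=\partial_{i_2}\cdots\partial_{i_n}P$, gives $\langle P_{i_1\ldots i_n}\rangle\equiv L(\langle P_{i_2\ldots i_n}\rangle)\pmod{r^2}$ and $\langle\partial_{i_1}\cdots\partial_{i_n}P\rangle=L(\langle\partial_{i_2}\cdots\partial_{i_n}P\rangle)$.

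Finally I would invoke the induction hypothesis, writing $\langle P_{i_2\ldots i_n}\rangle-\langle\partial_{i_2}\cdots\partial_{i_n}P\rangle=r^2 h$ with $h$ formally analytic. The only thing still to check is that $L$ maps the ideal generated by $r^2$ into itself: since $D(r^2)=2r^2$, the Leibniz rule gives $L(r^2 h)=r^2\bigl(D(h)+(3-n)h\bigr)$, again $r^2$ times a formally analytic function. Hence $\langle P_{i_1\ldots i_n}\rangle\equiv L(\langle P_{i_2\ldots i_n}\rangle)\equiv L(\langle\partial_{i_2}\cdots\partial_{i_n}P\rangle)=\langle\partial_{i_1}\cdots\partial_{i_n}P\rangle\pmod{r^2}$, which is the claim at level $n$. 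I expect the main obstacle to be purely in the bookkeeping: making sure that discarding the $C[\,\cdot\,]$-, curvature- and Christoffel-pieces really costs nothing beyond $r^2$-terms (controlled by (\ref{ctv}), Lemma \ref{r2Rab} and $\Gamma^{m}{}_{ab}x^a x^b=0$), and recognising that because the derivative in (\ref{orgrec}) lies outside the radial contraction, the true content of the step is the shift operator $L$ together with its compatibility with the ideal $(r^2)$.
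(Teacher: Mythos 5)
Your proof is correct, and it reaches the conclusion by a route that is organized genuinely differently from the paper's, even though the three key ingredients are identical: stripping $C[\,\cdot\,]$ via $x^i x^j \hat h_{ij}=r^2$ (i.e., (\ref{ctv})), killing the curvature term via Lemma \ref{r2Rab}, and killing the Christoffel corrections via ${\Gamma^m}_{ij}x^i x^j=0$. The paper keeps the radial contraction at rank $n$ throughout and unfolds the recursion all the way down to $P$, first establishing $x^{i_1}\cdots x^{i_n}P_{i_1\ldots i_n}\equiv x^{i_1}\cdots x^{i_n}\hat D_{i_1}\cdots\hat D_{i_n}P \pmod{r^2}$ and only then converting covariant to partial derivatives; the price of that organization is that at each stage the $C[\,\cdot\,]$- and $\hat R$-corrections sit underneath accumulated derivatives, so the paper must additionally verify, e.g., $x^i x^j x^k \hat D_i\hat R_{jk}\equiv 0\pmod{r^2}$ and $x^{i_1}\cdots x^{i_n}\hat h_{i_2 i_3}\hat D_{i_1}S_{i_4\ldots i_n}\equiv 0\pmod{r^2}$. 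Your induction on the rank avoids exactly this: at each step $C[\,\cdot\,]$ and $\hat R_{i_1 i_2}$ appear only in the outermost position, where (\ref{ctv}) and Lemma \ref{r2Rab} apply verbatim. What you pay instead is the need for the shift operator $L(g)=D(g)-(n-1)g$ and the one-line check that $L$ maps the ideal $(r^2)$ into itself, which you carry out correctly via $D(r^2h)=2r^2h+r^2D(h)$; your bookkeeping identity itself is a correct instance of Euler's relation for the homogeneous factor $x^{i_2}\cdots x^{i_n}$. Both arguments are complete; yours is arguably the tidier one, and it makes explicit a point the paper leaves to ``proceeding in the same way'', namely how the congruences propagate when the new derivative stands outside the remaining contraction.
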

\begin{proof}
Put $c_n=\frac{n(2n-1)}{2}$. Then
\begin{equation*} \nonumber
\begin{array}{ccl}
x^{i_1} \ldots x^{i_n} P_{i_1 \ldots i_n} &
 = & x^{i_1} \ldots x^{i_n}C[\hat D_{i_1}P_{i_2 \ldots i_n}-
c_{n-1}\hat R_{i_1 i_2}P_{i_3 \ldots i_n}]\\
& = & x^{i_1} \ldots x^{i_n}
C[\hat D_{i_1}P_{i_2 \ldots i_n}]-
c_{n-1}x^{i_1} \ldots x^{i_n}C[\hat R_{i_1 i_2}P_{i_3 \ldots i_n}]\\
& \equiv & x^{i_1} \ldots x^{i_n}
\hat D_{i_1}P_{i_2 \ldots i_n}-
c_{n-1}x^{i_1} \ldots x^{i_n}\hat R_{i_1 i_2}P_{i_3 \ldots i_n} \\
& \equiv & x^{i_1} \ldots x^{i_n}
\hat D_{i_1}P_{i_2 \ldots i_n} \pmod {r^2}\\
\end{array}
\end{equation*}
Now, 
\begin{equation*} \nonumber
x^{i_1} \ldots x^{i_n}\hat D_{i_1}P_{i_2 \ldots i_n}=x^{i_1} \ldots x^{i_n}\hat D_{i_1}C[\hat D_{i_2}P_{i_3 \ldots i_n}-
c_{n-2} \hat R_{i_2 i_3}P_{i_4 \ldots i_n}]\\
\end{equation*}
and for some tensor $S_{i_4 \ldots i_n}$
\begin{equation*} \nonumber
\hspace*{-10mm}\begin{array}{rl}
x^{i_1} \ldots x^{i_n}\hat D_{i_1}
C[\hat D_{i_2}P_{i_3 \ldots i_n}]&=
x^{i_1} \ldots x^{i_n}\hat D_{i_1}(\hat D_{(i_2}P_{i_3 \ldots i_n)}-\hat h_{(i_2 i_3}S_{i_4 \ldots i_n)})\\
&=x^{i_1} \ldots x^{i_n}\hat D_{i_1}\hat D_{i_2}P_{i_3 \ldots i_n}-
x^{i_1} \ldots x^{i_n}\hat h_{i_2 i_3}\hat D_{i_1}S_{i_4 \ldots i_n}\\
&\equiv x^{i_1} \ldots x^{i_n}\hat D_{i_1}\hat D_{i_2}P_{i_3 \ldots i_n}   \pmod {r^2},
\end{array}
\end{equation*}
while
\begin{equation*} \nonumber
\begin{array}{r}
x^{i_1} \ldots x^{i_n}\hat D_{i_1}C[\hat R_{i_2 i_3}P_{i_4 \ldots i_n}]
\equiv x^{i_1} \ldots x^{i_n}\hat D_{i_1}(\hat R_{i_2 i_3}P_{i_4 \ldots i_n}) \\
\equiv x^{i_1} \ldots x^{i_n}(\hat D_{i_1}\hat R_{i_2 i_3}) P_{i_4 \ldots i_n} \equiv 0  \pmod {r^2}\\
\end{array}
\end{equation*}
where, in the last step, we have used that for some function $f$, $x^i x^j x^k \hat D_i \hat  R_{jk}=
x^i \hat D_i (x^j x^k \hat  R_{jk})-\hat  R_{jk} x^i \hat D_i (x^j x^k)=
D(r^2 f)-\hat  R_{jk} x^i \partial_i (x^j x^k)
+ R_{jk} x^i {\Gamma^j}_{im}x^m x^k+ R_{jk} x^i {\Gamma^k}_{im}x^j x^m=
2r^2 D(f)+r^2 D(f)-2\hat  R_{jk}x^j x^k \equiv 0   \pmod {r^2}$.
Thus,
$$
x^{i_1} \ldots x^{i_n} P_{i_1 \ldots i_n} \equiv 
x^{i_1} \ldots x^{i_n}\hat D_{i_1}\hat D_{i_2}P_{i_3 \ldots i_n}   \pmod {r^2}.
$$
Proceeding in the same way, we find that
$$x^{i_1} \ldots x^{i_n} P_{i_1 \ldots i_n} \equiv
x^{i_1} \ldots x^{i_n} \hat D_{i_1}\cdots \hat D_{i_n}P \pmod {r^2}.$$
Moreover,
\begin{equation*} \nonumber
\hspace*{-10mm}\begin{array}{l}
x^{i_1} \ldots x^{i_n}\hat D_{i_1}\ldots \hat D_{i_n}P  \\
=
x^{i_1} \ldots x^{i_n}\partial_{i_1}\hat D_{i_2}\ldots \hat D_{i_n}P  -
x^{i_1} \ldots x^{i_n}\sum_{m=2}^n {\Gamma^m}_{i_1 i_m}
\hat D_{i_2}\ldots \hat D_{m} \cdots \hat D_{i_n}P\\
=
x^{i_1} \ldots x^{i_n}\partial_{i_1}\hat D_{i_2}\ldots \hat D_{i_n}P\\
=
x^{i_1} \ldots x^{i_n}\partial_{i_1}\partial_{i_2}\hat D_{i_3}\ldots \hat D_{i_n}P  -
x^{i_1} \ldots x^{i_n}\partial_{i_1}\sum_{m=3}^n {\Gamma^m}_{i_1 i_m}
\hat D_{i_3}\ldots \hat D_{m} \ldots \hat D_{i_n}P\\
=
x^{i_1} \ldots x^{i_n}\partial_{i_1}\partial_{i_2}\hat D_{i_3}\cdots \hat D_{i_n}P 
\end{array}
\end{equation*}
where, in the last step, we have used 
$x^i x^j x^k \partial_i {\Gamma^m}_{jk}=x^i \partial_i ( x^j x^k{\Gamma^m}_{jk})$\newline $-
x^i \partial_i ( x^j x^k){\Gamma^m}_{jk}=0$.
Again we can proceed and get the statement of the theorem.
\end{proof}
Remark. 
This theorem is comparable to Theorem 7 in \cite{bahe}. The difference lies in the presentation
since \cite{bahe} uses complex vectors, and the effect is that the equivalence $\pmod {r^2}$ here becomes
equality in \cite{bahe} due to the fact that $r^2=0$ along the null vectors there. Also, although
\cite{bahe} uses null vectors, they are  'complexified unit vectors', while the statement here
uses vectors $x^i$ which are not normalized. However, in the statement of Theorem \ref{r2thm},
one can (as commented before) replace each $x^k$ by the direction dependent unit vector
$\rho^k=x^k/r$. 

Comparing with the definition of $P=\hat \phi=\sum_{|\alpha|\geq 0}\frac{\mathbf{r}^\alpha}{\alpha!}
P^0_\alpha $ in the beginning of the section, Taylor's theorem together with Theorem \ref{r2thm}
and the remark after Lemma \ref{normal}
tells us that the multipoles produced will be precisely the desired multipoles $P^0_\alpha$,
and we may note that by the arguments presented, the multipoles will be unaffected by a change
$\hat \phi \to \hat \phi +r^2 \gamma$.

Thus, by specifying $\hat \phi$, and by requiring that the metric $\hat h_{ij}$ be as in
Theorem \nolinebreak \ref{r2thm}, the recursion (\ref{orgrec}) produces the prescribed multipoles
moments. The issue is now whether (\ref{confeq}) produces a 
power series for such a $\hat h_{ij}$, and furthermore if this series converges.

\subsection{The conformal field equations} \label{confeqsec}
In this section, we will address the conformal field equations (\ref{confeq}). 
The requirement that these equations extend smoothly to $i^0$ will put further restrictions
on the form of the metric, and also involve the conformal factor $\Omega$.

We consider $\hat \phi$ as fixed and real analytic with respect to the normal coordinates
$(x,y,z)$ in a neighbourhood of $i^0=(0,0,0)$. 
From Section \ref{tmoments}, we have that ($\omega=0$)
$$
\phi=\frac{\lambda^2-1}{4\lambda}, \quad P= \hat \phi=\phi/\sqrt \Omega,
$$
where $\lambda$, which is the norm of the Killing vector, and the conformal factor $\Omega$ appear in 
(\ref{confeq}). The conformal factor $\Omega$ which must satisfy the conditions in Section
\ref{tmoments}, is not unique. Rather, we have the freedom $\Omega \to \Omega e^{\kappa}$,
where $\kappa$ is a formal power series which vanish at $i^0$. Also, 
$\hat D_a \kappa (0)$ is known to mix the moments, corresponding to a 'translation' in the classical sense. It is therefore natural to demand $\hat D_a \kappa (0)=0$.
Solving for $\lambda$, we thus get
$$\lambda=\sqrt{1+4 \Omega \hat \phi^2}+2 \sqrt{\Omega}\hat \phi,\quad
\Omega=r^2 e^{2 \kappa}, \kappa(0)=\hat D_a \kappa(0)=0.
$$
It is important
to note that $\lambda$ is not even formally smooth, i.e., we cannot regard $\lambda$
as a formal power series. This is of course due to the occurrence of $\sqrt{\Omega}=
r e^{\kappa}$, where $r=\sqrt{x^2+y^2+z^2}$ is non-regular at $i^0$. In effect, this
will mean that each of the equations in (\ref{confeq}) will split into two, seemingly doubling 
the number of equations.

To address (\ref{confeq}) we split 
$\ln (\sqrt{\lambda} \Omega)=\frac{1}{2}\ln \lambda+\ln(r^2 e^{2 \kappa})$ and note the
convenient  relation
$$
\frac{1}{2}\hat D_a \ln \lambda=\frac{\hat D_a(r \hat \phi e^\kappa)}{\sqrt{1+4 r^2 \hat \phi^2 e^{2\kappa}}}
$$
Inserted in (\ref{confeq}), this gives the equations
\begin{equation} \label{rab1}
\begin{array}{l}
\hat R_{ab}+\hat D_a \hat D_b \ln(r^2 e^{2 \kappa})+\hat D_a \ln(r^2 e^{2 \kappa})\hat D_b \ln(r^2 e^{2 \kappa})\\
+\hat h_{ab} \hat h^{de} \hat D_d \hat D_e \ln(r^2 e^{2 \kappa})
-\hat h_{ab}\hat h^{de}\hat D_d \ln(r^2 e^{2 \kappa})\hat D_e\ln(r^2 e^{2 \kappa})\\
+\hat h_{ab} \hat h^{de} \hat D_d \frac{\hat D_e(r \hat \phi e^\kappa)}{\sqrt{1+4 r^2 \hat \phi^2 e^{2\kappa}}}
 -\hat h_{ab}\hat h^{de}\frac{\hat D_d(r \hat \phi e^\kappa)\hat D_e\ln(r^2 e^{2 \kappa})}{\sqrt{1+4 r^2 \hat \phi^2 e^{2\kappa}}}\\
-2 \frac{\hat D_a(r \hat \phi e^\kappa)\hat D_b(r \hat \phi e^\kappa)}{1+4 r^2 \hat \phi^2 e^{2\kappa}}=0.
\end{array}
\end{equation}
and
\begin{equation} \label{r1}
\begin{array}{l}
\hat R+4 \hat h^{ab}\hat D_a \hat D_b \ln(r^2 e^{2 \kappa})
+4 \hat h^{ab}\hat D_a \frac{\hat D_b (r \hat \phi e^\kappa)}{\sqrt{1+4 r^2 \hat \phi^2 e^{2\kappa}}}\\
-2 \hat h^{ab}\frac{\hat D_a(r \hat \phi e^\kappa)\hat D_b(r \hat \phi e^\kappa)}{1+4 r^2 \hat \phi^2 e^{2\kappa}}
-4 \hat h^{ab}\frac{\hat D_a(r \hat \phi e^\kappa)\hat D_b \ln(r^2 e^{2 \kappa})}{\sqrt{1+4 r^2 \hat \phi^2 e^{2\kappa}}}
\\
-2 \hat h^{ab}\hat D_a \ln(r^2 e^{2 \kappa})\hat D_b \ln(r^2 e^{2 \kappa})=0.
\end{array}
\end{equation}
To continue, we need some relations which hold in our specialized coordinate system.
Each of the following statements are straightforward to check.
\begin{equation} \label{prop} \hspace*{-10mm}
\begin{array}{l}
\hat D_i \hat D_j r^2=2\hat h_{ij}+D(\hat h_{ij}), 
 \qquad
\Delta r^2=\hat h^{ij}\hat D_i \hat D_j r^2=6+\hat h^{ij} D(\hat h_{ij}),\ \\
\hat h^{ij} D(\hat h_{ij})=D(\ln |\hat h|) \equiv 0 \pmod {r^2}, 
  \qquad
\forall f: \hat h^{jij} \hat D_i r^2 \hat D_j f = 2 D(f), \\
\hat h^{ij} \hat D_i r \hat D_j r =1.
\end{array}
\end{equation}
We will now split (\ref{rab1}) and (\ref{r1}) into their regular and non-regular parts.
By a regular function we mean a function $f=f(x,y,z)$ such that $r^{2n} f$ is (formally)
real analytic for some integer $n \geq 0$. Note that if $n \geq 1$ is required, the regular
function is singular. Similarly, a function $f$ is non-regular if $r^{2n-1} f$ is (formally)
real analytic for some integer $n \geq 0$.  Again, if $n \geq 1$ is required, the function is non-regular
and singular. This division is due to the non-regularity of $\lambda$, and all functions
or tensor fields can be written as $f =f_1+r f_2$ where $f_1$ and $f_2$ are regular. (Cf. Lemma 2
of \cite{bahe}.)  In particular, $f=0$ requires $f_1=f_2=0$.
We note that $\hat D_a r=\frac{1}{2r}\hat D_a r^2$, which is non-regular.
On the other hand,
$\hat D_a(r \hat \phi e^\kappa)\hat D_b(r \hat \phi e^\kappa)=
[\frac{1}{2r}\hat D_a ( r^2 )\hat \phi e^\kappa+r \hat D_a(\hat \phi e^\kappa)]
[\frac{1}{2r}\hat D_b (r^2 )\hat \phi e^\kappa+r \hat D_b(\hat \phi e^\kappa)]=
\frac{\hat \phi^2 e^{ 2\kappa}}{4r^2}\hat D_a  r^2 \hat D_b  r^2 +
r^2 \hat D_a(\hat \phi e^\kappa) \hat D_b(\hat \phi e^\kappa)+
 \frac{1}{2}\hat D_{(a}(\hat \phi e^\kappa)\hat D_{b)} ( r^2 )\hat \phi e^\kappa 
$
which is regular (but singular).
Therefore, the regular part of (\ref{rab1}) is
\begin{equation} \label{rab2a}
\begin{array}{l}
\hat R_{ab}+\hat D_a \hat D_b \ln(r^2 e^{2 \kappa})+\hat D_a \ln(r^2 e^{2 \kappa})\hat D_b \ln(r^2 e^{2 \kappa})\\
+\hat h_{ab} \hat h^{de} \hat D_d \hat D_e \ln(r^2 e^{2 \kappa})
-\hat h_{ab}\hat h^{de}\hat D_d \ln(r^2 e^{2 \kappa})\hat D_e\ln(r^2 e^{2 \kappa})\\
-2 \frac{\hat D_a(r \hat \phi e^\kappa)\hat D_b(r \hat \phi e^\kappa)}{1+4 r^2 \hat \phi^2 e^{2\kappa}}=0.
\end{array}
\end{equation}
while the non-regular part gives
\begin{equation} \label{rab2b}
\hat h_{ab} \hat h^{de} \hat D_d \frac{\hat D_e(r \hat \phi e^\kappa)}{\sqrt{1+4 r^2 \hat \phi^2 e^{2\kappa}}}
 -\hat h_{ab}\hat h^{de}\frac{\hat D_d(r \hat \phi e^\kappa)\hat D_e\ln(r^2 e^{2 \kappa})}{\sqrt{1+4 r^2 \hat \phi^2 e^{2\kappa}}}=0.
\end{equation}
Similarly, (\ref{r1}) splits into the two equations
\begin{equation} \label{r2a}
\begin{array}{l}
\hat R+4 \hat h^{ab}\hat D_a \hat D_b \ln(r^2 e^{2 \kappa})
-2 \hat h^{ab}\frac{\hat D_a(r \hat \phi e^\kappa)\hat D_b(r \hat \phi e^\kappa)}{1+4 r^2 \hat \phi^2 e^{2\kappa}}\\
-2 \hat h^{ab}\hat D_a \ln(r^2 e^{2 \kappa})\hat D_b \ln(r^2 e^{2 \kappa})=0
\end{array}
\end{equation}
and (dividing by 4)
\begin{equation} \label{r2b}
\hat h^{ab}\hat D_a \frac{\hat D_b (r \hat \phi e^\kappa)}{\sqrt{1+4 r^2 \hat \phi^2 e^{2\kappa}}}
-\hat h^{ab}\frac{\hat D_a(r \hat \phi e^\kappa)\hat D_b \ln(r^2 e^{2 \kappa})}{\sqrt{1+4 r^2 \hat \phi^2 e^{2\kappa}}}
=0.
\end{equation}
Equations (\ref{rab2b}) and (\ref{r2a}) are redundant, since (\ref{rab2b}) is a multiple of 
(\ref{r2b}), while (\ref{r2a}) is the trace of (\ref{rab2a}). Since (\ref{r2b}) is non-regular,
$r \cdot \!\!$ (\ref{r2b}) is regular, and by using 
$r \hat D_i \hat D_j r=\frac{1}{2}\hat D_i \hat D_j r^2-\frac{1}{4r^2} D_i r^2 D_j r^2$,
it is also seen that $r \cdot \!\!$ (\ref{r2b})  is smooth at $i^0$. We therefore look at the singular 
part\footnote{In practise, skipping smooth terms} of
(\ref{rab2a}), which is found to be
\begin{equation} \label{smooth1}
\begin{array}{l}
\frac{1}{r^2}
\hat D_i \hat D_j r^2+\frac{2}{r^2}\hat D_i r^2 \hat D_j \kappa+
\frac{2}{r^2}\hat D_i \kappa \hat D_j r^2\\
+\hat h_{ij}[\frac{1}{r^2}\Delta r^2-\frac{8}{r^2}-
\frac{8}{r^2}D(\kappa)]-\frac{\hat \phi^2 e^{2\kappa}}{2r^2}\hat D_i r^2 \hat D_j r^2. 
\end{array}
\end{equation}
This expression must be smooth, and by taking its trace, this says that
$
\frac{4}{r^2}\Delta r^2-\frac{24}{r^2}-\frac{16}{r^2} D(\kappa)
$
must be smooth. However, from (\ref{prop}), this implies that
$\frac{D(\kappa)}{r^2}$ is smooth, and therefore that $\kappa=C+r^2\chi$,
where $C$ is a constant, and $\chi$ is smooth. From $\kappa(i^0)=0$ we thus infer that
\begin{equation} \label{chi}
\kappa=r^2 \chi
\end{equation}
for some smooth function $\chi$. Inserting (\ref{chi}) in (\ref{smooth1}),
some simplification shows that
\begin{equation} \label{smooth2}
\frac{1}{r^2}[2 D(\hat h_{ij})+\hat D_i r^2 \hat D_j r^2 (8 \chi -\hat \phi^2)]
\end{equation}
must be smooth. This equation, as it stands, has many solutions, and we must choose a solution
which still allows the metric $\hat h_{ij}$ and the function $\kappa=r^2 \chi$ to solve the equations
(\ref{rab2a}) and (\ref{r2b}).  To do this, we start by imposing slightly more conditions on
$\hat h_{ij}$, so that $\hat h_{ij}$ will still satisfy the conditions of Lemma
\ref{hhat}, but in a slightly restricted form. Let $A_1$ be the matrix from
Lemma \ref{gdec}. We will then require that $\hat h_{ij}$ takes the form
\begin{equation}\label{met2}
\hat h_{ij}=\eta_{ij}+f (x,y,z)(A_1)_{ij}+r^2 \gamma_{ij}
\end{equation}
Since we have the factor $r^2$ explicitly in front of $\gamma_{ij}$, we only need to ensure that $\gamma_{ij} x^i =0$
in order for $\hat h_{ij}$ to still satisfy the conditions of Lemma \ref{hhat}.  This is guaranteed by the following lemma.

\begin{lemma} \label{nygdeco}
Suppose that $\gamma_{ij}$ is such that  $\gamma_{ij} x^i =0$. Then $\gamma_{ij}$ is uniquely
decomposable as 
$$
\begin{array}{rcl}
\gamma_{ij}=\gamma &=&f_1(x,y) B_1+f_2(x,y) B_2+f_3(x,y) B_3+f_4(x,y,z) B_4\\
&&+f_5(x,y,z) B_5+f_6(x,y,z) B_6, \mbox{ where}
\end{array}
$$
$
\begin{array}{rcl}
B_1&=&\left(
\begin{array}{lll}
 -y^2 & x y & 0 \\
 x y & -x^2 &  \\
 0 & 0 & 0
\end{array}
\right)\\
B_2&=&\left(
\begin{array}{lll}
2 y z & -x z & -x y \\
 -x z  & 0 & x^2  \\
 -x y & x^2  & 0
\end{array}
\right)\\
B_3&=&\left(
\begin{array}{lll}
 0 & -y z & y^2 \\
- y z & 2 x z & -x y  \\
y^2 & -x y  & 0
\end{array}
\right)\\
B_4&=&\left(
\begin{array}{lll}
 z^2 & 0 & -x z \\
 0 & 0 & 0 \\
-x z & 0 & x^2
\end{array}
\right)\\
B_5&=&\left(
\begin{array}{lll}
 0 & z^2 & -y z \\
 z^2 & 0 & -x z \\
 -y z & -x z & 2 x y
\end{array}
\right)\\
B_6&=&\left(
\begin{array}{lll}
0 & 0 & 0 \\
 0 & z^2 & -y z \\
0 & -y z & y^2
\end{array}
\right)\\
\end{array}
$
\end{lemma}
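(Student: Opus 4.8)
The plan is to prove Lemma \ref{nygdeco} exactly as Lemma \ref{gdec} was proved, i.e.\ by a direct elimination argument exploiting analyticity, but now the single linear constraint $\gamma_{ij}x^i=0$ (three scalar equations) is easier to handle than the system in Lemma \ref{gdec} since there is no trace condition to impose. First I would write the generic symmetric matrix
$$
\gamma=\left(\begin{array}{ccc}
X(x,y,z) & a(x,y,z) & b(x,y,z)\\
a(x,y,z) & Y(x,y,z) & c(x,y,z)\\
b(x,y,z) & c(x,y,z) & Z(x,y,z)
\end{array}\right),
$$
with all entries formally analytic, and translate $\gamma_{ij}x^i=0$ into the three scalar identities
$xX+ya+zb=0$, $xa+yY+zc=0$, $xb+yc+zZ=0$. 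As in the proof of Lemma \ref{gdec}, I would expand each entry in a Taylor series in the variable that is about to be set to zero, and successively take the limits $z\to0$, then $y\to0$ (or $x\to0$). Setting $z=0$ in the first two equations forces the $z$-independent and linear-in-$z$ pieces of $X,Y,a$ to vanish or to be expressible through one free function of two variables; iterating, one finds $X=-y^2 g_1+\dots$, $Y=-x^2 g_1+\dots$, $a=xy g_1+\dots$, which is exactly the $B_1$-contribution, and the remaining entries $b,c,Z$ get solved in terms of the lower-order data plus new free functions. Collecting the six independent free functions that survive yields the six matrices $B_1,\dots,B_6$; a direct (routine) check that each $B_k$ indeed satisfies $(B_k)_{ij}x^i=0$ confirms sufficiency.

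A cleaner, essentially equivalent route — which I would actually prefer to present — is to invoke Lemma \ref{gdec} directly. If $\gamma_{ij}x^i=0$ but $\gamma$ need not be trace-free mod $r^2$, write $\gamma = \tilde\gamma + \tfrac{1}{3}[\gamma]\,\eta$ does \emph{not} preserve $\gamma_{ij}x^i=0$, so instead I would note that $A_{ij}:=\gamma_{ij}-\tfrac{1}{r^2}(x^k x^l\gamma_{kl})\,\text{(something)}$ is awkward; the genuinely clean observation is rather that $\gamma_{ij}x^i=0$ already implies $x^i x^j\gamma_{ij}=0$, hence $r^2\mid x^ix^j\gamma_{ij}$ trivially, but the trace $[\gamma]=\eta^{ij}\gamma_{ij}$ need not be divisible by $r^2$. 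So Lemma \ref{gdec} is \emph{not} directly applicable, and the honest proof is the elimination argument above. The dimension count is a good sanity check: symmetric $3\times3$ matrices with analytic entries modulo the three relations $\gamma_{ij}x^i=0$ form a module whose ``size'' matches six free functions (three of two variables, three of three variables), consistent with the six $B_k$; this also shows the decomposition is unique, since at each stage of the elimination the free function introduced is forced, leaving no ambiguity.

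Uniqueness is then immediate from the constructive procedure: if $\sum_k f_k B_k=0$, reading off matrix entries and using that the $B_k$ have been arranged so that successive entries isolate $f_1$, then $f_2,f_3$, then $f_4,f_5,f_6$ (e.g.\ the $(1,1)$ and $(1,2)$ entries together with $z=0$ pin down $f_1$; the $z^2$-coefficients of the appropriate entries pin down $f_4,f_5,f_6$; etc.), one concludes $f_1=\dots=f_6=0$. The main obstacle is purely bookkeeping: one must choose the order of the limits $z\to0$, $y\to0$, $x\to0$ and the Taylor-expansion ansätze for $X,Y,Z,a,b,c$ so that the three constraints decouple cleanly and the six free functions emerge with the specific polynomial coefficients displayed in $B_1,\dots,B_6$; there is no conceptual difficulty, only the risk of a miscarried elimination, and the final answer is verified a posteriori by checking $(B_k)_{ij}x^i=0$ and linear independence of the $B_k$ as polynomial-valued matrices.
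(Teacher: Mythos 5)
Your proposal is correct and follows essentially the same route as the paper's appendix proof: a direct elimination on the three scalar constraints $xX+ya+zb=0$, $xa+yY+zc=0$, $xb+yc+zZ=0$, Taylor-expanding each entry in the variable about to be sent to zero and extracting the free functions $f_1,\dots,f_6$ step by step (with $f_1$ appearing first via $X=-y^2f_1+\dots$, $Y=-x^2f_1+\dots$, $a=xyf_1+\dots$, i.e.\ the $B_1$ block). Your observation that Lemma \ref{gdec} cannot be invoked directly because the trace condition is absent is also consistent with the paper, which gives a separate (though structurally parallel) elimination for this lemma.
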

\begin{proof}
This proof is rather similar to the proof of Lemma \ref{gdec} and is given in appendix A.
\end{proof}
Now, $(A_1)_{ij}=x_i x_j -r^2 \eta_{ij}$, and $ 4 D(f x_i x_j)=(2f+ D(f))\hat D_i r^2 \hat D_j r^2$. Using this, insertion
of (\ref{met2}) into (\ref{smooth2}) then leaves us with the condition that 
\begin{equation} \label{smooth3}
\frac{\hat D_i r^2 \hat D_j r^2}{r^2}[2f+ D(f)+ 2 (8 \chi -\hat \phi^2)]
\end{equation}
must be smooth at $r=0$. 

It is not trivial to impose the right conditions on $f$ and $\chi$. If they are chosen too restrictively,
no solution to (\ref{rab2a}) and (\ref{r2b}) will exist. On the other hand, if $f$ and $\chi$ are not restricted enough,
the solution we are looking for will not be unique (in terms of the introduced quantities). As we will see, the following
choice, upon which we insist, will suffice.
\begin{equation} \label{chidef}
\chi=\frac{1}{16}(2\hat \phi^2-2 f-  D(f))
\end{equation}
The choice (\ref{chidef}) will make (\ref{smooth3}) vanish identically; this means that the equations
(\ref{rab2a}) and (\ref{r2b}) now are smooth at $i^0$. Also, the form of the metric is determined
via Lemma \ref{nygdeco}, with $f(x,y,z), f_1(x,y), f_2(x,y), f_3(x,y), f_4(x,y,z), f_5(x,y,z)$ and $f_6(x,y,z)$ as unknowns. We put $f_0=f$ and note that
with $\hat h_{ij}$ known the conformal factor $\Omega$, and thus the desired spacetime, is also determined.
\subsection{Uniqueness of the metric} \label{unique}
We will now address the equations (\ref{rab2a}) and (\ref{r2b}).
We will first prove that a part of these equations determine the metric uniquely provided
$\hat h_{ij}$ is cast in a certain way (Lemma \ref{f1}), and then, in Section \ref{conv}, that the derived series expansions
for $\hat h_{ij}$ converges and also satisfies  (\ref{rab2a}) and (\ref{r2b}) in full (Lemma \ref{f2}).

To prepare for Lemma \ref{f1} and Lemma \ref{f2}, we will create
some scalar equations from (\ref{rab2a}). To do this, we introduce the vectors fields
$u_1^a, u_2^a, u_3^a$ with components
\begin{equation*} \nonumber
u_1^i=
\left(
\begin{array}{c}
  -y   \\
  x   \\
  0   
\end{array}
\right), \quad
u_2^i=
\left(
\begin{array}{c}
  0  \\
  -z   \\
  y   
\end{array}
\right), \quad
u_3^i=
\left(
\begin{array}{c}
  z   \\
  0   \\
  -x   
\end{array}
\right)
\end{equation*}
where it is seen that they are all pointwise orthogonal to the vector field $x^i$. 
We also note the linear relation
$$
z u_1^i + x u_2^i+y u_3^i=0.
$$
It is straightforward to check that
$u_1^i u_1^j, u_2^i u_2^j, u_3^i u_3^j, x^i x^j$ together with any two
of $u_1^{(i}x_{\phantom{1}}^{j)}, u_2^{(i}x_{\phantom{1}}^{j)}, 
u_3^{(i}x_{\phantom{1}}^{j)}$ are (point-wise) linearely independent
Next, denote the LHS of (\ref{rab2a})
by $T_{ab}$, and denote $r\sqrt{1+4 r^2 \hat \phi^2 e^{2\kappa}}$ times the LHS of (\ref{r2b}) by $S$.
With the notation 
$$
t_{11}=u_1^i u_1^j T_{ij}, \quad
t_{22}=u_2^i u_2^j T_{ij}, \quad
t_{33}=u_3^i u_3^j T_{ij}
$$
$$
t_{00}=x^i x^j T_{ij},\quad t_{0k}=x^i u_k^j T_{ij}, k=1,2,3,
$$
it is clear that (\ref{rab2a}) is satisfied if and only if $t_{00}, t_{11}, t_{22}, t_{33}$
and any two of $t_{0k}, k=1,2,3$ vanishes.

\begin{lemma} \label{f1} 
Suppose that the metric components $\hat h_{ij}$ takes the form
\begin{equation}
\begin{array}{r}
\hat h_{ij}=\hat h=\eta+f_0(x,y,z)A_1+ r^2 \left[ f_1(x,y) B_1+f_2(x,y) B_2+f_3(x,y) B_3  \right. \\
\left. +f_4(x,y,z) B_4
+f_5(x,y,z) B_5+f_6(x,y,z) B_6 \right],
\label{metricform}
\end{array}
\end{equation}
 where $A_1$ and $B_1, B_2, \cdots B_6$
 are the matrices
 in Lemmas \ref{gdec} and \ref{nygdeco} respectively, and where the functions
$f_0, f_1, \cdots f_6$ are formal analytical functions of the variables indicated.
Then the equations
$$
S=0,\quad  t_{11}=0, \quad t_{22}=0, \quad t_{33}=0
$$
determines the metric $\hat h_{ij}$ as formal power series.
\end{lemma}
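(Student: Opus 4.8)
The plan is to set up a recursion in the homogeneous degree $n$ of the formal power series and show that, order by order, the four scalar equations $S=0$, $t_{11}=0$, $t_{22}=0$, $t_{33}=0$ determine the degree-$n$ parts of the seven unknowns $f_0,f_1,\ldots,f_6$ from the lower-order data together with the prescribed potential $\hat\phi$. First I would record the structure of each equation: $S$ is (up to the nonvanishing scalar factor $r\sqrt{1+4r^2\hat\phi^2e^{2\kappa}}$) the contracted equation (\ref{r2b}), which after using (\ref{prop}) and (\ref{chi})--(\ref{chidef}) becomes a second-order equation for $\hat\phi e^\kappa$ coupled to $\ln(r^2e^{2\kappa})$; its leading behaviour in the unknowns is governed by the Laplacian-type operator $\hat h^{ij}\hat D_i\hat D_j$. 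Similarly $t_{11},t_{22},t_{33}$ come from transvecting $T_{ab}$ (the LHS of (\ref{rab2a})) with the mutually ``independent'' combinations $u_1^iu_1^j$, $u_2^iu_2^j$, $u_3^iu_3^j$; each is a second-order quasilinear expression in $\hat h_{ij}$ whose principal part is again $\hat h^{ij}\hat D_i\hat D_j$ acting on the metric perturbation.

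The key computational step is the \emph{linearisation at flat space}. Writing $\hat h_{ij}=\eta_{ij}+\sum_{n\ge 2}\hat h_{ij}^{[n]}$ and expanding, the degree-$n$ part of each of the four scalar equations has the form $\mathcal{L}_n\big(f_0^{[\cdot]},\ldots,f_6^{[\cdot]}\big) = (\text{known, built from degrees} <n \text{ and from } \hat\phi)$, where $\mathcal{L}_n$ is a \emph{linear} map on the finite-dimensional space of degree-$n$ coefficient data. Concretely, since $\Delta$ (flat Laplacian) maps homogeneous polynomials of degree $n$ onto homogeneous polynomials of degree $n-2$, and since the seven building blocks $A_1, B_1,\ldots,B_6$ are linearly independent as matrix-valued polynomials (this is the uniqueness clauses of Lemmas \ref{gdec} and \ref{nygdeco}), the four scalar projections applied to $\eta+f_0A_1+r^2\sum f_kB_k$ isolate, at top order, exactly four independent scalar polynomial combinations of $f_0^{[n-2]},\ldots,f_6^{[n-?]}$ (with appropriate degree shifts because of the $r^2$ prefactor). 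I would check that this linear system is triangular/invertible: $S=0$ is the equation that, via the $\hat h^{ij}\hat D_i\hat D_j$ principal symbol, pins down $f_0$ (the trace-type mode coupled to $\hat\phi$), after which $t_{11},t_{22},t_{33}$ — being three equations for the remaining six functions, but with three of the $f_k$ depending only on $(x,y)$ — determine all of $f_1,\ldots,f_6$ by matching coefficients monomial by monomial. The linear relation $zu_1^i+xu_2^i+yu_3^i=0$ and the pointwise-independence statement recorded just before the lemma are exactly what guarantees that $t_{11},t_{22},t_{33}$ are not redundant and capture enough of $T_{ab}$.

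The main obstacle is verifying that the induced linear map at each order $n$ is actually \emph{surjective onto the relevant space and injective}, i.e. that no unknown is left undetermined and none is over-determined. This is where one must use the special algebraic structure: the fact that $A_1A_j$, $A_iA_j$ stay in the class of Lemma \ref{gdec} (Lemma \ref{hhat}), so the nonlinear corrections genuinely sit in lower order and never contaminate the principal linear operator; and the fact that the seven-parameter family is the \emph{exact} solution space of the constraints ``$r^2\mid[\hat h-\eta]$, $(\hat h-\eta)_{ij}x^i=0$'' refined by the $A_1$-split — so the four equations are neither too few nor too many. I would handle this by an explicit symbol computation: restrict $\Delta$ and the $u_k$-transvections to the decomposition into spherical harmonics (as in the discussion after Lemma \ref{r2}), observe that multiplication by $r^2$ shifts the radial power without touching the angular content, and conclude that on each angular block the four equations reduce to an invertible recursion in the radial/polynomial coefficients, with $\hat\phi^{[n]}$ entering $S$ as the source. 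The remaining bookkeeping — collecting the nonlinear terms, confirming they are of strictly lower order, and reading off the unique $f_k^{[n]}$ — is routine, and I would carry it out by induction on $n$ with the base case $n=2$ checked directly.
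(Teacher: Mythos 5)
Your skeleton — induction on the homogeneous degree $n$, linearisation of the four scalar equations about flat space, and reduction to showing that the resulting finite-dimensional linear map on the tuple $(f_0^{[n]},f_1^{[n-2]},\ldots,f_6^{[n-2]})$ is uniquely solvable — is exactly the strategy of the paper. But the two steps you defer as ``routine bookkeeping'' or ``an explicit symbol computation'' are precisely the substance of the proof, and as stated your plan has a genuine gap in each. First, the system is honestly over-determined: the unknowns live in a space of dimension $2n^2+3n-2$ while the quadruple $(S^{[n+2]},t_{11}^{[n+2]},t_{22}^{[n+2]},t_{33}^{[n+2]})$ lives in dimension $2n^2+14n+24$. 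Solvability is therefore not automatic even if the map is injective; the paper must prove (i) that $S$ and each $t_{kk}$ are divisible by $r^2$ (Lemma \ref{reddim1}, itself a nontrivial Christoffel-symbol computation — note that $u_1^iu_1^j\hat R_{ij}$ alone is \emph{not} divisible by $r^2$), and (ii) that the quotients $\tau_{11},\tau_{22},\tau_{33}$ satisfy $3n+6$ further hidden linear relations of the type $z^2\tau_{11}(0,y,z)=y^2\tau_{33}(0,y,z)$, which cuts the effective target dimension down to exactly $2n^2+3n-2$. Nothing in your proposal identifies or replaces these facts, and without them the inductive step cannot conclude that a solution exists at order $n$.

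Second, your claimed triangular structure is not correct: $S=0$ does not pin down $f_0^{[n]}$; at leading order it only expresses the trace $\gamma^{[n]}$ of the $B$-part in terms of $f_0^{[n]}$ (equation (\ref{gamman})). The actual injectivity argument runs the other way: after using $S=0$ and two of the $t_{kk}$ to eliminate $\gamma^{[n]}$, $g_{\tilde\gamma}$ and $f_{\tilde\gamma}$, one is left with a fourth-order linear ODE (\ref{slutdiffenvar}) for the angular part of $f_0^{[n]}$, whose general solution is spanned by $\sin\theta\,P_{n+1}^{m\pm1}(\cos\theta)$ and $\sin\theta\,Q_{n+1}^{m\pm1}(\cos\theta)$ — none of which is a combination of spherical harmonics of degree $\leq n$, forcing $f_0^{[n]}=0$ and then the vanishing of the remaining unknowns. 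This incompatibility between the solution space of the angular operator and the polynomial constraint is the mechanism behind injectivity; ``the four equations reduce to an invertible recursion on each angular block'' asserts the conclusion without supplying this mechanism. So while your outline points in the right direction, the proof as proposed does not close.
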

This will be proved by induction. With the notation from Section \ref{notation},
$$
\hat h_{ij}=\eta_{ij}+\sum_{n=2}^\infty \hat h_{ij}^{[n]},
\quad \hat R_{ij}=\sum_{n=0}^\infty \hat R_{ij}^{[n]},
 \quad T_{ij}=\sum_{n=0}^\infty T_{ij}^{[n]},
\quad S=\sum_{n=0}^\infty S^{[n]}.
$$
Note that there can be no linear term in the metric corresponding to 
$n=1$ due to Lemma \ref{gdec}.
Moreover, $\hat h_{ij}$ is determined by the functions $f_0, \cdots f_6$ with corresponding
series
\begin{equation*} \nonumber
\begin{array}{rcl}
f_k(x,y,z)&=&\sum_{n=0}^\infty f_k^{[n]}(x,y,z), \ k=0,4,5,6,\\
f_k(x,y)&=&\sum_{n=0}^\infty f_k^{[n]}(x,y), \ k=1,2,3,
\end{array}
\end{equation*}
where each $f_k^{[n]}$ is a homogeneous polynomial (in the variables indicated) of degree $n$.
We also recollect that $\kappa$ is determined by $f_0$ and $\hat \phi$ via (\ref{chi}) and
(\ref{chidef}).

$\hat R_{ij}^{[n]}$ is determined by the metric up to order $n+2$, and
with $\hat h_{ij}^{[n+2]}$ as the leading term, while
$\hat h_{ij}^{[k]}, k \leq n+1$ are regarded  as lower order terms (L.O.T.) , the definition of the Ricci
tensor gives that 
\begin{equation} \label{riccilead}
\begin{array}{rcl}
\hat R_{ij}^{[n]}&=&
 \hat{\underline R}_{ij}^{[n]} +\mbox{L.O.T.}\\
\hat{\underline R}_{ij}^{[n]} &=&
\frac{1}{2} \eta^{m n} \partial_m\left\{ \partial_j h_{i n}^{[n+2]} 
+\partial_i  \hat h_{j n}^{[n+2]}-\partial_n  \hat h_{i j}^{[n+2]}
\right\}\\
&-&
\frac{1}{2} \eta^{m n}\partial_i\left\{\partial_m \hat h_{j n}^{[n+2]}
+\partial_j \hat h_{m n}^{[n+2]} -\partial_n\hat h_{m j}^{[n+2]}
\right\}
\end{array}
\end{equation}
We stress that the lower order terms in $\hat R_{ij}^{[n]}$ are also polynomials of degree $n$, 
but that they are expressions in  $\hat h_{ij}^{[k]}, k \leq n+1$.
Considering the form of $\hat h_{ij}$, 
$ \hat{\underline R}_{ij}^{[n]}$ is a function of the leading order polynomials $f_0^{[n]}$, $f_k^{[n-2]}$ for $1\leq k \leq 6$. We now proceed and look at
$T_{ij}$ and see how the leading order polynomials enter. 
From (\ref{rab2a}) we find 
\begin{equation} \label{alsolead}
\begin{array}{l}
(\hat D_i \hat D_j \ln(r^2))^{[n]}
=
(\frac{1}{r^2}\hat D_i \hat D_j r^2-\frac{1}{r^4}\hat D_i r^2 \hat D_j r^2)^{[n]}\\
=
\frac{1}{r^2}(
2\hat h_{ij}^{[n+2]}+D(\hat h_{ij}^{[n+2]})
)+\mbox{L.O.T.}=\frac{n+4}{r^2}\hat h_{ij}^{[n+2]}+\mbox{L.O.T.}
\end{array}
\end{equation}
Here we have used that $D(f)=n f$ if $f$ is homogeneous of order $n$. Continuing, (\ref{rab2a})
gives
\begin{equation} \label{tablead} \hspace*{-15mm}
\begin{array}{rl}
2 (\hat D_i \hat D_j \kappa)^{[n]}& =
(\frac{1}{8}\hat D_i \hat D_j (r^2 (2\hat \phi^2- 2 f -
 D(f))))^{[n]} \\
 &=
\frac{-(2+n)}{8}\partial_{i} \partial_{j} (r^2  f_0^{[n]})+\mbox{L.O.T.}
\\
(\hat D_i \ln(r^2 e^{2 \kappa})\hat D_j \ln(r^2 e^{2 \kappa}))^{[n]}&=
\frac{-(n+2)}{4r^2}\partial_{(i} r^2 \partial_{j)}(r^2 f_0^{[n]})+\mbox{L.O.T.}
\\
(\hat h_{ij} \hat h^{nm} \hat D_n \hat D_m \ln(r^2))^{[n]}&=
\frac{n+2}{r^2}\eta_{ij}\eta^{nm}\hat h_{nm}^{[n+2]}+\frac{2}{r^2}\hat h_{nm}^{[n+2]}+\mbox{L.O.T.}
\\
(2\hat h_{ij} \hat h^{nm} \hat D_n \hat D_m  \kappa)^{[n]}&=
\frac{-(2+n)}{8}\eta_{ij}\eta^{nm} \partial_n \partial_m (r^2  f_0^{[n]})+\mbox{L.O.T.}
\\
(\hat h_{ij}\hat h^{nm}\hat D_n \ln(r^2 e^{2 \kappa})\hat D_m\ln(r^2 e^{2 \kappa}))^{[n]}&=
-\eta_{ij}\frac{(n+2)^2}{2}f_0^{[n]}+\frac{4}{r^2}\hat h_{nm}^{[n+2]}+\mbox{L.O.T.}
\\
\left(2 \frac{\hat D_i(r \hat \phi e^\kappa)\hat D_j(r \hat \phi e^\kappa)}{1+4 r^2 \hat \phi^2 e^{2\kappa}}\right)
^{[n]}&=
\ 0+\mbox{L.O.T}.
\end{array}
\end{equation}
Combining (\ref{riccilead}), (\ref{alsolead}) and (\ref{tablead}), we find the leading order of
$T_{ij}^{[n]}=\underline T_{ij}^{[n]}$+ L.O.T. to be
\begin{equation*}  \nonumber
\begin{array}{cl}
\underline T_{ij}^{[n]}&=
\frac{1}{2} \eta^{m n} \partial_m\left\{ \partial_j \hat h_{i n}^{[n+2]} 
-\partial_n  \hat h_{i j}^{[n+2]}
\right\}
-\frac{1}{2} \eta^{m n}\partial_i\left\{
\partial_j \hat h_{m n}^{[n+2]} -\partial_n\hat h_{m j}^{[n+2]}
\right\}\\
&+\frac{n+2}{r^2}\hat h_{ij}^{[n+2]}+\frac{n+2}{r^2}\eta_{ij}\eta^{nm}\hat h_{nm}^{[n+2]}+
\eta_{ij}\frac{(n+2)^2}{2}f_0^{[n]}\\
&-\frac{(n+2)}{8}[\partial_i \partial_j (r^2  f_0^{[n]})
+\frac{2}{r^2}\partial_{(i} r^2 \partial_{j)}(r^2 f_0^{[n]})
+\eta_{ij}\eta^{nm} \partial_n \partial_m (r^2  f_0^{[n]})]
\end{array}
\end{equation*}
To examine $S$, we write $Y=2\hat \phi e^\kappa$, which means that $S/\sqrt{1+4 r^2 \hat \phi^2 e^{2\kappa}}$ becomes
\begin{equation*} \nonumber 
\begin{array}{l}
\hat h^{ij}r\hat D_i \frac{\hat D_j (r Y)}{\sqrt{1+r^2 Y^2}}
-\hat h^{ij}\frac{r\hat D_i(r Y)\hat D_j \ln(r^2 e^{2 \kappa})}{\sqrt{1+r^2 Y^2}}\\
=-\hat h^{ij} \frac{r^2 Y \hat D_i(r Y) \hat D_j(r Y)}{\sqrt{1+r^2 Y^2}^3}+
\hat h^{ij}\frac{r\hat D_i \hat D_j (r Y)}{\sqrt{1+r^2 Y^2}}
-\hat h^{ij}\frac{r\hat D_i(r Y)[\hat D_j \ln(r^2)+2 \hat D_j \kappa]}{\sqrt{1+r^2 Y^2}}\\
\end{array}
\end{equation*}
Expanding the derivatives and using (\ref{prop}), 
we find
\begin{equation} \label{S2}
\begin{array}{cl}
S&=\frac{Y}{2} D(\ln|\hat h|)+r^2\Delta Y-2 YD(\kappa)-2r^2\hat h^{ij}\hat D_i Y \hat D_j \kappa\\
&-\frac{r^2 Y}{1+r^2 Y^2}[Y^2 +2 Y D(Y)+r^2 \hat h^{ij}\hat D_i Y \hat D_j]
\end{array}
\end{equation}
Note that $S \equiv 0 \pmod {r^2}$.
Now, using that $(\ln |\hat h|)^{[n+2]}=[\hat h_{ij}^{[n+2]}]+\mbox{L.O.T.}$, and putting
$\Delta_C=\eta^{ij}\partial_i \partial_j$, 
it follows that (at this stage,  $\hat \phi(0)\neq 0$ by assumption) 
$S^{[n+2]}=\underline S^{[n+2]}$+L.O.T., where
\begin{equation} \label{slead}
\begin{array}{cl}
\underline S^{[n+2]}&=
\frac{\hat \phi(0)}{2}((n+2)[\hat h_{ij}^{[n+2]}]+2 r^2 \Delta_c \kappa^{[n+2]}
-4 D(\kappa^{[n+2]}))\\
&=\frac{\hat \phi(0)(n+2)}{2}([\hat h_{ij}^{[n+2]}]-\frac{1}{8} r^2 \Delta_c (r^2 f_0^{[n]}
)+\frac{1}{4}D(r^2 f_0^{[n]})).
\end{array}
\end{equation}

The leading order terms in (\ref{riccilead}), (\ref{tablead}), (\ref{slead})
are all functions of the homogeneous polynomials $f_0^{[n]}$ and $f_k^{[n-2]}, k=1,2,\cdots 6$.
$f_0^{[n]}(x,y,z)$ defines  a vector space with dimension $\frac{(n+2)(n+1)}{2}$, while
$f_k^{[n]}(x,y)$ ($k=1,2,3$) define a vector space with dimension $n+1$.
In total, the tuple (for each fixed $n$)
\begin{equation*} \nonumber 
\overline p=(f_0^{[n]},f_1^{[n-2]},f_2^{[n-2]},f_3^{[n-2]},
f_4^{[n-2]},f_5^{[n-2]},f_6^{[n-2]})
\end{equation*}
 lives in a  vector space with dimension
$2 n^2 +3n-2$ for $n \geq 2$.
On the other hand, $T_{ij}^{[n]}$ will show up in $t_{11}^{[n+2]}, t_{22}^{[n+2]},
t_{33}^{[n+2]}$, which are all polynomials of degree $\frac{(n+4)(n+3)}{2}$, so that
 the quadruple 
\begin{equation*} \nonumber 
\overline t= (S^{[n+2]},t_{11}^{[n+2]},t_{22}^{[n+2]},t_{33}^{[n+2]})
\end{equation*}
lives in $\mathbf{R}^{2 n^2+14 n+24}$.

Thus, for each fixed $n$, $\overline t$ is a function of $\overline p$ and the functions
$f_i^{[\cdot]}$ of lower order (than in $\overline p$).
One also notes that both $(S^{[0]},t_{11}^{[0]},t_{22}^{[0]},t_{33}^{[0]})$ and 
$(S^{[1]},t_{11}^{[1]},t_{22}^{[1]},t_{33}^{[1]})$ are identically zero (see for instance Lemma 
\ref{reddim1} below). 
Next, it is easy to explicitly check that for $n=0$, one can find $f_0^{[0]}$ so that
$(S^{[2]},t_{11}^{[2]},t_{22}^{[2]},t_{33}^{[2]})=0$ and similarly for $f_0^{[1]}$ when $n=1$ (given
$f_0^{[0]}$).

Using induction, we now assume that 
 the metric $\hat h_{ij}$ is determined up to order $n+1$, i.e., that
the polynomials $f_0^{[m]}, m=0,1, \ldots n-1$ and
$f_k^{[m]}, k=1,2,\cdots 6, m=0,1, \ldots n-3$ are known ($n \geq 2$).
With these polynomials fixed, $\overline t=\overline t(\overline p)$ will be an affine function with respect to $\overline p$, and we must show that there exists a $\overline p$ such that
$\overline t(\overline p)=0$.

With $\overline t_0=\overline t(0)$, we consider the equation
\begin{equation} \label{ttuple}
\overline t(\overline p)-\overline t_0=-\overline t_0
\end{equation}
where now the mapping $\overline q: \overline p \mapsto \overline q(\overline p)=
\overline t(\overline p)-\overline t_0$ is linear. 
Explicitly, $\overline q(\overline p)=(\underline S^{[n+2]}, u_1^i u_1^j \underline T_{ij}^{[n]},u_2^i u_2^j \underline T_{ij}^{[n]},u_3^i u_3^j \underline T_{ij}^{[n]})$.
Since $\overline q$ is a mapping
\begin{equation*} \nonumber 
\overline q: \mathbf{R}^{2 n^2 +3n-2} \to \mathbf{R}^{2 n^2+14 n+24},
\end{equation*}
we have a system of equations with $2 n^2+14 n+24$ equations
and $2 n^2 +3n-2$ unknowns, i.e., the system is over-determined.

To prove that there is a $\overline p$ for which $\overline q(\overline p)=-\overline t_0$, we will prove
i) that the dimension of the range of $\overline t$, and therefore $\overline q$, is only $2 n^2 +3n-2$, and ii)
the mapping $\overline p \to \overline q(\overline p)$ is injective.

To do this, we start with the somewhat surprising\footnote{For instance, 
$u_1^i u_1^j \hat R_{ij}$ is not congruent to $0 \pmod{r^2}$.}
 lemma
\begin{lemma} \label{reddim1}
$S \equiv 0, (\textrm{ mod } r^2), \quad t_{ii} \equiv 0,  \pmod {r^2}, \ i=1,2,3$
\end{lemma}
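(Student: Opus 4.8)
The plan is to treat the two assertions separately, since they are of quite different character. The statement $S\equiv 0\pmod{r^2}$ is immediate from (\ref{S2}): every term there except $\frac{Y}{2}D(\ln|\hat h|)-2YD(\kappa)$ carries an explicit factor $r^2$, while by (\ref{prop}) one has $D(\ln|\hat h|)=\hat h^{ij}D(\hat h_{ij})\equiv 0\pmod{r^2}$, and by (\ref{chi}) $D(\kappa)=D(r^2\chi)=r^2(2\chi+D(\chi))\equiv 0\pmod{r^2}$. The real work is in showing $t_{kk}=u_k^iu_k^jT_{ij}\equiv 0\pmod{r^2}$ for $k=1,2,3$, where $T_{ij}$ is the left-hand side of (\ref{rab2a}) and $L:=\ln(r^2e^{2\kappa})$. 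The first step is to exploit that every $u_k^i$ is orthogonal to $x^i$: since $u_k^i\hat D_ir^2=2u_k^ix_i=0$ and $\kappa=r^2\chi$, we get $u_k^i\hat D_i(r\hat\phi e^\kappa)=r\,u_k^i\hat D_i(\hat\phi e^\kappa)$ and $u_k^i\hat D_iL=2r^2u_k^i\hat D_i\chi$, so after contraction with $u_k^iu_k^j$ both the potential term and the term $\hat D_iL\,\hat D_jL$ of $T_{ij}$ become multiples of $r^2$. This reduces the claim to
\begin{equation*}
u_k^iu_k^j\Big[\hat R_{ij}+\hat D_i\hat D_jL+\hat h_{ij}\big(\hat h^{mn}\hat D_m\hat D_nL-\hat h^{mn}\hat D_mL\,\hat D_nL\big)\Big]\equiv 0\pmod{r^2}.
\end{equation*}

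Next I would expand the remaining terms using (\ref{prop}), (\ref{chi}), the form (\ref{metricform}) (with $(A_1)_{ij}=x_ix_j-r^2\eta_{ij}$, so $\hat h_{ij}\equiv\eta_{ij}+f_0x_ix_j$ and $u_k^iu_k^j\hat h_{ij}\equiv u_k^iu_k^j\eta_{ij}\pmod{r^2}$) and Lemma \ref{hhat} (so $\hat h^{ij}x_ix_j=r^2$ and $\hat h^{ij}\eta_{ij}\equiv 3\pmod{r^2}$). One finds
\begin{equation*}
\begin{array}{rcl}
\hat D_i\hat D_j\ln r^2 &=& \frac{2}{r^2}\hat h_{ij}+\frac{1}{r^2}D(\hat h_{ij})-\frac{4}{r^4}x_ix_j,\\
\hat h^{mn}\hat D_m\hat D_n L-\hat h^{mn}\hat D_mL\,\hat D_nL &=& -\frac{2}{r^2}+\frac{1}{r^2}\hat h^{mn}D(\hat h_{mn})-4\chi+r^2(\cdots),
\end{array}
\end{equation*}
so that the singular contributions $+\frac{2}{r^2}u_k^iu_k^j\hat h_{ij}$ (from $\hat D_i\hat D_j\ln r^2$) and $-\frac{2}{r^2}u_k^iu_k^j\hat h_{ij}$ (from the trace term) cancel, and, pleasantly, all dependence on $\chi$ drops out. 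What survives is the purely Riemannian identity
\begin{equation*}
u_k^iu_k^j\hat R_{ij}+\frac{1}{r^2}\Big(u_k^iu_k^jD(\hat h_{ij})+(u_k^iu_k^j\eta_{ij})\,\hat h^{mn}D(\hat h_{mn})\Big)\equiv 0\pmod{r^2},
\end{equation*}
which must hold for every metric of the form (\ref{metricform}) written in normal coordinates.

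This last identity I would establish exactly as (\ref{r2rab}) was in the proof of Lemma \ref{r2Rab}: express $\hat R_{ij}$ through the Christoffel symbols, use ${\Gamma^i}_{kl}x^kx^l=0$, $2x^j{\Gamma^k}_{ij}=\hat h^{km}D(\hat h_{im})$ and ${\Gamma^k}_{ij}x_k=\eta_{ij}-\hat h_{ij}-\frac{1}{2}D(\hat h_{ij})$, together with the block structure of $A_1,B_1,\dots,B_6$ --- equivalently the ``$B$-calculus'' $D^2B=DB=B$, $B\cdot B=B$, $B+B=B$ --- but now carried one order further, so as to retain the first nontrivial term modulo $r^2$ rather than merely recording that the trace lies in the ideal $(r^2)$. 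The main obstacle is precisely this computation: as the footnote preceding the lemma emphasizes, $u_k^iu_k^j\hat R_{ij}$ is by itself \emph{not} in $(r^2)$, so the vanishing is a genuine cancellation between the curvature contribution and the metric-derivative terms, and it does not follow from anything already done for Lemma \ref{r2Rab}; it has to be squeezed out of the explicit form (\ref{metricform}).
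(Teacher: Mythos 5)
Your treatment of $S$ is exactly the paper's, and your reduction of $t_{kk}$ is correct and in fact lands precisely on the paper's intermediate equation (\ref{rsimp0}): using $u_k^ix_i=0$ to kill the potential term and $\hat D_iL\,\hat D_jL$, and then observing that the $\chi$-dependence cancels, is how the appendix proof proceeds, and your ``purely Riemannian identity'' is equivalent to (\ref{rsimp0}) once one writes $u_k^iu_k^jD(\hat h_{ij})=-r^2(D(f)+2f)\,u_k^iu_k^j\eta_{ij}+r^2u_k^iu_k^j\bigl(2\gamma_{ij}+D(\gamma_{ij})\bigr)$.

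The gap is that you stop exactly where the lemma's actual content begins. Declaring that the surviving identity ``would be established as (\ref{r2rab}) was, but carried one order further'' is not a proof, and the $B$-calculus of Lemma \ref{r2Rab} ($D^2B=DB=B$, $B\cdot B=B$, \dots) cannot deliver it: that calculus only records membership of traces in the ideal $(r^2)$, whereas here, as you yourself note, $u_k^iu_k^j\hat R_{ij}\not\equiv 0\pmod{r^2}$ and a genuine cancellation must be exhibited. Concretely, what is still needed is (i) the verification that the quadratic Christoffel contributions $u_k^iu_k^j{\Gamma^m}_{ij}{\Gamma^l}_{ml}$ and $u_k^iu_k^j{\Gamma^m}_{lj}{\Gamma^l}_{mi}$ are separately $\equiv 0\pmod{r^2}$; (ii) the expansion of $u_k^iu_k^j\bigl(\partial_l{\Gamma^l}_{ij}-\partial_i{\Gamma^l}_{lj}\bigr)$ via Leibniz, using $\ln|\hat h|\equiv(-2f+[\gamma_{ij}])\,r^2\pmod{r^4}$; and (iii) the final algebraic identity $u_k^iu_k^j(B_m)_{ij}+z^2\,[(B_m)_{ij}]\equiv 0\pmod{r^2}$ (for $k=1$; with the analogous component for $k=2,3$), checked on each of the six basis matrices $B_1,\dots,B_6$ of Lemma \ref{nygdeco}. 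Step (iii) is a special property of those particular matrices, not of general symmetric tensors annihilated by $x^i$, so it really must be checked; without (i)--(iii) the claimed congruence is only conjectured, not proved.
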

\begin{proof}
From (\ref{S2}), using that $D(\ln|\hat h|)\equiv 0, \pmod {r^2}$, it immediately follows
that $S \equiv 0, \pmod {r^2}$. The other properties are more tedious to prove, and we refer
to appendix A for these calculations.
\end{proof}
Since all of $S, t_{11}, t_{22}, t_{33}$ contains the factor $r^2$, we can write
$$
S=r^2 \sigma, \quad t_{ii}=r^2 \tau_{ii}, \ i=1,2,3.
$$
As a consequence, $S^{[n+2]}$ is replaced by $\sigma^{[n]}$, and similarly, 
$t_{ii}^{[n+2]}$ is replaced by $\tau_{ii}^{[n]}$, and we 
consider the quadruple
\begin{equation*} \nonumber 
\overline{\tilde t}= (\sigma^{[n]}, \tau_{11}^{[n]},\tau_{22}^{]n]},\tau_{33}^{[n]})
\in \mathbf{R}^{2 n^2+6 n+4}
\end{equation*}
where we now have to solve the equation $\overline{\tilde t}(\overline p)=0$.
As before, we can put $\overline{\tilde t}_0=\overline{\tilde t}(0)$ (=$ \overline{t}(0)/r^2$)
and address the equation $\overline{\tilde q}(\overline p)=-\overline{\tilde t}_0$
where $\overline{\tilde q}$ is the linear function 
$\overline{\tilde q}(\overline p)=\overline{\tilde t}(\overline p)-\overline{\tilde t}_0$, which is a 
mapping $ \mathbf{R}^{2 n^2 +3n-2} \to \mathbf{R}^{2 n^2+6 n+4}$.
To find a $\overline p$ so that $\overline{\tilde q}(\overline p)=0$ therefore gives $2 n^2+6 n+4$
equations in  $2 n^2 +3n-2$ unknowns, i.e., the system of equations
 is still over-determined. However, there are additionally $3n+6$ linear relations
among $(\tau_{11}^{[n]},\tau_{22}^{[n]},\tau_{33}^{[n]})$ as the following arguments show.
First, Lemma \ref{reddim1} showed that
\begin{equation*} \nonumber
\begin{array}{c}
t_{11}=y^2 T_{11}-2 x y T_{12}+x^2 T_{22}=(x^2+y^2+z^2)\tau_{11}(x,y,z)\\
t_{22}=z^2 T_{22}-2 y z T_{23}+y^2 T_{33}=(x^2+y^2+z^2)\tau_{22}(x,y,z)\\
t_{33}=x^2 T_{33}-2 x z T_{13}+z^2 T_{11}=(x^2+y^2+z^2)\tau_{33}(x,y,z)
\end{array}
\end{equation*}
Putting $x=0$, the first and third relation give
$$
y^2 T_{11}(0,y,z)=(y^2+z^2)\tau_{11}(0,y,z),\quad z^2 T_{11}(0,y,z)=(y^2+z^2)\tau_{33}(0,y,z)
$$
which implies
\begin{equation} \label{reddim2}
z^2 \tau_{11}(0,y,z)=y^2 \tau_{33}(0,y,z).
\end{equation}
With the ansatz $\tau_{11}(0,y,z)=\sum_{k=0}^n a_{11,k} y^k z^{n-k}$, $\tau_{33}(0,y,z)=\sum_{k=0}^n a_{33,k} y^k z^{n-k},$
(\ref{reddim2}) implies that
$a_{11,0}=a_{11,1}=0$, $a_{11,k}=b_{33,k-2}, k=2,3,\ldots n$, $a_{33,n-1}=a_{33,n}=0$, i.e, $n+3$
linear relations. Putting $y=0$ produces another $n+3$ linear relations, as does $z=0$. However, it is easily seen
that three relations are counted twice, which menas that the total number of linear relations are $3n+6$.

Thus, the range of $\overline{\tilde q}$ is a vector space with dimension $2 n^2+3n-2$ 
(in which $-\overline{\tilde t}_0$ lives) and hence we have shown that the equation
(\ref{ttuple}) has a solution if the mapping $\overline q$ 
 is injective. This is the content of 
Lemma \ref{injektiv}.
\begin{lemma} \label{injektiv}
Let the linear mapping $\overline q: \mathbf{R}^{2 n^2 +3n-2} \to \mathbf{R}^{2 n^2 +6n+4}$
be defined by
$\overline q(\overline p)=(\underline S^{[n+2]}, u_1^i u_1^j \underline T_{ij}^{[n]},u_2^i u_2^j \underline T_{ij}^{[n]},u_3^i u_3^j \underline T_{ij}^{[n]})$. Then $\overline q$ is injective.
\end{lemma}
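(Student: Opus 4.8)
The plan is to show that if $\overline q(\overline p)=0$, then $\overline p=0$, by analysing the leading-order operators $\underline S^{[n+2]}$ and $\underline T_{ij}^{[n]}$ explicitly in terms of the homogeneous polynomials $f_0^{[n]}$ and $f_k^{[n-2]}$, $k=1,\dots,6$. First I would observe that the equation $\underline S^{[n+2]}=0$, using the formula (\ref{slead}), gives a relation purely between $[\hat h_{ij}^{[n+2]}]$ and $f_0^{[n]}$; since $\hat\phi(0)\neq 0$ and $n+2\neq 0$, this reads $[\hat h_{ij}^{[n+2]}]=\tfrac18 r^2\Delta_c(r^2 f_0^{[n]})-\tfrac14 D(r^2 f_0^{[n]})$. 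But the trace $[\hat h_{ij}^{[n+2]}]=\eta^{ij}\hat h_{ij}^{[n+2]}$ can also be computed directly from the ansatz (\ref{metricform}): the matrices $B_1,\dots,B_6$ and $A_1$ have specific traces (e.g. $[A_1]=-2r^2$), so $[\hat h^{[n+2]}]$ is a concrete linear combination of $r^2 f_0^{[n]}$ and the $r^2 f_k^{[n-2]}$. Equating the two expressions yields a first constraint linking $f_0^{[n]}$ to a linear combination of the $f_k^{[n-2]}$.

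Next I would extract the remaining information from $u_a^i u_a^j \underline T_{ij}^{[n]}=0$, $a=1,2,3$. Using the decomposition of $\underline T_{ij}^{[n]}$ into its Ricci-part (\ref{riccilead}) and the explicit terms in (\ref{tablead}), and contracting with each $u_a^i u_a^j$ (recalling that each $u_a^i$ is pointwise orthogonal to $x^i$, so that the purely $A_1$-type and $\hat D_i r^2\hat D_j r^2$-type contributions, which are proportional to $x_i x_j$ modulo $r^2$, drop out or simplify), I would obtain three scalar PDEs in the seven homogeneous polynomials. The key structural point is that, after contraction, the leading (second-derivative) part of the Ricci operator acting on $r^2 f_k^{[n-2]} B_k$ is an injective linear map on the relevant finite-dimensional spaces: concretely, $u_a^i u_a^j \hat{\underline R}_{ij}^{[n]}$ restricted to the $B_k$-directions recovers $f_a^{[n-2]}$ (or a nonzero multiple of it together with lower-listed $f$'s in a triangular fashion), and the $f_0^{[n]}$-contribution is controlled by the already-derived trace relation. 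I would then argue that the combined system is triangular: $\sigma^{[n]}=0$ pins down $f_0^{[n]}$ in terms of $f_4^{[n-2]},f_5^{[n-2]},f_6^{[n-2]}$ (and $f_1,f_2,f_3$), and then the three equations $\tau_{aa}^{[n]}=0$ successively force $f_1^{[n-2]}=f_2^{[n-2]}=f_3^{[n-2]}=0$ and $f_4^{[n-2]}=f_5^{[n-2]}=f_6^{[n-2]}=0$, whence $f_0^{[n]}=0$ as well.

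The main obstacle, and the step requiring genuine computation rather than structure, is establishing this triangularity: one must verify that the $6+6=12$ scalar quantities $u_a^i u_a^j(B_k)_{ij}$ and the corresponding second-order pieces $u_a^i u_a^j\partial^2(r^2 f_k^{[n-2]} B_k)_{ij}$ are arranged so that the $4\times 7$ block of the linear map, after using the trace relation to eliminate $f_0^{[n]}$, has trivial kernel. Since the three $u_a^i$ satisfy the single linear relation $z u_1^i+x u_2^i+y u_3^i=0$, one cannot expect the three contractions $\tau_{aa}^{[n]}$ to be independent in an obvious way; instead the $3n+6$ relations among $(\tau_{11}^{[n]},\tau_{22}^{[n]},\tau_{33}^{[n]})$ established just before the lemma must be used to see that the effective number of independent equations is exactly $2n^2+3n-2$, matching the dimension of the domain. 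I would therefore carry out the kernel computation by restricting to the coordinate planes $x=0$, $y=0$, $z=0$ (as in the derivation of (\ref{reddim2})), where each $u_a^i$ simplifies and the $B_k$ decouple, deduce that each $f_k^{[n-2]}$ vanishes on two of the three planes with sufficient multiplicity to force it to vanish identically, and close the argument. I expect the bookkeeping of which $B_k$ survives on which plane to be the delicate part, but the orthogonality $u_a^i x_i=0$ and the explicit sparse form of the matrices $B_1,\dots,B_6$ in Lemma \ref{nygdeco} make this tractable.
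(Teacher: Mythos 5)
There is a genuine gap in your plan: the system is not triangular in the way you assert, and the actual crux of the lemma is exactly the step you dismiss as bookkeeping. First, the equation $\underline S^{[n+2]}=0$ does not "pin down $f_0^{[n]}$ in terms of $f_4^{[n-2]},f_5^{[n-2]},f_6^{[n-2]}$"; it goes the other way. Writing $\gamma_{ij}=\tilde\gamma_{ij}+\tfrac12\gamma\, r^2 d\Omega_{ij}$ with $\tilde\gamma_{ij}$ trace free, the $S$-equation determines the trace part $\gamma^{[n]}$ as an explicit second-order angular operator applied to $f_0^{[n]}$, namely $\gamma^{[n]}=\tfrac18[(n^2+3n+18)f_0^{[n]}+\Delta_S f_0^{[n]}]$, and leaves $f_0^{[n]}$ completely free at that stage. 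Second, the three contractions $u_k^iu_k^j\underline T_{ij}^{[n]}=0$ do not successively kill the $f_k^{[n-2]}$ and only then force $f_0^{[n]}=0$; the order is reversed. Parametrizing the trace-free transverse part $\tilde\gamma_{ij}^{[n]}$ by two angular functions $f_{\tilde\gamma},g_{\tilde\gamma}$ on the sphere, the three equations are strongly coupled PDEs in $f_{\tilde\gamma}$, $g_{\tilde\gamma}$, $\gamma^{[n]}$ and $f_0^{[n]}$. One eliminates $\gamma^{[n]}$, then $\partial^2 g_{\tilde\gamma}/\partial\theta\partial\varphi$, then $g_{\tilde\gamma}$, then $f_{\tilde\gamma}$, all in terms of $f_0^{[n]}$, and is left with a single \emph{fourth-order} linear ODE for the angular part $f_a(\theta,\phi)$ of $f_0^{[n]}=r^n f_a$.

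The decisive idea, absent from your proposal, is how that last equation forces $f_0^{[n]}=0$. Separating variables $f_a=F(\theta)e^{im\phi}$, the fourth-order equation has general solution spanned by $\sin\theta\,P_{n+1}^{m\pm1}(\cos\theta)$ and $\sin\theta\,Q_{n+1}^{m\pm1}(\cos\theta)$, i.e.\ associated Legendre functions of degree $n+1$. But $f_a$ must be a combination of spherical harmonics $Y_l^m$ with $l\le n$, since $r^nf_a$ is a homogeneous polynomial of degree $n$. No nonzero element of the solution space meets this degree constraint, so $F\equiv0$, hence $f_0^{[n]}=0$, and only then do $\gamma^{[n]}$, $f_{\tilde\gamma}$ and $g_{\tilde\gamma}$ vanish in turn. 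Your coordinate-plane restriction idea is used in the paper only to count the $3n+6$ linear relations among the $\tau_{kk}^{[n]}$ (i.e.\ to compute the dimension of the range, which is the \emph{other} half of the solvability argument), not to prove injectivity; it will not substitute for the Legendre-function degree argument, and without some equivalent of that argument the kernel computation does not close.
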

\begin{proof} Assume that  $\overline q(\overline p)=0$.
Consider the matrix $\gamma_{ij}$ in the decomposition (\ref{met2}) of $\hat h_{ij}$.
By splitting\footnote{This splitting is possible since $\gamma_{ij}x^j=0$.} $\gamma_{ij}=\tilde \gamma_{ij}+\frac{1}{2}\gamma r^2d\Omega_{ij}$ where $\tilde \gamma_{ij}$
is trace free, $\gamma=[\gamma_{ij}]$, and where $d\Omega_{ij}$ is the metric of the unit sphere  $S^2$, we see that
\begin{equation*} \nonumber
\hat h_{ij}^{[n+2]}=f_0^{[n]} (x,y,z)(A_1)_{ij}+r^2 \tilde 
\gamma_{ij}^{[n]}+r^2 \frac{1}{2} \gamma^{[n]} r^2d\Omega_{ij}
\end{equation*}
and in particular that
\begin{equation*} \nonumber
[\hat h_{ij}^{[n+2]}]=-2 r^2 f_0^{[n]} (x,y,z)+r^2  \gamma^{[n]} 
\end{equation*}
The first 'component' in
$\overline q(\overline p)$ 
(i.e., $S^{[n+2]}$) being zero therefore implies
$$
\gamma^{[n]}=\frac{1}{8}[12 f_0^{[n]}+\Delta_C (r^2 f_0^{[n]})-2 D( f_0^{[n]})].
$$
With the decomposition $\Delta_C=\frac{1}{r^2}\frac{\partial}{\partial r}(r^2 \frac{\partial}{\partial r})+
\frac{1}{r^2}\Delta_S$, where $\Delta_S$ is the angular Laplacian on the unit sphere, this can also be written
\begin{equation}  \label{gamman}
\gamma^{[n]}=\frac{1}{8}[(n^2+3 n+18) f_0^{[n]}+\Delta_S f_0^{[n]}].
\end{equation}
We now express $\tilde \gamma_{ij}^{[n]}$ in terms of spherical coordinates $(r,\theta,\phi)$,
and get
$$
\tilde \gamma_{ij}^{[n]}dx^i dx^j=
\left(
dr \ d\theta \ d\phi
\right)
\left(
\begin{array}{ccc}
0 & 0 & 0\\
0 & r^{n+2} f_{\tilde\gamma}(\theta,\phi) & r^{n+2} g_{\tilde\gamma}(\theta,\phi)\\
0 &	r^{n+2} g_{\tilde\gamma}(\theta,\phi) & -r^{n+2} \sin^2(\theta) f_{\tilde\gamma}(\theta,\phi)
\end{array}
\right)
\left(
\begin{array}{ccc}
dr \\ d\theta \\ d\phi\\
\end{array}
\right)
$$
for some angular functions $f_{\tilde\gamma}(\theta,\phi)$ and $g_{\tilde\gamma}(\theta,\phi)$. Note that $f_{\tilde \gamma}$ and  $g_{\tilde \gamma}$ are not arbitrary 
since $\tilde \gamma_{ij}$ must be analytic when written in the coordinates $(x,y,z)$.
With this representation, $u_k^i u_k^j \underline T_{ij}^{[n]}$ ($k=1,2,3$) are functions of $f_0^{[n]}$, $\gamma^{[n]} $, $f_{\tilde \gamma}$ and  $g_{\tilde \gamma}$.

In particular, $u_1^i u_1^j \underline T_{ij}^{[n]}=0$ gives the equation
\begin{equation} \label{u1u1T}
\begin{array}{r}
4\sin^2\theta \left[2\,  {\frac {\partial ^{2}}{\partial {\theta}^{2}}}f_{\tilde\gamma}+ (n+3) n\, f_{\tilde\gamma}  -\, {\Delta_S f_{\tilde\gamma}}\right]+
16\,\sin  \theta \cos\theta {\frac {\partial }{\partial 
\theta}}f_{\tilde\gamma} \\
+8\,{\frac {\partial ^{2}}{\partial \theta\partial \varphi }}g_{\tilde\gamma}  -2\, \sin^2  \theta
  \left(\Delta_S \gamma^{[n]}+ n ( n+1) \gamma^{[n]} \right)\\
 -  \sin^2 \theta \left( 2\,n\, \Delta_S f_0^{[n]}+{n}^{3}f_0^{[n]} - \left( n+2 \right) {\frac {\partial 
^{2}}{\partial {\theta}^{2}}}f_0^{[n]}
 \right) =0.
\end{array}
\end{equation}
The equations $u_2^i u_2^j \underline T_{ij}^{[n]}=0$ and $u_3^i u_3^j \underline T_{ij}^{[n]}=0$ are slightly
longer, namely
\begin{equation} \label{u2u2T} \hspace*{-10mm}
\begin{array}{r} 
4\,\mathcal{A}\left[ \sin^2\theta 
{\frac {\partial ^{2}f_{\tilde\gamma}}{\partial {\theta}^{2}}} +3\cos\theta \sin\theta{\frac {\partial f_{\tilde\gamma}}{\partial \theta}}-{\frac {\partial ^{2}f_{\tilde\gamma}}{\partial {\phi}^{2}}}\right]\\
  +4\, \sin^2 \theta  ( (n^2+3n)\left[\mathcal{A} -2\sin^2 \phi\right]- 4\sin^2 \phi) f_{\tilde\gamma}\\
 +8\,\mathcal{A} {\frac {\partial ^{2}g_{\tilde\gamma}}{\partial \theta\partial \phi}} -8(n^2+3n+2)\,\sin\theta\sin \phi   \cos \phi   \cos\theta g_{\tilde\gamma}\\
 -2\mathcal{A}\left[
 \sin^2 \theta {\frac {\partial ^{2}\gamma^{[n]}}{\partial {\theta}^{2}}} +\sin\theta \cos\theta  {\frac {\partial \gamma^{[n]}}{\partial \theta}}+{\frac {\partial ^{2}\gamma^{[n]}}{\partial {\phi}^{2}}}
  +n \left( n+1 \right)\sin^2 \theta \gamma^{[n]}\right]\\
 -2(n+2)\sin\theta  \sin \phi   \cos \phi   \cos \theta {\frac {\partial ^{2}f_0^{[n]}}{\partial \theta\partial \phi}}\\
  -\sin^2 \theta  \left((n-2)\mathcal{A} +(n+2)\sin^2 \phi   
 \right) {\frac {\partial ^{2}f_0^{[n]}}{ \partial {\theta}^{2}}}\\
+ \left((n+2)\sin^2 \phi-2n\mathcal{A} \right) {\frac {\partial ^{2}f_0^{[n]}}{\partial {\phi}^{2}}}
 +2(n+2)\sin \phi\cos\phi \cos^2\theta{\frac {\partial f_0^{[n]}}{\partial \phi}}\\
  -\sin\theta  \cos \theta  \left( (n-2)\sin^2\phi+2n\, \cos^2 \phi \cos^2\theta \right) {\frac {
\partial f_0^{[n]}}{\partial \theta}}
- n^3 \sin^2\theta \mathcal{A}f_0^{[n]}=0, \\
\mathcal{A}=1-\cos^2\phi \sin^2\theta
\end{array}
\end{equation}
and
\begin{equation} \label{u3u3T} \hspace*{-10mm}
\begin{array}{r}
4\mathcal{B}\left[\sin^2\theta {\frac {\partial ^{2}f_{\tilde\gamma}}{\partial {\theta}^{2}}} +3\sin\theta\cos\theta{\frac {\partial f_{\tilde\gamma} }{\partial \theta}}- {\frac {\partial ^{2}f_{\tilde\gamma}}{\partial {\phi}^{2}}} \right]\\
 +4\, \sin^2\theta  ( (n^2+3n)\left[\mathcal{B} -2\cos^2 \phi\right]- 4\cos^2 \phi)  f_{\tilde\gamma}\\
8\mathcal{B}{\frac {\partial ^{2}g_{\tilde\gamma}}{\partial \theta\partial \phi}}+8(n^2+3n+2)\,\sin\phi \cos\phi \cos\theta \sin\theta g_{\tilde\gamma}\\
-2\mathcal{B}\left[ \sin^2\theta {\frac {\partial ^{2}\gamma^{[n]}}{\partial {\theta}^{2}}} +\sin \theta \cos\theta{\frac {\partial \gamma^{[n]}}{\partial \theta}} +{\frac {\partial^{2}\gamma^{[n]}}{\partial {\phi}^{2}}} +n \left(n+1 \right) \sin^2\theta \gamma^{[n]}
\right]\\
  +2(n+2)\sin\theta \sin \phi \cos\phi \cos \theta {\frac {\partial ^{2}f_0^{[n]}}{\partial \theta\partial\phi}}\\
   - \sin^2\theta \left(  (n-2)\mathcal{B} +(n+2)\cos^2\phi \right) {\frac {\partial ^{2}f_0^{[n]}}{\partial {\theta}^{2}}}\\
 + \left((n+2)\, \cos^2\phi -2\,n \mathcal{B} \right){\frac{\partial ^{2}f_0^{[n]}}{\partial {\phi}^{2}}}
-2(n+2)\sin \phi \cos \phi \cos^2 \theta{\frac {\partial f_0^{[n]} }{\partial \phi}}\\
 -\sin\theta\cos\theta \left((n-2)\cos^2\phi+2 n\sin^2\phi\cos^2\theta  \right) {\frac {\partial f_0^{[n]}}{\partial \theta}} -n^3 \sin^2\theta  \mathcal{B}f_0^{[n]}=0,\\
 \mathcal{B}=1- \sin^2\phi  \sin^2\theta
 \end{array}
\end{equation}

$\gamma^{[n]}$ can be eliminated using (\ref{gamman}), and after insertion in (\ref{u1u1T}),
one can express ${\frac {\partial ^{2}g_{\tilde\gamma} }{\partial \theta\partial \varphi }}$
in terms of $f_0^{[n]}$ and $f_{\tilde \gamma}$.
With $\gamma^{[n]}$ and ${\frac {\partial ^{2}g_{\tilde\gamma} }{\partial \theta\partial \varphi }}$
inserted in (\ref{u2u2T}), one can 'solve' for $g_{\tilde\gamma} $, i.e., express
also $g_{\tilde\gamma} $ in terms of $f_0^{[n]}$ and $f_{\tilde \gamma}$.
From this,  (\ref{u3u3T}) gives $f_{\tilde\gamma}$, and therefore also
$g_{\tilde\gamma}$ and  $\gamma^{[n]}$ as a function of $f_0^{[n]}$.
Finally, we insert all these quantities in $u_1^i u_1^j \underline T_{ij}^{[n]}$. Using the obtained expression for ${\frac {\partial ^{2}g_{\tilde\gamma} }{\partial \theta\partial \varphi }}$ gives a trivial identity, but by using the expression for $g_{\tilde\gamma}$, differentiated with respect to $\theta$ and $\varphi$, we get a fourth order linear equation
for $f_0^{[n]}$. We now put $f_0^{[n]}=r^n f_a(\theta,\phi)$, which in particular means that
$f_a(\theta,\phi)$ is a linear combination of spherical harmonics $Y_l^m(\theta,\phi)$
with $l \leq n$. After these steps,
the equation for the angular part $f_a(\theta,\phi)$ is found to be
\begin{equation}
\begin{array}{r}
 \sin^2 \theta {\frac {\partial ^{4} f_a}{\partial {\theta}^{4}}}
 +2\,{\frac {\partial ^{4} f_a}{\partial {\theta}^{2}\partial {\varphi }^{2}}}
 +{\frac {{\frac {\partial ^{4} f_a}{\partial {\varphi }^{4}}}}{ \sin^2\theta}}
 +2\,\cos\theta \sin \theta {\frac {
\partial ^{3}f_a}{\partial {\theta}^{3} }} 
 -2\,{\frac {\cos \theta {\frac {\partial ^{3}f_a}
{\partial \theta\partial {\varphi }^{2}}}}{\sin\theta}}\\
- \left( 
(2n^2+6n+5) \cos^2\theta -2n^2-6n-4 \right) {\frac {\partial ^{2}f_a}{
\partial {\theta}^{2}}}\\
 -2\,{\frac { \left( n^2+3n+2) \cos^2\theta
-n^2-3n-4 \right) {\frac {
\partial ^{2}f_a}{\partial {\varphi }^{2}}} }{ \sin^2\theta}}\\
-{\frac {\cos\theta \left( (2n^2+6n+6)
 \cos^2\theta-2n^2-6n-7 \right) {\frac {\partial f_a}{\partial 
\theta}} }{\sin \theta }}\\
 +\sin^2\theta \left( n+
3 \right)  \left( n+2 \right) \left( n+1 \right)n\,f_a=0.
\label{slutdifftvavar}
\end{array}
\end{equation}
Since the coefficients of (\ref{slutdifftvavar}) do not contain $\phi$, we make the ansatz
$f_a(\theta,\phi)=F(\theta)e^{i m\phi}$.  The equation for $F(\theta)$ becomes
\begin{equation}
\begin{array}{r}
\sin^2\theta{\frac {d^{4}}{d{\theta}^{4}}}F \left( \theta \right) 
+2\,\cos\theta\sin\theta {\frac {d^{3}}{d{\theta}^{3}}}F \left( \theta \right)\\
-[2 m^2+
 \left( (2n^2+6n+5) \cos^2\theta-2n^2-6n-4 \right) ]{\frac {d^{2}}{d{\theta}^{2}}}F \left( \theta\right)\\
  +[2m^2
   -\left((2n^2+6n+6) \cos^2\theta -2n^2-6n-7 \right)]\frac{\cos\theta}{\sin\theta}{\frac {d}{d\theta}
}F \left( \theta \right) \\
+[ {\frac {m^4+2m^2 \left( (n^2+3n+2) \cos^2\theta-n^2-3n-4 \right)}{ \sin^2\theta}} \\
+ \sin^2\theta (n+3) (n+2)(n+1)n ]F \left( \theta \right) =0
\label{slutdiffenvar}
\end{array}
\end{equation}
The solution to (\ref{slutdiffenvar}) is
\begin{equation}
\begin{array}{r}
 F \left( \theta \right) =
 {\it C_1}\,\sin\theta {\it P_{n+1}^{m-1}} \left(\cos\theta\right) 
+{\it C_2}\,\sin\theta {\it P_{n+1}^{m+1}} \left(\cos\theta\right) \\
 +{\it C_3}\,\sin  \theta{\it Q_{n+1}^{m-1}} \left(\cos\theta \right) 
 +{\it C_4}\,\sin \theta{\it Q_{n+1}^{m+1}}\left(\cos\theta\right) 
\label{slutdiffensol}
\end{array}
\end{equation}
i.e., a linear combination of associated Legendre functions of both kinds. Now, since $f_a(\theta,\phi)$
consists of spherical harmonics $Y_l^m(\theta,\phi)$
with $l \leq n$, (\ref{slutdiffensol}) implies that the relation $f_a(\theta,\phi)=F(\theta)e^{i m\phi}$
requires that $F$ and hence $f_a$ is identically zero.
From this,  (\ref{gamman}) gives $\gamma^{[n]}=0$. By forming 
$(1+\cos^2\theta) u_1 u_1 \underline T_{ij}^{[n]}-(u_2 u_2 \underline T_{ij}^{[n]}+u_3 u_3 \underline T_{ij}^{[n]})
\sin^2\theta$
it is seen that $f_{\tilde \gamma}=0$,
from which $g_{\tilde \gamma}$ must also be zero.
This means that the mapping $\overline q$ is injective, and therefore that there exists
a unique formal series expansion of $\hat h_{ij}$ such that $S=0, t_{11}=0,
t_{22}=0, t_{33}=0$ (Lemma \ref{f1})
\end{proof}
We again stress that  (\ref{ttuple})  gives the metric components of the
appropriate order explicitly.

\subsection{Convergence of the metric}\label{conv}
In this section we will prove that the series expansion for $\hat h_{ij}$, which was found in the previous section, converges and also solves the full conformal field equations. 
This is the content of the following lemma.
\begin{lemma} \label{f2}
Suppose that $\hat h_{ij}$ is a formal power series for the metric of the form {\rm(\ref{metricform})}, producing the moments of Theorem \ref{thm}.   If
$$
S=0, \quad t_{11}=0, \quad t_{22}=0,\quad  t_{33}=0,
$$
then 
$$
t_{00}=0, \quad t_{0k}=0, k=1,2,3
$$
and the power series is convergent in a neighbourhood of $i^0$.
\end{lemma}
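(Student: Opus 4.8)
\emph{Proof strategy.} The assertion splits into two parts: that the equations $S=0$, $t_{11}=t_{22}=t_{33}=0$ already force $t_{00}=0$ and $t_{0k}=0$, and that the formal series obtained in Lemma~\ref{f1} converges near $i^0$. I would obtain the first part from the Bianchi identities and the second by comparison with the analytic solution supplied by \cite{friedrich}.

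For the first part, note that (\ref{confeq}) expresses $\hat R_{ab}$ and $\hat R$ through $\ln(\sqrt{\lambda}\,\Omega)$ and $\sqrt{\lambda}$, so $T_{ab}$ (the left-hand side of (\ref{rab2a})) differs from the Einstein tensor $\hat R_{ab}-\frac{1}{2}\hat R\,\hat h_{ab}$ only by terms built algebraically and differentially from the already fixed quantities $\hat h_{ab}$, $\kappa$ and $\hat\phi$. The contracted Bianchi identity $\hat D^a(\hat R_{ab}-\frac{1}{2}\hat R\,\hat h_{ab})\equiv 0$ then yields an identity of the schematic form $\hat D^a T_{ab}=L_b(T_{cd})+M_b(S)$, with $L_b$, $M_b$ first-order linear operators whose coefficients are $\hat h$, $\kappa$, $\hat\phi$. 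Imposing $S=0$ and $t_{11}=t_{22}=t_{33}=0$ and contracting this vector identity against $x^b$ and against $u_1^b,u_2^b,u_3^b$, using ${\Gamma^i}_{kl}x^kx^l=0$, $x^j\hat h_{ij}=x^j\eta_{ij}$, the linear relation $zu_1^i+xu_2^i+yu_3^i=0$ and the identities (\ref{prop}), the left-hand sides collapse, modulo lower-order terms, to the Euler operator $D=x^j\partial_j$ applied to $t_{00}$ and to the $t_{0k}$, while the right-hand sides depend only on $t_{00},t_{0k}$ (the $t_{ii}$ and $S$ contributions having been killed). Comparing homogeneous parts of degree $n$ gives, for each $n$, a linear system in which $D$ acts as multiplication by $n$; the accompanying shift constants are never equal to $-n$, so $t_{00}^{[n]}=t_{0k}^{[n]}=0$ for all $n$. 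Hence $T_{ab}=0$ identically and all of (\ref{rab2a}), (\ref{r2b}) hold.

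For convergence, observe first that, the $P^0_\alpha$ being totally symmetric and trace free, $u=\hat\phi$ is harmonic, and its assumed convergence gives Cauchy bounds $|P^0_\alpha|\leq C\,M^{|\alpha|}\alpha!$. Via the multipole/null-data correspondence of \cite{friedrich}, these bounds imply the growth hypothesis under which \cite{friedrich} produces a static, asymptotically flat, \emph{analytic} vacuum spacetime with exactly the moments $P^0,P^0_{a_1},\ldots$. Casting the conformal $3$-metric of that spacetime into the present gauge --- normalizing the frame so that $\hat h_{ij}(i^0)=\eta_{ij}$, passing to normal coordinates, fixing the conformal factor $\Omega=r^2e^{2\kappa}$ by (\ref{chidef}) and writing $\hat h_{ij}$ in the form (\ref{metricform}) through Lemmas~\ref{gdec} and \ref{nygdeco} --- is an analytic change of data, so it yields a convergent power series that is again of the form (\ref{metricform}), carries the same moments by Theorem~\ref{r2thm}, and, being a genuine solution, satisfies $S=0$, $t_{11}=t_{22}=t_{33}=0$. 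By Lemma~\ref{f1} (whose proof, Lemma~\ref{injektiv}, relies on $P^0=\hat\phi(0)\neq0$) such a formal series is unique, so it coincides with the one constructed here, which is therefore convergent in a neighbourhood of $i^0$. This comparison also reproves $t_{00}=t_{0k}=0$, consistently with the first part.

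\emph{Main difficulty.} The substantial point is the convergence step: one has to make the null-data/multipole dictionary of \cite{friedrich} explicit enough to extract Friedrich's growth condition from the mere convergence of $u$, and one has to verify that the gauge normalization --- normal coordinates together with (\ref{met2}) and (\ref{chidef}) --- can be carried out analytically on Friedrich's solution, so that the uniqueness of Lemma~\ref{f1} can be invoked. The Bianchi step is otherwise routine; the only care needed there is to check that the order-by-order transport equations for $t_{00},t_{0k}$ are nonresonant, which follows from the precise shift constants.
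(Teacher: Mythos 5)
Your overall architecture for the convergence half is the same as the paper's: translate the hypothesis on the moments into Friedrich's growth condition on the null data, invoke his Theorem~1.1 to get an analytic solution of the full conformal static equations, pull it back to the present gauge, and use the uniqueness of Lemma~\ref{f1} to identify it with the formal series, which then inherits convergence and the remaining equations $t_{00}=0$, $t_{0k}=0$. However, the step you defer as the ``main difficulty'' --- extracting Friedrich's bound on the null data from the mere convergence of $u$ --- is not a routine dictionary lookup; it is the entire technical content of the paper's proof, and the paper itself stresses (in the introduction) that the null-data/multipole correspondence in \cite{friedrich} is too implicit to yield growth conditions on the moments directly. The paper closes this gap by rewriting the recursion (\ref{orgrec}) in space spinors as $D_1P_p=P_{p+1}+c_pP_{p-1}P_2$ and proving by induction the estimate $|D_nP_p(i^0)|\leq M_0(1+2M_0)^n(n+p-1)!/r_0^{n+p}$, the induction step resting on the combinatorial identity $p(p-1)\sum_{k=0}^n\binom{n}{k}(k+p-2)!\,(n-k+1)!=(n+p)!$; setting $p=2$ then gives exactly the bound (\ref{tensest}) on $C[\nabla_{\mathbf{a_q}}\cdots\nabla_{\mathbf{a_3}}s_{\mathbf{a_2a_1}}](i^0)$ that Friedrich requires. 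Without this (or an equivalent) estimate your proof does not go through. A second, smaller omission: the conformal factor $\Omega_T$ relating $\hat h_{ab}$ to Friedrich's metric involves $\sqrt{\lambda}$ and $\sqrt{\Omega}=re^{\kappa}$, so its formal smoothness is not automatic; the paper verifies it by showing $\Omega_T$ is an even function of $\xi=2re^{\kappa}\hat\phi$. Your phrase ``is an analytic change of data'' glosses over this.

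On the first half, your route differs from the paper's: you propose deriving $t_{00}=t_{0k}=0$ from the contracted Bianchi identity as a propagation-of-constraints argument, whereas the paper obtains these equations as a byproduct of the comparison with Friedrich's exact solution. Your Bianchi sketch is plausible in outline (it is the standard mechanism), but the collapse of $\hat D^aT_{ab}$ to a nonresonant Euler system in $t_{00},t_{0k}$ is asserted rather than verified, and in any case it is redundant once the comparison argument is in place, as you note yourself. The verdict therefore rests on the convergence step, where the essential estimate is missing.
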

This will conclude the proof of Theorem \ref{thm} when the monopole is nonzero.
\begin{proof}
To prove this lemma, we will use a result by Friedrich, namely Theorem 1.1 in \cite{friedrich}, where
 a related property is investigated.
In \cite{friedrich} it is proven that under rather similar settings, there exists a metric
which instead of prescribed multipole moments, produces a set of prescribed {\em null data}, 
where also growth conditions for the null data in order for the series expansion of the metric
to converge are given.
These null data are proven to be in a one-to-one correspondence with the family of multipole
moments with non-zero monopole, but since this correspondence is rather implicit, the actual conditions on the multipole
moments (as required by the conjecture by Geroch) are not clear.

We vill prove that the conditions on the moments $P^0, P^0_{i_1}, P^0_{i_1 i_2}, \cdots$
in Theorem \ref{thm} can be carried over to estimates on the null data as required in
\cite{friedrich}. This will then guarantee a solution to the conformal Einstein's field equations
which according to the work presented here will have the desired multipole moments. 

In order to be able to refer to the work in \cite{friedrich}, we continue to impose the temporary condition
that the monopole $P^0$ is non-zero.

To begin, we compare the different conformal settings used. In \cite{friedrich},
one uses the conformal factor 
$$
\Omega_F=\left(\frac{1-\sqrt\lambda}{m}\right)^2
$$
while this work uses (starting with $(h_G)_{ab}$ in Section \ref{tfe}) the conformal
factor
$$
\hat \Omega=\sqrt \lambda \Omega=\sqrt \lambda r^2 e^{2\kappa}
$$
Therefore, if $(h_F)_{ab}$ denotes the metric used in \cite{friedrich}, we have the
relation
$$
(h_F)_{ab}=\left[ 
\frac{1}{m^2 \Omega } \frac{(1-\sqrt{\lambda})^2}{\sqrt{\lambda}}
\right]^2 \hat h_{ab}, \quad
\lambda=\sqrt{1+4 \Omega \hat \phi^2}+2 \sqrt{\Omega}\hat \phi,
\quad \Omega = r^2 e^{2\kappa}.
$$
Since $\sqrt \Omega$ contains the non-smooth quantity $r$, it is not obvious that the transition
from $\hat h_{ab}$ to $(h_F)_{ab}$ is smooth. However, by putting $\xi= 2 r e^\kappa \hat \phi $,
the conformal factor $\Omega_{T}$ relating $\hat h_{ab}$ and $(h_F)_{ab}$ is seen to be
$$
\Omega_T=\frac{1}{m^2 \Omega } \frac{(1-\sqrt{\lambda})^2}{\sqrt{\lambda}}
=
\frac{4 \hat \phi^2}{m^2} 
 \frac{\left(1-\sqrt{\sqrt{1+\xi^2}+\xi}\right)^2}{\xi^2 \sqrt{\sqrt{1+\xi^2}+\xi}}.
$$
By using $\sqrt{1+\xi^2}-\xi=\frac{1}{\sqrt{1+\xi^2}+\xi}$, it follows that this expression is even in
$\xi$, which means that only even powers of $r$ occurs in $\Omega_T$. Since the limit 
$ \lim_{r \to 0} \Omega_T=1$ causes no problem, $\Omega_T$ is found to be (formally) smooth.
Note that since we know that $\hat \phi$ produces the prescribed moments under
$\hat h_{ab}$, the potential $\hat \phi/\sqrt{\Omega_T}$ will give the correct moments
when using $(h_F)_{ab}$.

Now, suppose that 
$
S=0,  t_{11}=0,  t_{22}=0,  t_{33}=0,
$
and hence, according to Lemma \ref{f1},  that we have a formal metric $\hat h_{ij}$. Going over to 
$(h_F)_{ab}$, and using the trace-free Ricci tensor, this defines a set of abstract null data
($\mathcal D_n$ and $\mathcal D_n^*$ below).
By the arguments in \cite{friedrich},  there exists a formal solution to the full conformal static field equations
connected to these null data, and by going back to $\hat h_{ij}$, this implies that also
$
t_{00}=0, t_{0k}=0, k=1,2,3
$.

Another way of putting this is: if $(h_F)_{ab}$ satisfies the conformal static field equations and
if $(h_F)_{ab}$ and $\hat h_{ab}$ are conformally related, the equations (\ref{confeq})
are automatically satisfied,
 while the equations $S=0, t_{11}=0, t_{22}=0, t_{33}=0$ fully
determines the metric components $\hat h_{ij}$ in terms of the normal coordinates ${x,y,z}$
(taking the special form if $\hat h_{ij}$ into account).

In \cite{friedrich}, one uses  normal coordinates around $i^0$,
and  introduces a frame field $c_{\mathbf a}, \mathbf a=1, 2, 3$ which is parallely 
propagated along the geodesics through $i^0$. Since we write $\hat R_{ij}$ for the Ricci tensor,
we can let $R_{ab}$ stand for the Ricci tensor in \cite{friedrich}.
With $s_{ab}$ denoting the trace free part of the Ricci tensor, we have $s_{ab} =R_{ab}$,
since $R_{ab}$  in \cite{friedrich} already is trace free due to the choice of conformal gauge there.
One then consider the set
$$
\mathcal D_n=\{
s_{a_2 a_1}(i^0), 
C[\nabla_{a_3}s_{a_2 a_1} ](i^0),
C[\nabla_{a_4}\nabla_{a_3}s_{a_2 a_1} ](i^0), \ldots
\}
$$
where $\nabla_a$ is the derivative operator associated with $(h_F)_{ab}$.
To proceed, one then expresses the family of tensors in $\mathcal D_n$
in the introduced frame field, and considers (again, see \cite{friedrich}  for the details)
the related family
$$
\mathcal D_n^*=\{
s_{\mathbf{a_2 a_1}}(i^0), 
C[\nabla_{\mathbf{a_3}}s_{\mathbf{a_2 a_1}} ](i^0),
C[\nabla_{\mathbf{a_4}}\nabla_{\mathbf{a_3}}s_{\mathbf{a_2 a_1}} ](i^0), \ldots
\}
$$
According to Theorem 1.1 in \cite{friedrich}, there exists an analytic solution around $i^0$
to the conformal static vacuum field equations with $m \neq 0 $ for the conformal metric $(h_F)_{ab}$ 
to each set of tensors, given in the orthonormal frame (at $i^0$),
$$
\hat {\mathcal D}_n=\{
\psi_{\mathbf{a_2 a_1}}, 
\psi_{\mathbf{a_3a_2 a_1}},
\psi_{\mathbf{a_4 a_3 a_2 a_1}}, \ldots
\}
$$
which satisfy: i) each tensor is totally symmetric and trace free, and ii) there exist constants $M,r>0$
such that 
\begin{equation}
|\psi_{\mathbf{a_p \ldots a_1 b c}}| \leq \frac{M p!}{r^p}, \quad
a_p, \ldots a_1, b, c =  1,2,3, \quad p=0,1,2, \ldots
\label{tensest}
\end{equation}
In particular, in the introduced frame, one has
\begin{equation}
C[\nabla_{\mathbf{a_q}} \ldots \nabla_{\mathbf{a_3}}s_{\mathbf{a_2 a_1}} ](i^0)
=\psi_{\mathbf{a_q \ldots a_1}}, \quad q=2,3,4, \ldots
\label{slikhet}
\end{equation}
We will show that given the conditions of Theorem \ref{thm}, there will exist such a 
family $\hat {\mathcal D}_n$ with the desired properties and, most importantly, 
 that
the tensors in $\hat {\mathcal D}_n$ produce the prescribed moments via the requirement
(\ref{slikhet}).
We have already concluded that given $\hat h_{ij}$, $\Omega_T$ and $\hat \phi$, this defines
 $(h_F)_{ab}$, and moreover that $\hat \phi/\sqrt{\Omega_T}$ will give the correct moments
when using $(h_F)_{ab}$. We can therefore define 
$\psi_{\mathbf{a_q \ldots a_1}}$ in (\ref{slikhet}) as the corresponding left hand side and prove the necessary
estimates on $C[\nabla_{\mathbf{a_q}} \ldots \nabla_{\mathbf{a_3}}s_{\mathbf{a_2 a_1}} ](i^0)$
directly.

Now, with the notation in Theorem \ref{thm}, the convergence of 
$\sum_{|\alpha|\geq 0}\frac{\mathbf{r}^\alpha}{\alpha!}P^0_\alpha$ near $i^0$ 
implies that for some constants $M,r>0$,
\begin{equation}
|P^0_\alpha| \leq \frac{p! M}{r^p}, \quad p=0,1,2,\ldots
\label{ptensorestimates}
\end{equation}
From the arrangement in \cite{friedrich}, $P_{a_1 a_2}=-\frac{m}{2}s_{a_1 a_2}$, 
and hence the estimates in (\ref{tensest}) will be satisfied if there are
constants $\hat M, \hat r$ so that
\begin{equation*}
|C[\nabla_{\mathbf{a_q}} \ldots \nabla_{\mathbf{a_3}}P_{\mathbf{a_2 a_1}} ](i^0)|
 \leq \frac{|\alpha| ! \hat M}{\hat r^{|\alpha|}}, \quad |\alpha|=p=0,1,2,\ldots
\nonumber 
\end{equation*}

Proceeding as in \cite{friedrich}, the tensors $P_\alpha$ can be expressed
as totally symmetric space spinors fulfilling a certain reality condition [ (3.4) in
\cite{friedrich} ]. In terms of these space spinors, (\ref{ptensorestimates}) reads
\begin{equation}
|P^0_{A_p B_p \ldots A_1 B_1}| \leq \frac{p ! \tilde M}{\tilde r^p},
\quad A_p, B_p, \ldots A_1, B_1 =0,1, 
 \quad  p=0,1,2, \ldots
\label{pspinorestimates}
\end{equation}
for some related constants $\tilde M, \tilde r$,
and we must demonstrate that given the estimates (\ref{pspinorestimates}),
there are $\check M, \check r$ so that
\begin{equation*} \hspace*{-15mm}
| D_{(A_p B_p } \ldots D_{A_3 B_3 }P_{A_2 B_2 A_1 B_1)}(i^0)| \leq \frac{p ! \check M}{\check r^p},
\ A_p, B_p, \ldots A_1, B_1 =0,1, 
 \  p=0,1, \ldots
\nonumber 
\end{equation*}
This will be done using induction, and to get the arguments to work, we first observe that
given (\ref{pspinorestimates}), it is easy to find constants $M_0, r_0$ so that
\begin{equation}\hspace*{-15mm}
|P_{A_p B_p \ldots A_1 B_1}(i^0)| \leq \frac{(p-1) !  M_0}{r_0^p},
\quad A_p, B_p, \ldots A_1, B_1 =0,1, 
 \quad  p=1,2, 3\ldots
\label{pspinorestimates2}
\end{equation}
To simplify the calculations, we introduce the following notation. 
We denote $P_{A_p B_p \ldots A_1 B_1}$  by $P_p$, and let
$D_{(A_p B_p } \ldots D_{A_{k+1} B_{k+1} }P_{A_k B_k \ldots A_2 B_2 A_1 B_1)}$
be denoted by $D_{p-k}P_k$. In terms of this notation, the recursion (\ref{orgrec})
expressed in space spinors becomes ($c_p=\frac{p(2p-1)}{2}$)
\begin{equation}
P_p=D_1 P_{p-1}- c_{p-1}R_2 P_{p-2}
\label{frec1}
\end{equation}
where $R_2$ stands for $R_{A_1 B_1 A_2 B_2}$ and it is understood that
all products involves complete symmetrization. For instance, $R_2 P_{p-2}$ stands
for \newline
$R_{(A_p B_p A_{p-1} B_{p-1} }P_{A_{p-2} B_{p-2} \ldots A_2 B_2 A_1 B_1)}$.
Since $P_{a b}=-\frac{m}{2}R_{ab}$, a simple rescaling
$P_{a_p \ldots a_1} \to -\frac{m}{2} P_{a_p \ldots a_1}$ (keeping the same stem letter),
transforms (\ref{frec1}) into
\begin{equation*}
D_1 P_{p}=P_{p+1}+c_{p}P_{p-1}P_2 
\nonumber 
\end{equation*}
We now claim that 
\begin{equation}
\forall n \geq 0, |D_n P_{p}(i^0)| \leq \frac{M_0(1+2 M_0)^n (n+p-1)!}{r_0^{n+p}}, \quad p=2,3,4, \ldots
\label{recest}
\end{equation}
where (\ref{pspinorestimates2}) is the initial estimate for $n=0$, and where we assume that
(\ref{recest}) is valid up to a certain value of $n$. For $n+1$ we then get
\begin{equation*} \nonumber
D_{n+1}P_p=D_n P_{p+1}+c_p \sum_{k=0}^n 
\binom{n}{k}
D_k P_{p-1}D_{n-k} P_2,
\end{equation*}
and
inserting the estimimates (\ref{recest}) we get, where (\ref{estb}) to (\ref{ests}) are evaluated at
$(i^0)$,
\begin{equation}\hspace*{-5mm}
\begin{array}{cl}
|D_{n+1}P_p| &
 \leq |D_n P_{p+1}|+c_p \sum_{k=0}^n 
\binom{n}{k}
|D_k P_{p-1}| |D_{n-k} P_2| \\
&
 \leq \frac{M_0(1+2 M_0)^n (n+p)!}{r_0^{n+p+1}} 
 +c_p \sum_{k=0}^n 
\binom{n}{k}
\frac{M_0^2 (1+2 M_0)^n (k+p-2)!(n-k+1)!}{r_0^{n+p+1}}.
\label{estb}
\end{array}
\end{equation}
Next, from $c_p \leq 2 p(p-1)$, for $p \geq 2$, we get
\begin{equation}\hspace*{-25mm}
|D_{n+1}P_p| 
  \leq  \frac{M_0(1+2 M_0)^n}{r_0^{n+p+1}}\left[  (n+p)! 
+2 M_0 p(p-1) \sum_{k=0}^n 
\binom{n}{k}
 (k+p-2)!(n-k+1)! \right]
\end{equation}
However, from the identity
$$
p(p-1) \sum_{k=0}^n 
\binom{n}{k}
 (k+p-2)!(n-k+1)! =(n+p)!,
$$
we get
\begin{equation}\hspace*{-20mm}
|D_{n+1}P_p| 
  \leq  \frac{M_0(1+2 M_0)^n}{r_0^{n+p+1}}\left[  (n+p)! 
+2 M_0 (n+p)! \right]=
\frac{M_0(1+2 M_0)^{n+1}(n+p)!}{r_0^{n+p+1}}
\end{equation}
as claimed in (\ref{recest}). Thus the estimates in (\ref{recest}) are valid, and by putting
$p=2$, we find that (at $i^0$)
\begin{equation}
\forall n \geq 0, |D_n P_{2}| \leq \frac{M_0(1+2 M_0)^n (n+1)!}{r_0^{n+2}}
\label{ests}
\end{equation}
Since it is easy to find constants $\check M, \check r$ such that
$\frac{M_0(1+2 M_0)^n (n+1)!}{r_0^{n+2}} \leq \frac{n! \check M}{\check r^n}$
for $n=0, 1, 2, \cdots$, Theorem 1.1 of \cite{friedrich} follows.

Finally, to see that also the series expansion for $\hat h_{ij}$ converges we note that, 
near $i^0$, Theorem 1.1 of \cite{friedrich} shows that the physical 3-metric $g_{ij}$ exists.
This means that $\lambda (h_G)_{ij} =\Omega^{-2}\hat h_{ij}=\frac{e^{-4\kappa}}{r^4}\hat h_{ij}$ exist as functions and in particular (after transvecting with $x^j$) that $\frac{e^{-4\kappa}}{r^4}x^i$ exist. This implies that the series expansions for $\kappa$, and thus for $\Omega$, and therefore also for $\hat h_{ij}$, converges near $i^0$.
\end{proof}

\subsection{The case $P^0=0$} \label{massless}
To complete the proof, we must now relax the condition $m \neq 0$, and this will be possible due to the fact that the conformal factor $\hat \Omega$ is well behaved for $m=0$. Thus, suppose that
we have a sequence of totally symmetric and trace-free tensors
$P^0, P^0_{i_1}, P^0_{i_1 i_2}, \ldots$ as in Theorem \ref{thm}, and where $P^0=0$.
In particular, it is assumed that $u(\mathbf{r})=\sum_{|\alpha|\geq 0}\frac{\mathbf{r}^\alpha}{\alpha!}
P^0_\alpha $ converges in some polydisc 
$U: \{(x,y,z), |x|<d, |y|<d, |z[<d\}$. 
By replacing only the monopole, i.e.,
putting $P^0=m_0 > 0$, the corresponding sequence $m_0, P^0_{i_1}, P^0_{i_1 i_2}, \ldots$
corresponds to the function $\tilde u(\mathbf{r})=m_0+\sum_{|\alpha|> 0}\frac{\mathbf{r}^\alpha}{\alpha!}
P^0_\alpha $, which also converges in $U$. Thus, by the arguments given so far
there exists convergent, in a polydisc $V$ say,  power series 
$$\hat h_{ij}=\sum_{|\alpha| \geq 0} (c_{ij})_\alpha \mathbf{r}^\alpha$$
for the metric components; furthermore we also know that there is a static spacetime having the multipole moments $m_0, P^0_{i_1}, P^0_{i_1 i_2}, \ldots$. Now, from the recursion producing the
metric, it is seen that each coefficient $(c_{ij})_\alpha$ is a polynomial in $m_0$, and hence
the metric components $h_{ij}$ can be regarded as a power series in the four variables
$(m,x,y,z)$:
$$\hat h_{ij}=\sum_{|\beta| \geq 0} (d_{ij})_\beta  m^{\beta_1} x^{\beta_2} y^{\beta_3} z^{\beta_4},$$
where the multi-index $\beta=(\beta_1,\beta_2,\beta_3,\beta_4)$. Since this series converges for
$m=m_0$, $(x,y,z) \in V$, it also converges for $|m|<m_0, (x,y,z)\in V$. We can now choose
$m=0$ and still have convergence of $\hat h_{ij}$, $\kappa$ and $g_{ij}$. In particular, with $m=0$
the multipole moments will be the initially desired sequence
$P^0, P^0_{i_1}, P^0_{i_1 i_2}, \ldots$ where $P^0=0$.

\section{Acknowledgement}
The author wants to thank Prof. Brian Edgar for helpful comments.

\section{Discussion}
In this paper, we have proved that the necessary conditions from \cite{bahe} on the multipole moments
of a static spacetime are also sufficient in order for a static vacuum spacetime having these moment
to exist. In particular, we do not assume that the monopole is nonzero. This proves a long standing conjecture by Geroch, \cite{geroch}.

The conditions given are simple and natural. In essence, each allowed set of multipole moments is
connected to a harmonic function on $\mathbf R^3$, defined in a neighbourhood of $\mathbf 0$.
The proof is constructive in the sense that an explicit metric having prescribed
moments up to a given order can be calculated.

Considering future work, it is natural to see what can be carried over to the general
stationary case. It may be conjectured that the natural generalisation of Theorem \ref{thm}
is valid, with the extra condition that the monopole is real and nonzero. 
It could also be instructive to explicitly calculate the metric belonging
to a static spacetime with arbitrary multipole moments up to a given order.

\section{Appendix}
\begin{proof}[Proof of Lemma \ref{gdec}]
Again we make an ansatz
\begin{equation*} \nonumber 
A=\left(
\begin{array}{ccc}
X(x,y,z) & a(x,y,z) & b(x,y,z)\\
a(x,y,z) & Y(x,y,z) & c(x,y,z)\\
b(x,y,z) & c(x,y,z) & Z(x,y,z)\end{array}
 \right)
\end{equation*}
where $a,b,c,X,Y$ and $Y$ are analytic in the variables indicated. The condition $A_{ij}x^i=0$
is
\begin{equation}\label{Sref}
\begin{array}{ccc}
x X(x,y,z) +y  a(x,y,z) +z b(x,y,z) &=&0\\
x a(x,y,z) +y Y(x,y,z) +z c(x,y,z)&=&0\\
x b(x,y,z) +y  c(x,y,z) +z Z(x,y,z)&=& 0
\end{array}
\end{equation}
Writing
$$
X(x,y,z)=X_2(x,y)+z\ X_3(x,y,z), \quad a(x,y,z)=a_2(x,y)+z\ a_3(x,y,z),
$$
(\ref{Sref}) with $z=0$ shows that $a_2(x,y)=x\ a_4(x,y)$, $X_2(x,y)=-y\ a_4(x,y)$ for some function $a_4$. Next it follows that $ b(x,y,z)=-y\ a_3(x,y,z)-x\ X_3(x,y,z)$. After the decomposition
$Y(x,y,z)=Y_2(x,y)+z\ Y_3(x,y,z)$, (\ref{Sref}) with $z=0$ shows that
$a_4(x,y)=y\ f_1(x,y)$, $Y_2(x,y)=-x^2 f_1(x,y)$ for some function $f_1$. Form this we find that
$c(x,y,z)=-x\ a_3(x,y,z)-y Y_3(x,y,z)$. Next we make the decomposition
\begin{equation*} \nonumber
\begin{array}{c}
a_3(x,y,z)=a_6(x,y)+z\ f_5(x,y,z)\\
X_3(x,y,z)=X_6(x,y)+z\ f_4(x,y,z)\\
Y_3(x,y,z)=Y_6(x,y)+z\ f_6(x,y,z)
\end{array}
\end{equation*}
form which (\ref{Sref}), again with $z=0$, shows that
$X_6(x,y)=2 y\ f_2(x,y)$, $Y_6(x,y)=2 x\ f_3(x,y)$ for some function $f_2$ and $f_3$.
Finally, it follows that $a_6(x,y)=-x\ f_2(x,y)-y\ f_3(x,y)$ and
$Z(x,y,z)=2 x y\ f_5(x,y,z)+x^2 f_4(x,y,z)+y^2 f_6(x,y,z)$.
By collecting terms, the lemma follows.
\end{proof}
\begin{proof}[Proof of Lemma \ref{reddim1}]
Observe that the definition of $u_i^a$ implies that
$u_i^a(r^2)=0,\ i=1,2,3$. We now demonstrate that
$t_{11} =u_1^a u_1^b T_{ab} \equiv 0 \pmod {r^2}$.
From the definition of $\hat R_{ij}$,
\begin{equation*} \nonumber
u_1^i u_1^j R_{i j} =\  
 u_1^i u_1^j  \frac{\partial}{\partial x^k}
{ \Gamma^k}_{i j}-u_1^i u_1^j  \frac{\partial}{\partial x^i}
{ \Gamma^k}_{k j}\\
+
 u_1^i u_1^j { \Gamma^m}_{i j}{ \Gamma^k}_{m k}
-u_1^i u_1^j { \Gamma^m}_{k j}{ \Gamma^k}_{m i}
\end{equation*}
where 
${ \Gamma^m}_{i j}= \frac{1}{2} \hat h^{m \sigma}
\{\frac{\partial \hat h_{i \sigma}}{\partial x^j} +\frac{\partial \hat h_{j \sigma}}{\partial x^i}-
\frac{\partial \hat h_{j i}}{\partial x^\sigma}     \}
$ and
${\Gamma^m}_{i m}=\frac{1}{2} \partial_i \ln|\hat h|$. 
Next, if $B$ denotes any symmetric $3 \times 3$ matrix, we can write
\begin{equation}
\hat h_{ij}= \eta_{ij}+p_0 x_i x_j +r^2 B_{ij}
\label{hdecigen}
\end{equation}
from which it follows that
$$
\hat h^{ij}= \eta^{ij}-p_0 x^i x^j +r^2 B^{ij}.
$$
Therefore, $\hat h^{m \sigma}{ \Gamma^k}_{m k} \equiv (\eta^{m\sigma}-p_0 x^m x^\sigma)\frac{1}{2} \partial_m \ln|\hat h|
\equiv \eta^{m\sigma}\frac{1}{2} \partial_m \ln|\hat h|  \pmod {r^2}$, where we have used that 
$D(\ln |h|) \equiv 0  \pmod {r^2}$.
Thus, $u_1^i u_1^j { \Gamma^m}_{i j}{ \Gamma^k}_{m k} \equiv
\frac{1}{4} u_1^i u_1^j  
\{\frac{\partial \hat h_{i \sigma}}{\partial x^j} +\frac{\partial \hat h_{j \sigma}}{\partial x^i}-
\frac{\partial \hat h_{j i}}{\partial x^\sigma} \} \eta^{m\sigma}\partial_m \ln|\hat h|  \pmod {r^2}$.
Using that $\hat h_{ij}$ can be written as in (\ref{hdecigen}) and  that $u_1^i \partial_i x^j=u_1^j$ it follows that
$u_1^i u_1^j { \Gamma^m}_{i j}{ \Gamma^k}_{m k} \equiv 0  \pmod {r^2}$. A similar but longer
calculation shows that also $u_1^i u_1^j { \Gamma^m}_{k j}{ \Gamma^k}_{m i} \equiv 0 \pmod {r^2}$.

By expanding (\ref{rab2a}), using the above result, (\ref{chidef}) and the relations (\ref{prop}), it amounts to proving
that (using the form (\ref{met2}) for the metric)
\begin{equation*} \nonumber
\begin{array}{l}
 u_1^i u_1^j 
  \left[\partial_k
{ \Gamma^k}_{i j}-\partial_i
{ \Gamma^k}_{k j}
+D(\gamma_{ij})+2\gamma_{ij}+2\hat D_i \hat D_j \kappa \right. \\
 + 
 \eta_{ij}(2\Delta \kappa-D(f)-2 f - 8D(\chi) - 16\chi + D(\ln|\hat h|)/r^2) \left. \right]
\equiv 0  \pmod {r^2}
\end{array}
\end{equation*}
Next, by replacing $\kappa=r^2\chi$,  it follows that $u_1^i u_1^j(2\hat D_i \hat D_j \kappa+\eta_{ij}(2\Delta \kappa - 8D(\chi) - 16\chi))
\equiv 0  \pmod {r^2}$ which leaves us with
\begin{equation} \hspace*{-10mm}
 u_1^i u_1^j 
  \left[\partial_k
{ \Gamma^k}_{i j}-\partial_i
{ \Gamma^k}_{k j}
+D(\gamma_{ij})+2\gamma_{ij}+
 \eta_{ij}(-D(f)-2 f  + D(\ln|\hat h|)/r^2) \right]
\label{rsimp0}
\end{equation}
With $\ln |\hat h|=r^2 \sigma$, 
we get $u_1^i u_1^j \partial_i { \Gamma^k}_{k j}=u_1^i u_1^j \frac{1}{2}\partial_i \partial_j (r^2 \sigma)
\equiv \sigma u_1^i u_1^j \eta_{ij}  \pmod {r^2}$.
Next, 
$$
2 u_1^i u_1^j\partial_k { \Gamma^k}_{i j}=
u_1^i u_1^j\partial_k [\hat h^{k \sigma}
\{\frac{\partial \hat h_{i \sigma}}{\partial x^j} +\frac{\partial \hat h_{j \sigma}}{\partial x^i}-
\frac{\partial \hat h_{j i}}{\partial x^\sigma} \}]
$$
and by using Leibniz's rule, one finds that 
\begin{equation} \hspace*{-10mm}
u_1^i u_1^j\partial_k (\hat h^{k \sigma})
\{\frac{\partial \hat h_{i \sigma}}{\partial x^j} +\frac{\partial \hat h_{j \sigma}}{\partial x^i}-
\frac{\partial \hat h_{j i}}{\partial x^\sigma} \} \equiv
-u_1^i u_1^j \partial_k (\hat h^{k \sigma}) \partial_\sigma (\hat h_{ij})\  \pmod {r^2}
\label{r2simp}
\end{equation}
and
\begin{equation}\hspace*{-10mm}
u_1^i u_1^j \hat h^{k \sigma}\partial_k 
\{\frac{\partial \hat h_{i \sigma}}{\partial x^j} +\frac{\partial \hat h_{j \sigma}}{\partial x^i}-
\frac{\partial \hat h_{j i}}{\partial x^\sigma} \} \equiv
u_1^i u_1^j [2\eta^{k\sigma} \partial_\sigma \partial_{(j} \hat h_{i)k}-\Delta_C \hat h_{ij}]  \pmod {r^2}
\label{r2simp2}
\end{equation}
Some further manipulations reveals that in (\ref{r2simp}),
$u_1^i u_1^j \partial_k (\hat h^{k \sigma}) \partial_\sigma (\hat h_{ij})\equiv 0 \pmod {r^2}$
while from (\ref{r2simp2}) we find
$u_1^i u_1^j [2\eta^{k\sigma} \partial_\sigma \partial_{(j} \hat h_{i)k}-\Delta_C \hat h_{ij}]
\equiv u_1^i u_1^j [8 f \eta_{ij}+6 D(f)\eta_{ij}-6\gamma_{ij}-4 D(\gamma_{ij})]  \pmod {r^2}$
Inserted in (\ref{rsimp0}), we get, modulo $r^2$,
\begin{equation}\hspace*{-10mm}
 u_1^i u_1^j 
  \left[2 f \eta_{ij}+2 D(f)\eta_{ij}-\gamma_{ij}-D(\gamma_{ij})-
\frac{1}{2}\partial_i \partial_j (\ln |\hat h|)+
 \eta_{ij}(D(\ln|\hat h|)/r^2) \right]
\label{nastsista}
\end{equation}
where also $u_1^i u_1^j \partial_i \partial_j (\ln |\hat h|) 
\equiv -z\partial_z (\ln |\hat h|)  \pmod {r^2}$. To finish the proof,
some linear algebra shows that $\ln|\hat h| =-2 f r^2+[\gamma_{ij}] r^2  \pmod {r^4}$.
Finally an insertion in (\ref{nastsista}) gives, modulo $r^2$,
\begin{equation}
- u_1^i u_1^j(\gamma_{ij}+D(\gamma_{ij}))-z^2([\gamma_{ij}]+D([\gamma_{ij}])).
\label{sista}
\end{equation}
Now, for each degree in the series expansion of $\gamma_{ij}$, $D(\cdot)$ acts as multiplication
operator, and therefore equation $(\ref{sista}) \equiv 0  \pmod {r^2}$ is equivalent to
$ u_1^i u_1^j \gamma_{ij}+ z^2 [\gamma_{ij}] \equiv 0 \pmod {r^2}$. However, $\gamma_{ij}$
is composed as the sum of the matrices $B_1, B_2, \ldots, B_6$ in Lemma \ref{nygdeco},
and it is  easily checked that $ u_1^i u_1^j (B_k)_{ij}+ z^2 [(B_k)_{ij}] \equiv 0 \pmod {r^2}$
for $k=1,2, \ldots 6$, and therefore Lemma \ref{reddim1} follows.
\end{proof}
\newpage

\end{document}